\def\BibTeX{{\rm B\kern-.05em{\sc i\kern-.025em b}\kern-.08em
    T\kern-.1667em\lower.7ex\hbox{E}\kern-.125emX}}
\newtheorem{definition}{Definition}
\newtheorem{example}{Example}
\newtheorem{observation}{Observation}[section]
\newenvironment{proof}{ {\it Proof:} }{\hfill $\square$\par}  
\newtheorem{corollary}{Corollary }[section]
\newtheorem{lemma}{Lemma}[section]
\newtheorem{property}{Property}[section]
\newcommand{\kw}[1]{{\ensuremath {\mathsf{#1}}}\xspace}
\newcommand{\stitle}[1]{\vspace{1ex} \noindent{\bf #1}}
\long\def\comment#1{}
\def\done{\hspace*{\fill}$\square$\par}
\newcommand{\mdcoreall}{$(l,\delta)$-maximal dense core\xspace}
\newcommand{\mdcorealls}{$(l,\delta)$-maximal dense cores\xspace}
\newcommand{\mdcore}{$(l,\delta)$-\kw{MDC}}
\newcommand{\mdcores}{$(l,\delta)$-\kw{MDCs}}
\newcommand{\mdc}{\kw{MDC}}
\newcommand{\densenode}{$(l,\delta)$-dense node\xspace}
\newcommand{\densenodes}{$(l,\delta)$-dense nodes\xspace}
\newcommand{\msd}{$\mathcal{MSD}$\xspace}
\newcommand{\lmsdnospace}{$\kw{MSD}$}
\newcommand{\lmsdall}{maximum $l$-segment density\xspace}
\newcommand{\mts}{\kw{MTS}}
\newcommand{\mtsnospace}{\kw{MTS}}
\newcommand{\mtsinospace}{$\kw{MTS}_{2l}$}
\newcommand{\mtsistar}{\kw{MTS}_{2l}^*}
\newcommand{\lsdnospace}{\kw{SD}}
\newcommand{\ch}{$\mathcal{CH}$\xspace}
\newcommand{\chnospace}{\mathcal{CH}}
\newcommand{\chess}{\kw{Chess}}
\newcommand{\lkml}{\kw{Lkml}}
\newcommand{\enron}{\kw{Enron}}
\newcommand{\dblp}{\kw{DBLP}}
\newcommand{\youtube}{\kw{YTB}}
\newcommand{\flickr}{\kw{FLK}}
\newcommand{\mathoverflow}{\kw{MO}}
\newcommand{\askubuntu}{\kw{AU}}
\newcommand{\wikitalk}{\kw{WT}}
\newcommand{\as}{{\kw{AS}}\xspace}
\newcommand{\ad}{{\kw{AD}}\xspace}
\newcommand{\skyline}{{\kw{POMDC}}\xspace}
\newcommand{\skylines}{{\kw{POMDCs}}\xspace}
\newcommand{\skylineb}{{\kw{POMDC\text{-}B}}\xspace}
\newcommand{\kcore}{{\kw{KCORE}}\xspace}
\newcommand{\maxdensesub}{{\kw{DENSEST}}\xspace}
\newcommand{\mdcb}{{\kw{MDC\text{-}B}}\xspace}
\newcommand{\mdcplus}{\kw{MDC}\text{+}\xspace}
\begin{document}

\title{Mining Bursting Communities in Temporal Graphs}

\author{{Hongchao Qin$\scriptsize^{\dag}$, Rong-Hua Li$\scriptsize^{\ddag}$, Guoren Wang$\scriptsize^{\ddag}$, Lu Qin$\scriptsize^{\#}$, Ye Yuan$\scriptsize^{\ddag}$, Zhiwei Zhang$\scriptsize^{\ddag}$}
	\vspace{1.6mm}\\
	\fontsize{9}{9}\selectfont\itshape
	$\scriptsize^{\dag}$Northeastern University, China; $\scriptsize^{\ddag}$Beijing Institute of Technology, China; $\scriptsize^{\#}$University of Technology Sydney, Australia \\
	\fontsize{7}{8}\selectfont\ttfamily\upshape
	qhc.neu@gmail.com; \{rhli, wanggr\}@bit.edu.cn; Lu.Qin@uts.edu.au; yuanye@mail.neu.edu.cn; cszwzhang@comp.hkbu.edu.hk
}
\maketitle

\begin{abstract}
Temporal graphs are ubiquitous. Mining communities that are bursting in a period of time is essential to seek emergency events in temporal graphs. Unfortunately, most previous studies for community mining in temporal networks ignore the bursting patterns of communities. In this paper, we are the first to study a problem of seeking bursting communities in a temporal graph. We propose a novel model, called \mdcoreall, to represent a bursting community in a temporal graph. Specifically, an \mdcoreall is a temporal subgraph in which each node has average degree no less than $\delta$ in a time segment with length no less than $l$. To compute the \mdcoreall, we first develop a novel dynamic programming algorithm which can calculate the segment density efficiently. Then, we propose an improved algorithm with several novel pruning techniques to further improve the efficiency. In addition, we also develop an efficient algorithm to enumerate all \mdcorealls that are not dominated by the others in terms of the parameters $l$ and $\delta$. The results of extensive experiments on 9 real-life datasets demonstrate the effectiveness, efficiency and scalability of our algorithms.
\end{abstract}

\section{Introduction} \label{sec:introduction}
Real-world networks such as social networks, biological networks, and communication networks are highly dynamic in nature. These networks can be modeled as graphs, and the edges in these graphs often evolve over time. In these graphs, each edge can be represented as a triple $(u,v,t)$, where $u,v$ are two end nodes of the edge and $t$ denotes the interaction time between $u$ and $v$. The graphs that involve temporal information are typically termed as temporal graphs~\cite{nature05,12temporalnetworksurvey}.

The interaction patterns in a temporal graph are often known to be bursty, e.g., the human communication events occur in a short time~\cite{nature05,12temporalnetworksurvey}. Here, a bursty pattern denotes a number of events occurring in a short time. In this paper, we study a particular bursty pattern on temporal networks, called bursting community, which denotes a dense subgraph pattern that occurs in a short time. In other words, we aim to identify densely-connected subgraphs from a temporal graph that emerges in a short time. Mining bursting communities from a temporal network could be useful for many practical applications, two of which are listed as follows.

\stitle{Activity discovery.} There are evidences that the timing of many human activities, ranging from communication to entertainment and work patterns, follow non-Poisson statistics, characterized by bursts of rapidly occurring events separated by long periods of inactivity~\cite{nature05}. For example, the talking points in temporal social networks such as Twitter, Facebook and Weibo are changing over time. By mining the bursting communities in such temporal social networks, we are able to identify a group of users that densely interact with each other in a short time. The common topics discussed among the users in a bursting community may represent an emerging activity that recently spreads over the networks. Therefore, identifying bursting communities may be useful for finding such emerging activities in a temporal network.

\stitle{Emergency event detection.} In communication networks (e.g., phone-call networks),  the users' communication behaviors may also exhibit bursty patterns. Identify bursting communities in a communication network may be useful for detecting emergency events. For instance, consider a scenario when an earthquake occurs in a country~\cite{journal2016MyShake}. Individuals in that country may contact their relatives and friends in a short time. These communication behaviors result in that many densely-connected subgraphs may be formed in a short time, which are corresponding to bursting communities. Therefore, by identifying bursting communities in a communication network  (e.g., a phone-call network) could be useful for detecting the emergency events (e.g., earthquake).

In the literature, there exist a few studies that are proposed to mine communities in temporal graphs.  For example, Wu et al. \cite{15bigdatatemporalcore} proposed a temporal $k$-core model to find cohesive subgraphs in a temporal graph. Ma et al.\cite{17icdedensegraphtemporal} devised a dense subgraph mining algorithm to identify densest subgraphs in a weighted temporal graph.
Rozenshtein et al.\cite{18icdmSegmentation} studied a problem of mining dense subgraphs at different time in a temporal graph. Li et al.\ \cite{18persistent} proposed an algorithm to find communities on temporal graphs that are persistent over time. Qin et al.\ \cite{19ICDEperiodicclique} studied a problem of finding periodic community in temporal networks. All of these studies do not consider the bursting patterns of the community, thus their techniques cannot be applied to solve our problem. 
To the best of our knowledge, we are the first to study the bursting community mining problem, i.e. the problem of finding the highly connected temporal subgraph in which each node is \textit{bursting out} in a short time. 

\stitle{Contributions.} In this paper, we formulate and provide efficient solutions to find bursting communities in a temporal graph. In particular, we make the following main contributions.

\vspace*{0.1cm}
\noindent \underline{\kw{Novel} \kw{model}.} We propose a novel concept, called \mdcoreall, to characterize the bursting community in temporal graphs. Each node in \mdcoreall has average degree no less than $\delta$ in a time segment with length no less than $l$. We also define a new concept called pareto-optimal \mdcoreall, which denotes the set of \mdcorealls that are not dominated by the other \mdcorealls in terms of the parameters $l$ and $\delta$. The pareto-optimal \mdcorealls can provide a good summary of all the bursting communities in a temporal graph over the entire parameter space.

\vspace*{0.1cm}
\noindent \underline{\kw{New} \kw{algorithms}.}
To find an \mdcoreall, the main technical challenge is to check whether a node $u$ has average degree no less than $\delta$ in a time segment with length no less than $l$. We show that the naive algorithm to solve this issue requires $O(|\mathcal{T}|^2)$ time, where $|\mathcal{T}|$ is the number of timestamps in the temporal network. To improve the efficiency, we first propose a dynamic programming algorithm which takes $O(|\mathcal{T}|)$ to solve this issue. Then, we develop a more efficient algorithm based on several in-depth observations of our problem which can achieve a near constant time complexity. In addition, we also propose an efficient algorithm to find the pareto-optimal \mdcorealls.

%

\vspace*{0.1cm}
\noindent \underline{\kw{Extensive} \kw{experimental} \kw{results}.}
We conduct comprehensive experiments using 9 real-life temporal graphs to evaluate the proposed algorithm. The results indicate that our algorithms significantly outperform the baselines in terms of the community quality. We also perform a case study on the \enron dataset. The results demonstrate that our approach can identify many meaningful and interesting bursting communities that cannot be found by the other methods. In addition, we also evaluate the efficiency of the proposed algorithms, and the results demonstrate the high efficiency of our algorithms. For example, on a large-scale temporal graph with more than 1M nodes and 10M edges, our algorithm can find a bursting community in 26.95 seconds. For reproducibility purpose, the source code of this paper is released at \url{https://github.com/VeryLargeGraph/MDC}. 

\stitle{Organization.} Section II introduces the model and formulates our problem. The algorithms to efficiently mining bursting communities are proposed in section III and IV. Experimental studies are presented in Section V, and the related work is discussed in Section VI. Section VII draws the conclusion of this paper.

\section{Preliminaries}
Let $\mathcal{G} = (\mathcal{V},\mathcal{E})$ be an undirected temporal graph, where $\mathcal{V}$ and $\mathcal{E}$ denote the set of nodes and the set of temporal edges respectively. Let $n = |\mathcal{V}|$ and $m = |\mathcal{E}|$ be the number of nodes and temporal edges respectively. Each temporal edge $e \in \mathcal{E}$ is a triplet $(u, v, t)$, where $u, v$ are nodes in $\mathcal{V}$, and $t$ is the interaction time between $u$ and $v$. Let ${\cal T}=\{t|(u, v, t) \in {\cal E}\}$ be the set of all timestamps. We assume without loss of generality that all the timestamps are sorted in a chronological order and they are joined as an arithmetic time sequence, i.e., $t_1 < t_2 < \cdots < t_{|{\cal T}|}$ and $(t_{i}-t_{i-1})$ is a constant of time interval for each integer $i$. In the rest of this paper, we use timestamps $\{0,1,2..,|\mathcal{T}|\}$ to represent $\{t_0,t_1,t_2..t_{|\mathcal{T}|}\}$. We assume that each timestamp is an integer, because the \textit{UNIX} timestamps are integers in practice.

For a temporal graph $\mathcal G$,  the \emph{de-temporal graph} of $\cal G$ denoted by $G=(V, E)$ is a graph that ignores all the timestamps associated with the temporal edges. More formally, for the de-temporal graph $G$ of $\cal G$, we have $V={\mathcal V}$ and $E=\{(u, v)| (u, v, t) \in {\cal E}\}$. Let $N_u(G)=\{v|(u, v) \in E\}$ be the set of neighbor nodes of $u$, and $deg_G[u]=|N_u(G)|$ be the degree of $u$ in $G$. For a given set of nodes $S \subseteq V$, a subgraph $G_S=(V_S, E_S)$ is referred to as an induced subgraph of $G$ from $S$ if $V_S = S$ and $E_S=\{(u, v)|u,v\in V_S, (u, v)\in E\}$.

Given a temporal graph $\cal G$, we can extract a series of \emph{snapshots} based on the timestamps. For each  $i \in {\cal T}$, we can obtain a snapshot $G_i=(V_i, E_i)$ where $V_i=\{u|(u, v, i) \in {\cal E}\}$ and $E_i=\{(u, v)|(u, v, i) \in {\cal E}\}$.  Fig.\ref{fig:toy-example} (a) illustrates a temporal graph $\cal G$ with 42 temporal edges and $\mathcal{T}=[1:6]$. Figs.\ref{fig:toy-example} (b) and (c) illustrates the de-temporal graph $G$ and all the six snapshots of $\cal G$ respectively.

\begin{figure}[t!] \vspace*{-0.5cm}
	\centering
	\subfigure[{\scriptsize Temporal edges in $\mathcal{G}$}]{
		\includegraphics[height=2.4cm]{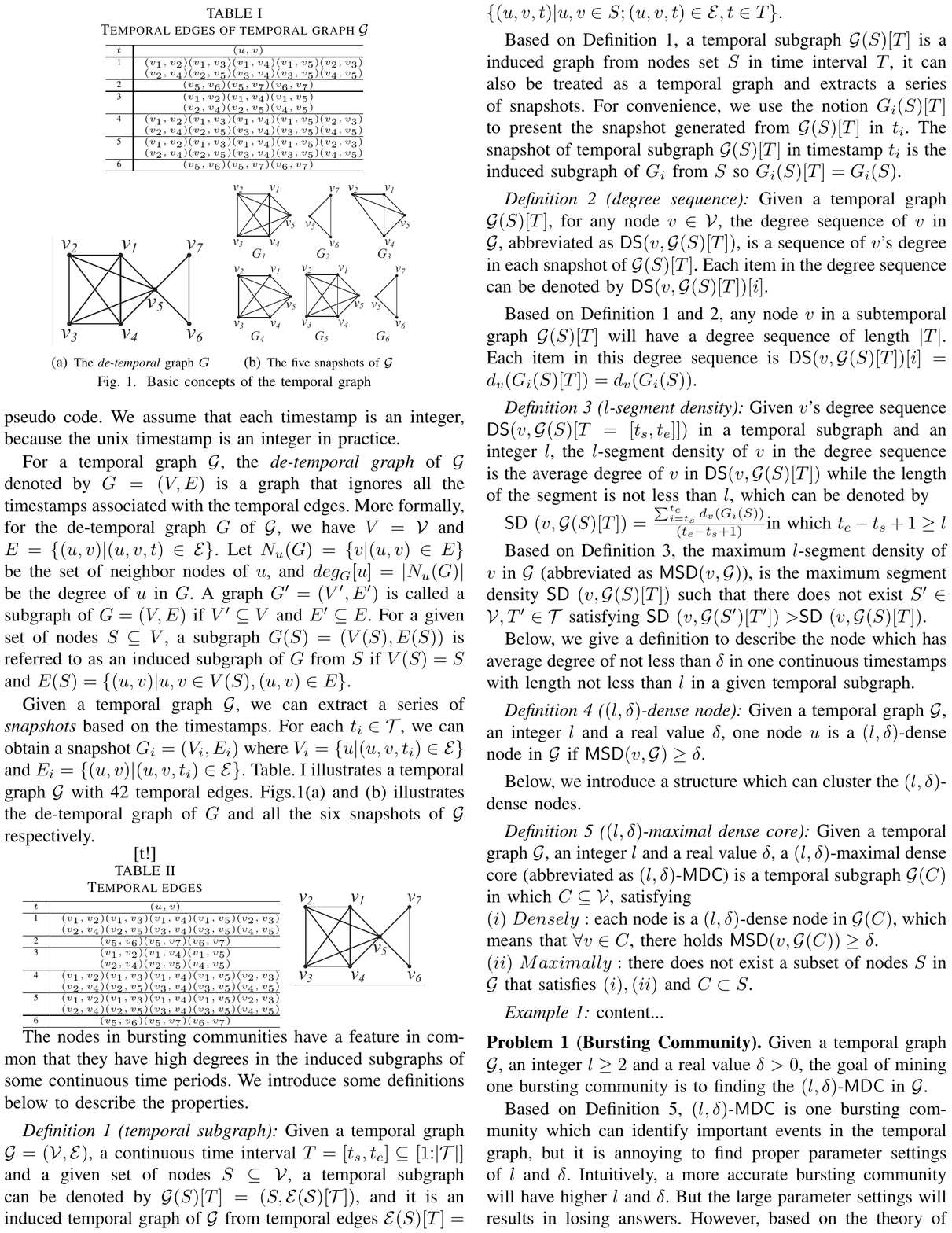}
	}
	\subfigure[{\scriptsize The de-temporal graph $G$}]{
		\includegraphics[height=2.1cm]{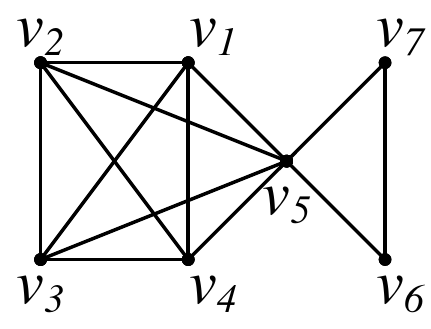}
	}
	\subfigure[{\scriptsize The six snapshots of $\mathcal{G}$}]{
		\includegraphics[height=2.04cm]{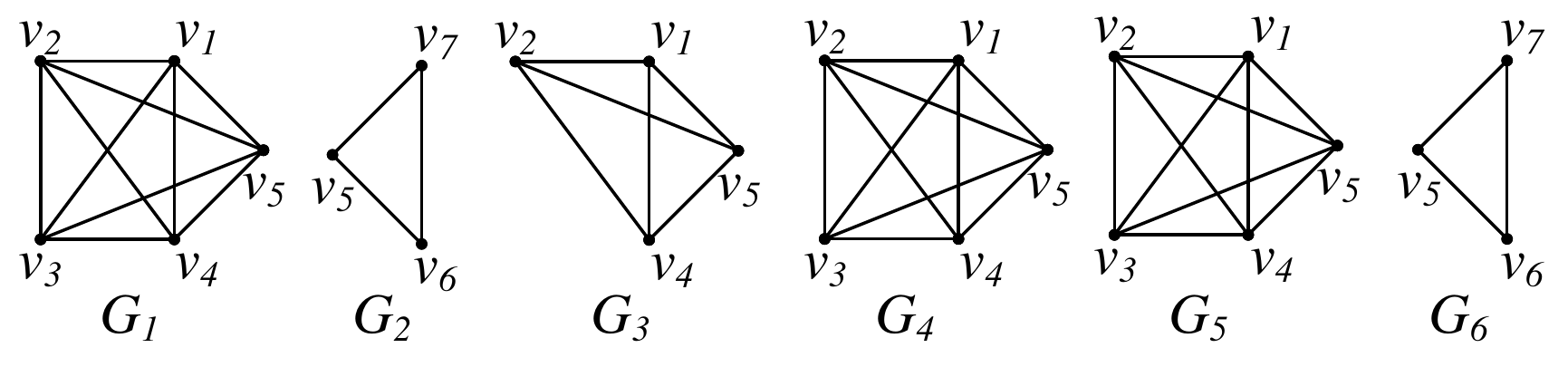}
	}
	\vspace*{-0.4cm}
	\caption{Basic concepts of the temporal graph}
	\label{fig:toy-example} \vspace*{-0.5cm}
\end{figure}

The nodes in bursting communities have a feature in common that they have high degrees in the induced subgraphs of some continuous time periods. We introduce some definitions below to describe the properties.

\vspace*{0.1cm}
\begin{definition}[temporal subgraph]\label{def:1subtemporal}
	Given a temporal graph $\mathcal{G} = (\mathcal{V},\mathcal{E})$, a continuous time interval $T = [t_s,t_e]\subseteq [1\text{:}|{\mathcal T}|]$ and a given set of nodes $S \subseteq \mathcal{V}$, a temporal subgraph can be denoted by $\mathcal{G}_S(T) = (S,\mathcal{E_S(T)})$, and it is an induced temporal graph of $\mathcal{G}$ from temporal edges $\mathcal{E}_S(T) = \{(u,v,t) | u,v \in S, t\in T, (u,v,t)\in \mathcal{E}\}$.
\end{definition}
\vspace*{0.1cm}

 Based on Definition~\ref{def:1subtemporal}, a temporal subgraph $\mathcal{G}_S(T)$ is an induced graph from nodes set $S$ in time interval $T$, it can extract a series of snapshots. 
 The snapshot of temporal subgraph in time $i$ is the induced subgraph of $V_i\cap S$, thus it can be denoted by $G_{V_i\cap S}$.
 For each node $u \in S$, $deg_{G_{V_i\cap S}}[u] = |N_u(G_{V_i\cap S})| = |N_u(G_i) \cap S|$.

\vspace*{0.1cm}
\begin{definition}[degree sequence]\label{def:2degseq}
	Given a temporal graph $\mathcal{G}_S(T)$, for node $u \in S$, the degree sequence of $u$ in $\mathcal{G}_S(T)$, abbreviated as $\kw{DS}(u, \mathcal{G}_S(T))$, is a sequence of $u$'s degree in each snapshot of $\mathcal{G}_S(T)$. Each item in the degree sequence can be denoted by $\kw{DS}(u, \mathcal{G}_S(T))[i] =|N_u(G_i) \cap S|$.
\end{definition}
\vspace*{0.1cm}


\vspace*{0.1cm}
\begin{definition}[$l$-segment density]\label{def:3segden}
	Given an integer $l$, a time interval $T=[t_s,t_e]$ and $u$'s degree sequence $\kw{DS}(u, \mathcal{G}_S(T))$, the $l$-segment density of $u$ in this degree sequence is the average degree of $u$ in $\kw{DS}(u, \mathcal{G}_S(T))$ while the length of the segment is no less than $l$, which can be denoted by
	
	$\lsdnospace(u, \mathcal{G}_S(T)) = \frac{\sum_{i=t_s}^{t_e} |N_u(G_i) \cap S| } {t_e-t_s+1}$ satisfying $t_e-t_s+1\ge l$
\end{definition}
\vspace*{0.1cm}


Based on Definition~\ref{def:3segden}, the \lmsdall of $u$ in $\mathcal{G}_S$ (abbreviated as \lmsdnospace$(u,\mathcal{G}_S)$), is the $l$-segment density $\lsdnospace(u, \mathcal{G}_S(T))$ such that there do not exist $S'\in \mathcal{V},T'\in \mathcal{T}$ satisfying $\lsdnospace(u, \mathcal{G}_{S'}(T'))>\lsdnospace(u, \mathcal{G}_S(T))$.


Below, we give a definition to describe the node which has average degree no less than $\delta$ in a time segment with length no less than $l$ in a given temporal subgraph.
\vspace*{0.1cm}
\begin{definition}[\densenode]\label{def:4densenode}
	Given a temporal graph $\mathcal{G}$, an integer $l$ and a real value $\delta$, one node $u$ is an \densenode in $\mathcal{G}$ if \lmsdnospace$(u,\mathcal{G})\ge \delta$.
\end{definition}
\vspace*{0.1cm}

According to Definition~\ref{def:4densenode}, we introduce a structure which can cluster the \densenodes.

\vspace*{0.1cm}
\begin{definition}[\mdcoreall] \label{def:4densecore}
	Given a temporal graph $\mathcal G$, an integer $l\ge 2$ and a real value $\delta>0$, an \mdcoreall (abbreviated as \mdcore) is a temporal subgraph $\mathcal{G}_C$ in which $C\subseteq\mathcal{V}$, satisfying\\
	$(i)$ \textit{Densely:} each node is an \densenode in $\mathcal{G}_C$, which means that $\forall u\in C$, \lmsdnospace$(u,\mathcal{G}_C)\ge \delta$ holds. \\
	$(ii)$ \textit{Maximally:} there does not exist a subset of nodes $S$ in $\mathcal{G}$ that satisfies $(i), (ii)$ and $C\subset S$.
\end{definition}
\vspace*{0.1cm}

Below, we use an example to illustrate the above definitions.

\begin{example}
	Consider the temporal graph in Fig.~\ref{fig:toy-example}. Given $l=3, \delta =3$. As shown in Fig.~\ref{fig:toy-example}(c), we can easily get that $\kw{DS}(v_5,\mathcal{G}) = [4,2,3,4,4,2]$. As $l=3$, then the \lmsdall \lmsdnospace$(v_5,\mathcal{G})= (3+4+4)/3 = 3.66$. Given $S =\{v_1,v_2,v_3,v_4,v_5 \}$, we can get that $\kw{DS}(v_5,\mathcal{G}_S) = [4,0,3,4,4,0]$, \lmsdnospace$(v_5,\mathcal{G}_S)= (3+4+4)/3 = 3.66$. Therefore, $v_5$ is a $(3,3)$-dense node in $\mathcal{G}_S$. Considering $v_3$ in $S$, we can get that $\kw{DS}(v_3,\mathcal{G}_S) = [4,0,0,4,4,0]$, \lmsdnospace$(v_3,\mathcal{G}_S)= (0+4+4)/3 =  2.66$. So, $v_3$ is a not $(3,3)$-dense node in $\mathcal{G}_S$. Therefore, $\mathcal{G}_S$ is not a $(3,3)$-\mdc. However, given $C =\{v_1,v_2,v_4,v_5 \}$, we can find that all the nodes in $C$ are $(3,3)$-dense nodes, because all the nodes have the \lmsdall of 3 considering $T=[3:5]$. So, $\mathcal{G}_C$ is a $(3,3)$-\mdc with $C = \{v_1,v_2,v_4,v_5 \}$.
	\done
\end{example}

\stitle{Problem 1 (Bursting Community).} Given a temporal graph $\mathcal{G}$, an integer $l\ge 2$ and a real value $\delta>0$, the goal of mining one bursting community is to compute the \mdcore in $\cal G$.

Based on Definition~\ref{def:4densecore}, \mdcore is a bursting community which can identify important events in the temporal graph, but it may be not easy to find proper parameters of $l$ and $\delta$ for practical applications. Intuitively, a good bursting community will have large $l$ and $\delta$ values. But large $l$ and $\delta$ values may result in losing answers. However, based on the theory of Pareto Optimality, we are able to compute the bursting communities that are not dominated by the other communities in terms of parameters $l$ and $\delta$. Below, we introduce a new concept, \skyline, to define those communities.

\vspace*{0.1cm}
\begin{definition}[\skyline] \label{def:5skyline}
	Given a temporal graph $\mathcal G$, an \mdcore in $\mathcal G$ is a \skyline if there does not exist a $(l',\delta')$-$\mdc$ in $\mathcal G$ such that $l'>l, \delta'\ge\delta $ or $l'\ge l, \delta'>\delta$.
\end{definition}
\vspace*{0.1cm}

Based on Definition~\ref{def:5skyline}, \skylines in the temporal graph $\mathcal{G}$ are summarizations of all the \mdcore. Intuitively, each \mdcore will be contained in one of the \skylines since they are maximal.

\stitle{Problem 2 (Pareto-optimal Bursting Community).} Given a temporal graph $\mathcal{G}$, the goal of mining Pareto-optimal bursting communities is to enumerate all the \skylines in $\cal G$.

\stitle{Hardness Discussion.} We can find that the problem of mining one bursting community is a little similar to mining traditional $k$-core. But it is not sufficient by adopting the traditional core decomposition method directly. One way to solve the problem is reducing the temporal graph by removing the nodes which are not \densenodes, and then checking whether the remained nodes are \densenodes until no nodes will be reduced. Therefore, many nodes will be checked whether are \densenodes in the remained graph again and again. The time complexity of the naive method to check whether one node is \densenode for one time is $O(|\mathcal{T}|^2)$. However, the status of one node must be checked while one edge is deleted, the times of the checking steps are $O(m)$. In some large temporal networks the scale of $|\mathcal{T}|$ is near to $m$, so the whole time complexity is near to $O(m^3)$. Clearly, this approach may involve numerous redundant computations for checking some nodes which are definitely not contained in an \mdcore.

To list all the \skylines, the naive method is to enumerate parameter pairs $(l,\delta)$ and outputs the one which can not be dominated. This way is difficult, because it is hard to set the proper $\delta$ which is a real value. However, another possible way is only considering one dimension, such as $l$ first, and then finding the maximal $\delta$. Next, we keep $\delta$ unchanged and find the maximal $l$. The challenge is how to acquire the answers with less redundant computations.

\section{Algorithms For Mining \mdcore }
In this section, we first introduce a basic decomposition framework to mine the \mdcore. Next, we develop a dynamic programming algorithm which can compute the segment density efficiently, and then propose an improved algorithm with several novel pruning techniques.

\subsection{The \kw{MDC} Algorithm}\label{sub:3.1}

We can observe that \mdcore has the following three properties.
\begin{property}[Uniqueness]
	Given parameters $l> 1$ and $\delta> 0$, the \mdcore of the temporal graph $\mathcal{G}$ is unique.
\end{property}

\begin{proof}
	We can prove this lemma by a contradiction. Suppose	that there exist two different \mdcorealls in $\mathcal{G}$, denoted by $C_1$ and	$C_2$ respectively $(C_1 \neq C_2)$. Let us consider the node set $C' = C_1 \cup C_2$. Following Definition~\ref{def:4densecore}, every node in $C'$ is a \densenode in $\mathcal{G}(C')$, because it is a \densenode in $\mathcal{G}_{C_1}\cup \mathcal{G}_{C_2}$. Since $C_1 \neq C_2$, we have $C_1\subset C'$ and $C_2\subset C'$ which contradicts to the fact that $C_1(C_2 )$ satisfies the maximal
	property.
\end{proof}

\begin{property}[Containment]
	Given an \mdcore of the temporal graph $\mathcal{G}$, the $(l,\delta')$-\mdc with $\delta'\ge \delta$ is a temporal subgraph of \mdcore.
\end{property}

\begin{proof}
	According to Definition~\ref{def:4densecore}, an \mdcore $C$ is a maximal temporal subgraph, and any node in $C$ has segment density at least $\delta$ with length no less than $l$. For $\delta'\ge \delta$, each node in $(l,\delta')$-maximal dense core will also have segment density at least $\delta$ with length no less than $l$. Since the $C$ is a maximal temporal subgraph, $(l,\delta')$-maximal dense core must be contained in $C$.
\end{proof}

We first give the definition of $k$-CORE, and then show the third property.
\textbf{The $k$-CORE of the de-temporal graph of $\mathcal{G}$ can be denoted by $G_c =(V_c,E_c)$}, which is a maximal subgraph such that $\forall u \in G_c: deg_{G_c}[u] \ge k$.

\begin{property}[Reduction] \label{pro:03}
	Given an \mdcore of the temporal graph $\mathcal{G}$, the nodes in \mdcore must be contained in the $k$-CORE $(k=\delta)$ of the de-temporal graph $G$.
\end{property}

\begin{proof}
	According to Definition~\ref{def:4densecore}, any node $u$ in an \mdcore $\mathcal{G}_C$ has segment density at least $\delta$ with length no less than $l$ $(l\ge 2)$. So, $u$ must have degree at least $\delta$ in at least one snapshot $G^*$. As each $G^* \subseteq G$, each $u$ in $C$ must have degree no less than $\delta$. Since the $k$-CORE $(k=\delta)$ of the de-temporal graph $G$ is the maximal subgraph such that each nodes have degree no less than $\delta$, $C$ must be contained in the $k$-CORE $(k=\delta)$ of $G$.
\end{proof}

Following the property 3.3, we first compute the $k$-CORE $(k=\delta)$ of the de-temporal graph of $\mathcal{G}$, denoted by $G_c$. Given the properties of \textit{Uniqueness} and \textit{Containment}, we can apply the core decomposition framework to compute the \mdcore. Next, we check whether or not node $u$ satisfies the \textit{Densely} property mentioned in Definition~\ref{def:4densecore}. Specifically, we compute the $G_c$ in $G$ first, and then check whether node $u$ is an \densenode for all $u\in G_c$. If $u$ is not an \densenode, we delete $u$ from the results. Since the deletion of $u$ may result in $u$'s neighbors no longer being the \densenode, we need to iteratively process $u$'s neighbors. The process terminates if no node can be deleted. The details are provided in Algorithm~\ref{alg:1densecore}.

\begin{algorithm}[t!]\vspace*{-0.5mm}
	\scriptsize
	\caption{$\mdc({\mathcal G}, l, \delta)$ }
	\label{alg:1densecore}
	\KwIn{Temporal graph $\mathcal{G} = (\mathcal{V},\mathcal{E})$, parameters $l$ and $\delta$}
	\KwOut{\mdcore in $\mathcal{G}$}
	
	Let $G=(V, E)$ be the de-temporal graph of $ {\mathcal G}$\;
	Let $G_c=(V_c, E_c)$ be the $k$-CORE $(k=\delta)$ of $G$\;
	${\cal Q} \gets [\emptyset]; D \gets [\emptyset]; \mathcal{MSD}\gets [\emptyset]$\;
	\For {$u \in V_c$} {
		$deg[u] \gets |N_u(G_c) |$;  /* compute the degree of $u$ in $G_c$*/ \\
		$\mathcal{MSD}[u] \gets \kw{ComputeMSD}({\mathcal G}, l, u, V_c)$\;
		{\bf if }{$\mathcal{MSD}[u] < \delta$}{ \bf then}{
			${\cal Q}.push(u)$\;
		}
	}
	
	\While{${\cal Q} \neq [\emptyset]$}{
		$v \gets {\cal Q}.pop(); \ D \gets D \cup \{ v\}$\;
		\For{$w\in N_v(G_c)$, s.t. $deg[w]\ge \delta$ and $\mathcal{MSD}(w)\ge \delta$}{
			$deg[w] \gets deg[w]- 1$\;
			{\bf if}{ $deg[w] < \delta$}{ \bf then}{
				${\cal Q}.push(w)$\;
			}
			\Else{
				$\mathcal{MSD}[w] \gets \kw{ComputeMSD}({\mathcal G}, l, w, V_c \setminus D)$\;
				{\bf if }{$\mathcal{MSD}[w] < \delta$}{ \bf then}{
					${\cal Q}.push(w)$\;
				}
			}
		}
	}
	{\bf return} $\mathcal{G}_{V_c \setminus D}$ \;
	
\end{algorithm}

Algorithm~\ref{alg:1densecore} first computes the $k$-CORE $(k=\delta)$ of the de-temporal graph ${G}$ (lines 1-2), denoted by $G_c= (V_c,E_c)$. Then, it initializes a queue $\mathcal{Q}$ to store the nodes to be deleted, a set $D$ to store the deleted node, a collection $\mathcal{MSD}$ to store \lmsdall for each node (line 3) and $deg[u]$ to store the degree of $u$ in $G_c$ (line 5). Next, for each $u$ in $V_c$, it invokes Algorithm~\ref{alg:2ComputeDensity} to check whether $u$ is an \densenode or not (lines 4-6). If $u$'s \lmsdall $\mathcal{MSD}[u]$ is less than $\delta$, $u$ is not an \densenode and it will be pushed into a queue $\mathcal{Q}$ (lines 7-8). Subsequently, the algorithm iteratively processes the nodes in $\mathcal{Q}$. In each iteration, the algorithm pops a node $v$ from $\mathcal{Q}$ and uses $D$ to maintain all the deleted nodes (line 10).
For each neighbor node $w$ of $v$, the algorithm updates $deg[w]$ (lines 12). If the revised $deg[w]$ is smaller than $\delta$, $w$ is clearly not an \densenode.
As a consequence, the algorithm pushes $w$ into $\mathcal{Q}$ which will be deleted in the next iterations (line 13). Otherwise, the algorithm invokes Algorithm 2 to determine whether $w$ is an \densenode (lines 14-15). The algorithm terminates when $\mathcal{Q}$ is empty. At this moment, the remaining nodes $V_c \setminus D$ is the \densenodes of $\mathcal{G}$, and the algorithm returns temporal subgraph $\mathcal{G}_{V_c\setminus D}$ (line 16).

\begin{example}
	Recall the temporal graph in Fig.~\ref{fig:toy-example}. Given $l=3, \delta =3$. Algorithm~\ref{alg:1densecore} first computes the $k$-CORE $(k=\delta)$ of de-temporal graph $G$. So, $V_c = \{ v_1,v_2,v_3, v_4,v_5\}$. Then, for each node $u$ in $V_c$, it checks whether $u$ is an \densenode. Consider $v_3$, $\kw{DS}(v_3, \mathcal{G}_{V_c}) = [4,0,0,4,4,0]$, we can not find a segment of at least 3 length in which the density is no less than 3. Next, $v_3$ will be pushed into $\mathcal{Q}$. In line 9, $v_3$ is added into set $D$ and all of its neighbors will be checked in line 10. Now the remained nodes are $ \{v_1,v_2, v_4,v_5\}$, and we can find that the $deg$ and $\mathcal{MSD}$ of them are no less than 3. Therefore, Algorithm~\ref{alg:1densecore} returns $\mathcal{G}_{V_c\setminus D}$ with $V_c\setminus D= \{v_1,v_2, v_4,v_5\}$.
\end{example}

\stitle{Correctness of Algorithm~\ref{alg:1densecore}.} Let $C=V_c\setminus D$. It will check $\mathcal{MSD}[u]$ and call procedure \kw{ComputeMSD} once its neighbor is deleted and added into $D$,	so each node in the remained $C$ must have a $l$-segment density at least $\delta$ in $\mathcal{G}_C$ with length no less than $l$. Therefore, each node in the remained $C$ will have the same property. According to Definition~\ref{def:4densecore}, Algorithm~\ref{alg:1densecore} correctly computes \mdcore. \done

\stitle{Complexity of Algorithm~\ref{alg:1densecore}. } The time and space complexity of Algorithm 1 by invoking Algorithm~\ref{alg:2ComputeDensity} to compute \msd is $O(m|\mathcal{T}|)$ and $O(m)$ respectively.

\begin{proof}
	First, Algorithm~\ref{alg:1densecore} needs $O(m)$ time to compute the $k$-CORE in the de-temporal graph $G$ (line 2). As Algorithm~\ref{alg:2ComputeDensity} needs time of $O(|\mathcal{T}|)$ (see Section~\ref{sub:3.2}), it takes $O(|\mathcal{T}| n)$ time to initialize queue $\mathcal{Q}$ for all the nodes in $V_c$ (lines 4-7). Next, in lines 8-16, for each node $v$, the algorithm explores all neighbors of $v$ at most once. So it will invoke Algorithm~\ref{alg:2ComputeDensity} at most $m$ times and the total time complexity is $O(|\mathcal{T}| m)$. Therefore, since $n<m$, the total time complexity of Algorithm~\ref{alg:1densecore} is $O(|\mathcal{T}| m)$.
	
	In this algorithm, we need to maintain the graph and store collections of $\mathcal{Q},D$ and $deg$ which consumes $O(m)$ in total. In procedure \kw{ComputeMSD}, it needs space of $O(|\mathcal{T}|)$ (see Section~\ref{sub:3.2}). Since $|\mathcal{T}| <m$, thus the total space complexity of Algorithm~\ref{alg:1densecore} is $O(m)$.
\end{proof}

Different from the traditional core decomposition algorithm, Algorithm~\ref{alg:1densecore} needs to check whether one node is an \densenode in each iteration. Below, the implementation details of \kw{ComputeMSD} are described.

\subsection{Dynamic Programming Procedure of \kw{ComputeMSD}} \label{sub:3.2}
Recall Definition~\ref{def:4densecore}, one node $u$ is an \densenode if \lmsdnospace$(u,\mathcal{G}_C)\ge \delta$ in the temporal subgraph $\mathcal{G}_C$. Considering one node $u$, we can get $u$'s degree sequence $\kw{DS}(u, \mathcal{G}_C)$ inside the candidate \mdcore for $i$ range from $1$ to $|\mathcal{T}|$ first, and then compute the \lmsdall of $u$.
For convenience, in this subsection we denote $\kw{DS}(u, \mathcal{G}_C)$ by $\mathcal{DS}[u] = \{ |N_u(G_i) \cap C|, i \in [1 :|\mathcal{T}|] \}$, \lmsdnospace$(u,\mathcal{G}_C)$ by $\mathcal{MSD}[u]$ while they all consider the degree sequence in $\mathcal{G}_C$.
To get $\mathcal{MSD}[u]$, the naive method is to considering all the segment of longer than $l$, but the time complexity is $O(|\mathcal{T}|^2)$. Below, we propose a dynamic programming algorithm which transforms the problem into finding the maximum slope in a curve, which can reduce the computational overhead to linear complexity.

\vspace*{0.1cm}
\begin{definition}[cumulative sum curve]
	\label{def:7csc}
	Given node  $u$'s degree sequence $\kw{DS}(u, \mathcal{G}_C)$ (abbreviated as $\mathcal{DS}[u]$), the cumulative sum curve (abbreviated as $\mathcal{CSC}$) of $u$ is a collection of $\{ \mathcal{CSC} [i] = \sum_{i=1}^t{\mathcal{DS}[u][i]} , t \in [1 :|\mathcal{T}|] \}$.

\end{definition}
\vspace*{0.1cm}

Without loss of generality, we set $\mathcal{CSC}[0]$ as 0. Then, the points $\{(0,\mathcal{CSC}[0]),(1,\mathcal{CSC}[1])... (|\mathcal{T}|,\mathcal{CSC}[|\mathcal{T}|]) \}$ can be drawn as a curve in the Cartesian Coordinate System, and we denote this curve by $\mathcal{CSC}$.  Next, we define the slope by considering two points in $\mathcal{CSC}$.

\vspace*{0.1cm}
\begin{definition}[slope]\label{def:8slope}
	Given integers $i,j\in [1,|\mathcal{T}|],i<j$, the slope of curve $\mathcal{CSC}$ from $i$ to $j$ can be denoted by $\kw{slope}(i,j,\mathcal{CSC})=\frac{\mathcal{CSC}[j] - \mathcal{CSC}[i-1]}  {j-i+1}$, where $i,j$ can be marked as the $start$ and $end$ of the slope, respectively.
\end{definition}
\vspace*{0.1cm}

For convenience, we abbreviate $\kw{slope}(i,j,\mathcal{CSC})$ as $\kw{slope}(i,j)$ in the following paper while the symbol $\mathcal{CSC}$ can not be confused.

\begin{lemma}
	\label{lemma:slop}
	For a degree sequence $\mathcal{DS}[u]$, one time interval $[t_s,t_e]$, the segment density of the subsequence in $[t_s,t_e]$ equals the slope of curve $\mathcal{CSC}$ from $t_e$ to $t_s$. Formally,
	$\frac{\sum_{i=t_s}^{t_e} \mathcal{DS}[u][i]}  {t_e-t_s+1}  = \kw{slope}(t_s,t_e,\mathcal{CSC})$.
\end{lemma}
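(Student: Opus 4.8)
The plan is to prove the identity by a direct telescoping argument on the partial sums that define the cumulative sum curve. First I would recall the two relevant definitions: by Definition~\ref{def:7csc}, $\mathcal{CSC}[t] = \sum_{i=1}^{t} \mathcal{DS}[u][i]$ for $t \in [1:|\mathcal{T}|]$, together with the stated convention $\mathcal{CSC}[0] = 0$; and by Definition~\ref{def:8slope}, $\kw{slope}(t_s, t_e, \mathcal{CSC}) = \frac{\mathcal{CSC}[t_e] - \mathcal{CSC}[t_s - 1]}{t_e - t_s + 1}$.

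The core step is to show $\mathcal{CSC}[t_e] - \mathcal{CSC}[t_s - 1] = \sum_{i=t_s}^{t_e} \mathcal{DS}[u][i]$. This follows by splitting the sum $\sum_{i=1}^{t_e} \mathcal{DS}[u][i]$ at index $t_s - 1$, i.e. $\sum_{i=1}^{t_e} \mathcal{DS}[u][i] = \sum_{i=1}^{t_s-1}\mathcal{DS}[u][i] + \sum_{i=t_s}^{t_e}\mathcal{DS}[u][i]$, which is exactly $\mathcal{CSC}[t_e] = \mathcal{CSC}[t_s-1] + \sum_{i=t_s}^{t_e}\mathcal{DS}[u][i]$, and then rearranging. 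One edge case to note: when $t_s = 1$ the term $\mathcal{CSC}[t_s - 1] = \mathcal{CSC}[0] = 0$, which is precisely why the convention $\mathcal{CSC}[0]=0$ was introduced, so the identity still holds at the left boundary.

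Dividing both sides of this equality by the positive quantity $t_e - t_s + 1$ yields $\frac{\sum_{i=t_s}^{t_e} \mathcal{DS}[u][i]}{t_e - t_s + 1} = \frac{\mathcal{CSC}[t_e] - \mathcal{CSC}[t_s-1]}{t_e-t_s+1} = \kw{slope}(t_s, t_e, \mathcal{CSC})$, which is the claim; substituting $\mathcal{DS}[u][i] = |N_u(G_i) \cap C|$ shows the left-hand side is exactly the segment density of Definition~\ref{def:3segden} restricted to $[t_s, t_e]$. There is no genuine obstacle here — the statement is essentially the fundamental relation between a sequence and its sequence of partial sums — so the only things to be careful about in the write-up are the indexing (the slope uses the endpoints $t_s-1$ and $t_e$, not $t_s$ and $t_e$) and the boundary convention at $t_s = 1$.
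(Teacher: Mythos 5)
Your proof is correct and is exactly the routine unpacking of Definitions~\ref{def:7csc} and~\ref{def:8slope} that the paper has in mind — the paper itself simply states that the result "can be easily obtained by the definitions" and omits the details. Your telescoping argument and the note about the $\mathcal{CSC}[0]=0$ convention at the $t_s=1$ boundary fill in that omission faithfully.
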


\begin{proof}
	The proof can be easily obtained by the definitions, thus we omit it for brevity.
\end{proof}

\vspace*{0.1cm}
\begin{definition}[maximum $j$-truncated $l$-slope]
	\label{def:mts}
	Given a curve $\mathcal{CSC}$ of node $u$, a truncated time $j\in[l:|\mathcal{T}|]$, the maximum $j$-truncated $l$-slope $\mtsnospace[j] = \{\max(\kw{slope}(i,j)) \| i = [0,j-l]\}$.
\end{definition}
\vspace*{0.1cm}

According to Lemma~\ref{lemma:slop} and Definition~\ref{def:mts}, $\mtsnospace[j]$ is the maximum slope which ended at time $j$ and the length of the corresponding segment is no less than $l$. For convenience, \mts is the collection of \{$\mtsnospace[j], j\in [l, |\mathcal{T}|] $\}.

\begin{corollary}
	\label{coro:slop}
	The problem of finding the \lmsdnospace$(u,\mathcal{G}_C)$, can be transformed to computing $\max(\mtsnospace)$ in $\mathcal{CSC}$ of $u$.
	
\end{corollary}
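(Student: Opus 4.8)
The plan is to prove the identity $\lmsdnospace(u,\mathcal{G}_C)=\mathcal{MSD}[u]=\max(\mtsnospace)$ by unfolding the two definitions and observing that both sides are the maximum of the \emph{same} collection of slope values, merely organized differently. By the definition of the maximum $l$-segment density (restated for $\mathcal{MSD}[u]$ at the start of Section~\ref{sub:3.2}), $\mathcal{MSD}[u]$ is the maximum of the $l$-segment density of Definition~\ref{def:3segden}, namely $\frac{\sum_{i=t_s}^{t_e}\mathcal{DS}[u][i]}{t_e-t_s+1}$, taken over all intervals $[t_s,t_e]\subseteq[1:|\mathcal{T}|]$ with $t_e-t_s+1\ge l$. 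Lemma~\ref{lemma:slop} says each such ratio equals $\kw{slope}(t_s,t_e,\mathcal{CSC})$, so
\[
\mathcal{MSD}[u]=\max\bigl\{\kw{slope}(t_s,t_e,\mathcal{CSC}) : 1\le t_s\le t_e\le|\mathcal{T}|,\ t_e-t_s+1\ge l\bigr\}.
\]

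First I would re-group this maximum by the right endpoint of the interval. For a fixed $j\in[l:|\mathcal{T}|]$, the feasible intervals of the form $[t_s,j]$ are precisely those whose left endpoint $t_s$ ranges over the index set used in Definition~\ref{def:mts}, and hence $\max_{t_s}\kw{slope}(t_s,j,\mathcal{CSC})=\mtsnospace[j]$ by that definition; moreover, as $j$ runs over $[l:|\mathcal{T}|]$ it covers every possible right endpoint of a feasible interval. Taking the outer maximum over $j$ therefore turns the displayed expression into $\max_{j\in[l:|\mathcal{T}|]}\mtsnospace[j]=\max(\mtsnospace)$, which establishes the claim. No spurious value can be introduced in the other direction, since each $\mtsnospace[j]$ is itself $\kw{slope}(t_s,j,\mathcal{CSC})$ for some feasible $[t_s,j]$, hence bounded above by $\mathcal{MSD}[u]$.

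The delicate part — and the only step I would write out carefully — is the index bookkeeping around the boundaries rather than anything substantive: one has to check that the convention $\mathcal{CSC}[0]=0$ of Definition~\ref{def:7csc} makes $\kw{slope}(t_s,t_e,\mathcal{CSC})$ well defined when $t_s=1$, that the left-endpoint range in Definition~\ref{def:mts} matches exactly $\{t_s:t_e-t_s+1\ge l\}$ so that no segment of length $\ge l$ is dropped and none of length $<l$ is admitted, and that the degenerate case $|\mathcal{T}|<l$ (no feasible segment at all) leaves $\mathcal{MSD}[u]$ and $\max(\mtsnospace)$ both undefined in the same way. Once these ranges are aligned, the corollary follows immediately from Lemma~\ref{lemma:slop} and the definition of $\mtsnospace$, with no further computation.
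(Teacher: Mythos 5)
Your proposal is correct and follows essentially the same route as the paper's own proof: apply Lemma~\ref{lemma:slop} to turn segment densities into slopes, then re-group the maximum over all feasible intervals by their right endpoint $j$ so that each group's maximum is exactly $\mtsnospace[j]$ and the outer maximum is $\max(\mtsnospace)$. Your extra attention to the boundary/indexing conventions (the $\mathcal{CSC}[0]=0$ convention and the alignment of the left-endpoint range in Definition~\ref{def:mts} with the constraint $t_e-t_s+1\ge l$) is a reasonable refinement of the same argument, not a different one.
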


\begin{proof}
	According to lemma~\ref{lemma:slop}, the problem of finding the \lmsdall, can be transformed to computing the maximum slope of the curve $\mathcal{CSC}$ in which the difference between the $start$ and $end$ of slope is no less than $l$.
		
	However, there exists the maximum slope of the curve $\mathcal{CSC}$ which ended at some time $t'$.
	If we range $t$ from time $l$ to time $|\mathcal{T}|$ and record all the $\mtsnospace[t]$, then the maximum one will be the maximum slope of the curve. According to Definition~\ref{def:mts}, the difference between the $start$ and $end$ of $\mtsnospace[t]$ is at least $l$. Therefore, \lmsdnospace$(v,\mathcal{G}_C) = \max(\mtsnospace)$.
\end{proof}

Next, the problem is how to compute all the $\mtsnospace[t]$ with $t=[1: |\mathcal{T}|]$. One efficient idea is maintaining $\mtsnospace[t+1]$ by the computed $\mtsnospace[t]$ and the changes of the curve from time $t$ to ${t+1}$. Below, considering the computed $\mtsnospace[t]$ and the newly joined point $(t+1,\mathcal{CSC}[{t+1}])$, we can maintain \mts based on the following observations.

\begin{observation}
	We can compute a lower convex hull (abbreviated as \ch) in $\mathcal{CSC}$ of $u$ which ended at time $t-l$, the slope of the tangent from point $(t,\mathcal{CSC}[t])$ to the \ch is the \lmsdall of node $u$ ended at time $t$.
\end{observation}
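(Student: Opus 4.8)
The plan is to reduce the computation of the maximum $l$-segment density ending at time $t$ to a geometric tangent-line query against a lower convex hull, then verify correctness via the slope lemma. First I would fix the node $u$ and its cumulative sum curve $\mathcal{CSC}$, and recall from Corollary~\ref{coro:slop} and Definition~\ref{def:mts} that the \lmsdall ending at time $t$ equals $\mtsnospace[t] = \max\{\kw{slope}(i,t) : i \in [0, t-l]\}$, i.e.\ the largest slope of a chord from some point $(i-1, \mathcal{CSC}[i-1])$ with $i-1 \le t-l$ to the fixed point $(t, \mathcal{CSC}[t])$. So the question becomes: among all points of $\mathcal{CSC}$ with abscissa at most $t-l$, which one yields the steepest chord to $(t,\mathcal{CSC}[t])$?

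The key geometric fact I would invoke is that, since $(t,\mathcal{CSC}[t])$ sits to the right of all candidate points $(j,\mathcal{CSC}[j])$ with $j \le t-l$, a chord from such a candidate to $(t,\mathcal{CSC}[t])$ has maximal slope precisely when the candidate lies on the lower convex hull of $\{(j,\mathcal{CSC}[j]) : 0 \le j \le t-l\}$ and, moreover, the chord is the line through $(t,\mathcal{CSC}[t])$ that is tangent to (supports) this hull from above. Concretely, I would argue: (i) any point strictly above the lower hull can be replaced by a hull vertex giving a chord of at least as large a slope, because the hull vertex lies on or below the segment, hence the chord to it is at least as steep; and (ii) among hull vertices, the slopes of chords to the fixed external point $(t,\mathcal{CSC}[t])$ are unimodal in the vertex index — they increase then decrease — by the convexity of the hull, so the maximum is attained at the unique vertex where the supporting line through $(t,\mathcal{CSC}[t])$ touches, i.e.\ the tangent point. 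Combining with Lemma~\ref{lemma:slop}, which identifies chord slopes with segment densities, the slope of that tangent line equals $\mtsnospace[t]$, which is exactly the \lmsdall of $u$ ending at time $t$.

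The main obstacle I anticipate is making the tangent/unimodality argument fully rigorous at the boundary, in particular handling the degenerate cases: colinear hull points (where the tangent "touches" an entire edge rather than a vertex, so the maximizer is not unique but the slope value is still well-defined), the fixed point $(t,\mathcal{CSC}[t])$ itself being colinear with a hull edge, and the edge case $t - l < 0$ or the hull consisting of a single point (where $\mtsnospace[t]$ is vacuous or trivially one chord). None of these affect the stated slope value, but a clean write-up needs either a careful case split or an observation that it suffices to reason about slopes (a real-valued quantity) rather than about which point achieves them. A secondary point to nail down is that we only ever need the hull of the prefix $\{(j,\mathcal{CSC}[j]) : j \le t-l\}$, not of the whole curve — this is what licenses the incremental maintenance sketched after the observation, where moving from $t$ to $t+1$ only appends the single new point $(t+1-l, \mathcal{CSC}[t+1-l])$ to the hull under consideration.
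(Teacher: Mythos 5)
Your argument is correct and is exactly the standard convex-hull-trick justification that the paper itself leaves implicit: the Observation is stated without proof, and your reduction to ``steepest chord from $(t,\mathcal{CSC}[t])$ to a prefix point of abscissa at most $t-l$,'' followed by the replace-by-a-hull-vertex step and the unimodality/tangency step, is precisely what lines 8--13 of Algorithm~2 implement. The only nits are cosmetic: the paper's own off-by-one between the definition of $\kw{slope}$ and the $\kw{slope}$ procedure carries into your indexing, and ``supports the hull from above'' should read ``from below'' (the hull lies on or above the tangent line through $(t,\mathcal{CSC}[t])$, which is why rotating the line to a larger slope loses contact with the hull); neither affects the substance.
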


\begin{observation} If the point $(a,\mathcal{CSC}[a])$ and $(b,\mathcal{CSC}[b])$ is on the maintained lower convex hull, suppose that $a<b<c$, \ch will add node $(c,\mathcal{CSC}[c])$ and remove node $(b,\mathcal{CSC}[b])$ if $(\mathcal{CSC}[c]-\mathcal{CSC}[b])/(c-b) \le (\mathcal{CSC}[b]-\mathcal{CSC}[a])/(b-a)$.
\end{observation}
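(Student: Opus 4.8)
The plan is to show that the stated pop-and-push rule is precisely the rule that keeps the retained point list equal to the lower convex hull \ch of all the points inserted so far, so that by the preceding observation the tangent query that reads off $\mtsnospace[t]$ is always answered correctly. First I would translate the slope inequality into a statement about the position of $(b,\mathcal{CSC}[b])$ relative to the chord joining $(a,\mathcal{CSC}[a])$ and $(c,\mathcal{CSC}[c])$: since $a<b<c$, the quantities $b-a$, $c-b$ and $c-a=(c-b)+(b-a)$ are all positive, so clearing denominators in $\frac{\mathcal{CSC}[c]-\mathcal{CSC}[b]}{c-b}\le\frac{\mathcal{CSC}[b]-\mathcal{CSC}[a]}{b-a}$ and regrouping yields the equivalent inequality $\mathcal{CSC}[b]\ge \mathcal{CSC}[a]+(\mathcal{CSC}[c]-\mathcal{CSC}[a])\cdot\frac{b-a}{c-a}$, i.e. $(b,\mathcal{CSC}[b])$ lies on or above the segment through $(a,\mathcal{CSC}[a])$ and $(c,\mathcal{CSC}[c])$. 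This step is a routine calculation.

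Next I would invoke the standard characterization of a lower convex hull: the sequence of retained points is a valid lower hull iff the slopes of consecutive edges are strictly increasing from left to right, equivalently, an interior point is a hull vertex iff it lies strictly below the segment joining its two hull neighbours. By the previous paragraph, the hypothesis of the observation says exactly that $(b,\mathcal{CSC}[b])$ fails this test for the neighbouring hull points $(a,\mathcal{CSC}[a])$ and the incoming point $(c,\mathcal{CSC}[c])$. Hence $b$ cannot be a vertex of the lower hull of any point set containing $a$, $b$ and $c$ — in particular of the set obtained after appending $c$ — so $b$ must be discarded and $c$ appended, which is exactly the rule in the statement.

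I would then close the loop with query correctness. After the update I would argue, by induction on the number of popped points, that the retained list is again a valid lower hull: each pop re-exposes the predecessor of $a$, to which the same slope test is re-applied, and the cascade halts once the last two retained edge-slopes are in increasing order; combined with Lemma~\ref{lemma:slop} and Corollary~\ref{coro:slop}, every use of \ch is a tangent query from a point $(t,\mathcal{CSC}[t])$ lying to the right of the hull, whose optimum is attained at a hull vertex, so removing the non-vertex $b$ changes no answer. The main obstacle is not the one-step geometry, which is elementary, but stating this invariant cleanly: one must account for the cascade of removals, for the degenerate collinear case (equality in the hypothesis, where $b$ is on the chord and may be dropped without harm), and for the fact that ``\ch ended at time $t-l$'' has to stay well defined as $t$ advances and the newly eligible point is inserted at the right end — so strictly speaking the observation describes a single step of a loop whose overall correctness is the inductive claim I would make precise.
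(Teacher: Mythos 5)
Your proof is correct. Note that the paper itself offers no proof of this statement: it is asserted as an ``observation'' and immediately used to justify lines 8--10 of Algorithm~2, so there is no in-paper argument to compare against. Your derivation supplies exactly the justification the authors leave implicit: clearing the (positive) denominators in $\frac{\mathcal{CSC}[c]-\mathcal{CSC}[b]}{c-b}\le\frac{\mathcal{CSC}[b]-\mathcal{CSC}[a]}{b-a}$ gives $\mathcal{CSC}[b]\ge\frac{(c-b)\,\mathcal{CSC}[a]+(b-a)\,\mathcal{CSC}[c]}{c-a}$, i.e.\ $(b,\mathcal{CSC}[b])$ lies on or above the chord from $(a,\mathcal{CSC}[a])$ to $(c,\mathcal{CSC}[c])$, so it cannot be a vertex of the lower hull of any point set containing $a$, $b$, $c$, and discarding it cannot change the answer to any tangent query from a point to the right of the hull. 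Your two caveats are also well placed: the observation only makes sense when $a,b$ are the last two retained hull vertices and $c$ is the incoming point $t-l$, and its full correctness is really the loop invariant (retained list $=$ lower hull of all inserted points) maintained by the cascade of pops in lines 8--9 of Algorithm~2 — an invariant the paper never states explicitly. The equality/collinear case is harmless for the reason you give.
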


\begin{observation} For one ended time $t$, if $(\mathcal{CSC}[t]-\mathcal{CSC}[b])/(t-b) \ge (\mathcal{CSC}[b]-\mathcal{CSC}[a])/(b-a)$, then the slope of $\mathcal{CSC}[t]$ to $\mathcal{CSC}[a]$ will not be the maximum one and node $(a,\mathcal{CSC}[a])$ should be removed from $\mathcal{CH}$.
\end{observation}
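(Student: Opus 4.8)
The plan is to reduce the claim to a single inequality among three slopes of $\mathcal{CSC}$, dispatch it with the elementary mediant (Stern--Brocot) inequality, and then upgrade ``$\kw{slope}(a,t)$ is not maximal'' to ``$(a,\mathcal{CSC}[a])$ may be deleted from $\mathcal{CH}$'' by combining this with the monotone scan over ended times and the lower-convexity that $\mathcal{CH}$ maintains.

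First I would restate the hypothesis in slope notation: with $a<b<t$ the three indices, the assumption $(\mathcal{CSC}[t]-\mathcal{CSC}[b])/(t-b)\ge(\mathcal{CSC}[b]-\mathcal{CSC}[a])/(b-a)$ is, by Definition~\ref{def:8slope}, exactly $\kw{slope}(b,t)\ge\kw{slope}(a,b)$. Putting $p=\mathcal{CSC}[b]-\mathcal{CSC}[a]$, $q=b-a$, $r=\mathcal{CSC}[t]-\mathcal{CSC}[b]$, $s=t-b$ (so $q,s>0$), one has $\kw{slope}(a,t)=\frac{p+r}{q+s}$, and the mediant inequality ``$\frac{p}{q}\le\frac{r}{s}$ implies $\frac{p}{q}\le\frac{p+r}{q+s}\le\frac{r}{s}$'' gives at once $\kw{slope}(a,t)\le\kw{slope}(b,t)$; this one-line cross-multiplication is the only real computation. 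Since $\mathcal{CH}$ holds only curve points of index at most $t-l$, the point $b$ is itself an admissible $start$ for a segment ending at $t$, so $\kw{slope}(a,t)$ is dominated by the admissible value $\kw{slope}(b,t)$ and hence is never the (unique) maximum slope ending at $t$.

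To conclude that $(a,\mathcal{CSC}[a])$ can be discarded from $\mathcal{CH}$ once and for all --- not merely skipped at this particular $t$ --- I would show the domination persists at every later ended time $t'\ge t$. Taking $a$ and $b$ to be consecutive vertices of the lower hull (the situation when $a$ is the front of $\mathcal{CH}$ and $b$ the next vertex), lower-convexity forces every later hull vertex, and therefore every later point of $\mathcal{CSC}$, to lie on or above the line through $(a,\mathcal{CSC}[a])$ and $(b,\mathcal{CSC}[b])$ extended past $b$; hence $\mathcal{CSC}[t']\ge\mathcal{CSC}[b]+\kw{slope}(a,b)(t'-b)$, i.e.\ $\kw{slope}(b,t')\ge\kw{slope}(a,b)$, and the mediant step applied with $t'$ in place of $t$ gives $\kw{slope}(a,t')\le\kw{slope}(b,t')$. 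Since ended times are scanned in increasing order, $a$ is then worthless for all remaining queries and is popped from the front of $\mathcal{CH}$. I expect this last step to be the crux: the slope inequality is routine, whereas one must argue carefully that the lower-convex upkeep of $\mathcal{CH}$ keeps the curve above the extended edge through $a$ and $b$, so that the observation's hypothesis, once it holds at time $t$, still holds at every subsequent ended time.
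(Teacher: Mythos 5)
Your first paragraph is correct and is essentially all the paper itself asserts for the current ended time $t$: rewriting the hypothesis as $\kw{slope}(b,t)\ge\kw{slope}(a,b)$ and applying the mediant inequality to $\kw{slope}(a,t)=\frac{(\mathcal{CSC}[b]-\mathcal{CSC}[a])+(\mathcal{CSC}[t]-\mathcal{CSC}[b])}{(b-a)+(t-b)}$ gives $\kw{slope}(a,t)\le\kw{slope}(b,t)$, and since $b\le t-l$ is itself an admissible start, $\kw{slope}(a,t)$ is not the maximum at time $t$. That half is fine.

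The permanence argument in your last paragraph has a genuine gap, and the intermediate claim you try to establish there is in fact false. You assert that lower-convexity forces ``every later point of $\mathcal{CSC}$'' to lie on or above the line through $(a,\mathcal{CSC}[a])$ and $(b,\mathcal{CSC}[b])$ extended past $b$. Convexity of $\mathcal{CH}$ constrains only the hull points, whose indices never exceed $t'-l$; the query point $(t',\mathcal{CSC}[t'])$ is not one of them and is not so constrained ($\mathcal{CSC}$ is merely non-decreasing, so it can grow more slowly than the extended edge). Concretely, take $l=2$ and degree sequence $[2,4,4,0,0,0]$, so $\mathcal{CSC}=[0,2,6,10,10,10,10]$. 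At $t=3$ the hull is $\{0,1\}$ and $\kw{slope}(1,3)=4\ge\kw{slope}(0,1)=2$, so $a=0$ is popped. Yet at $t'=6$ one has $\kw{slope}(0,6)=5/3>\kw{slope}(1,6)=8/5$, and $5/3$ is the maximum over \emph{all} admissible starts at $t'=6$; the domination does not persist, and the per-time value $\kw{MTS}[6]$ produced after the permanent pop is strictly smaller than the true one. What actually saves the algorithm is a global rather than pointwise argument: $\kw{slope}(a,t')$ is the mediant of $\kw{slope}(a,t)$ and $\kw{slope}(t,t')$, hence $\kw{slope}(a,t')\le\max\bigl(\kw{slope}(a,t),\kw{slope}(t,t')\bigr)$, and $\kw{slope}(a,t)\le \kw{MTS}[t]$ was already recorded at the earlier time $t$; combined with the reduction to segments of length below $2l$ (Lemma~\ref{lemma:2l}), this shows that $\max_t \kw{MTS}[t]$ --- the only quantity $\kw{ComputeMSD}$ returns --- is unaffected (in the example, $5/3<\kw{MTS}[3]=4$). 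You need an argument of that shape in place of persistent domination; as written, the ``should be removed from $\mathcal{CH}$'' half of the observation remains unproved.
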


Following the observations above, we devise an algorithm to maintain the lower convex hull \ch ended at time $t-l$, and the $\mtsnospace[t]$ can be computed in a recursive way as the following algorithm shows.

\begin{algorithm}[t!]\vspace*{-0.5mm}
	\scriptsize
	\caption{$\kw{ComputeMSD}({\mathcal G}, l, u, C)$ }
	\label{alg:2ComputeDensity}
	$\mathcal{CSC}\gets [\emptyset];\mathcal{CSC}[0]\gets 0 ; \mathcal{DS}[u] \gets [\emptyset]$ \;
	\For {$t \gets 1 : {|\mathcal T|}$} {
		Let $G_t$ be the snapshot of $\cal G$ at timestamp $t$\;
		$\mathcal{DS}[u][t] \gets |N_u(G_t) \cap C|$\;
		$\mathcal{CSC}[t]$ = $\mathcal{CSC}[t-1] + \mathcal{DS}[u][t]$\;
		
	}
	\ch$\gets [\emptyset], i_s \gets 0 , i_e \gets -1, \mathcal{MTS}[u] \gets  [\emptyset]$\;
	\For{$t \gets l  : {|\mathcal T|}$}  {
		\While{$i_s < i_e$ and $\kw{slope}(\chnospace[i_e],t-l,\mathcal{CSC})$ $\leq$ $\kw{slope}(\chnospace[i_e-1],\chnospace[i_e],\mathcal{CSC})$
		} {
			$i_e\gets i_e -1$\;
		}
		$\chnospace[++i_e] \gets t - l$;
		
		\While{$i_s < i_e$ and $\kw{slope}(\chnospace[i_s],t,\mathcal{CSC})$ $\geq$ $\kw{slope}(\chnospace[i_s],\chnospace[i_s+1],\mathcal{CSC})$
		} {
			$i_s \gets i_s +1$\;
		}
		
		${\mathcal{MTS}[u]} \gets {\mathcal{MTS}[u]} \cup \{\kw{slope}(\chnospace[i_s],t,\mathcal{CSC}) \} $\;
	}
	
	{\bf return} $\max(\mathcal{MTS}[u])$\;
	
	\vspace{2mm}
	
	{\bf Procedure} $\kw{slope}(i, j ,\mathcal{CSC})$ \\
	{\bf return}  $(\mathcal{CSC}[j] - \mathcal{CSC}[i])/(j-i)$
	
\end{algorithm}

Algorithm~\ref{alg:2ComputeDensity} first initializes $\mathcal{CSC}[t]$ of $u$ for all timestamps (lines 1-5). As the nodes set $C$ may be changed in Algorithm~\ref{alg:1densecore}, the degree of $u$ can be computed in line 4. Next, it maintains an array \ch to record the indexes of each points in the lower convex hull, $i_s$ to record the $start$ index of \ch, $i_e$ to record the $end$ index of \ch and $\mathcal{MTS}[u]$ to record \mts (line 6). For time $t$ from $l$ to $|\mathcal{T}|$, it dynamically computes $\mtsnospace[t]$ of $u$ (lines 7-13).  In lines 8-9, $i_e$ reduces by 1 if the $\kw{slope}(\chnospace[i_e], t-l)$ is no larger than $\kw{slope}(\chnospace[i_e-1], \chnospace[i_e])$, because the rear node point will be above the convex hull \ch by the end of $t-l$ following Observation 3.2. If there is no such point in the end, $\chnospace[++i_e]$ is assigned by $t-l$. In lines 11-12, the head index adds up by 1 if $\kw{slope}(\chnospace[i_s], t)$ is no larger than $\kw{slope}(\chnospace[i_s], \chnospace[i_s+1])$, because it will have an upper convex hull in the curve of \ch at the start of $\chnospace[i_s]$ according to Observation 3.3.
We will get a array $\mathcal{MTS}[u]$ of $\mtsnospace[t]$ with $t$ ranges from $l$ to $|\mathcal{T}|$. Finally, it returns $\max(\mathcal{MTS}[u])$ after all the iterations (line 14).

\begin{figure}[t]\vspace*{-0.5cm}
	\centering
	\subfigure[$t=4$]{
		\includegraphics[width=2.51cm]{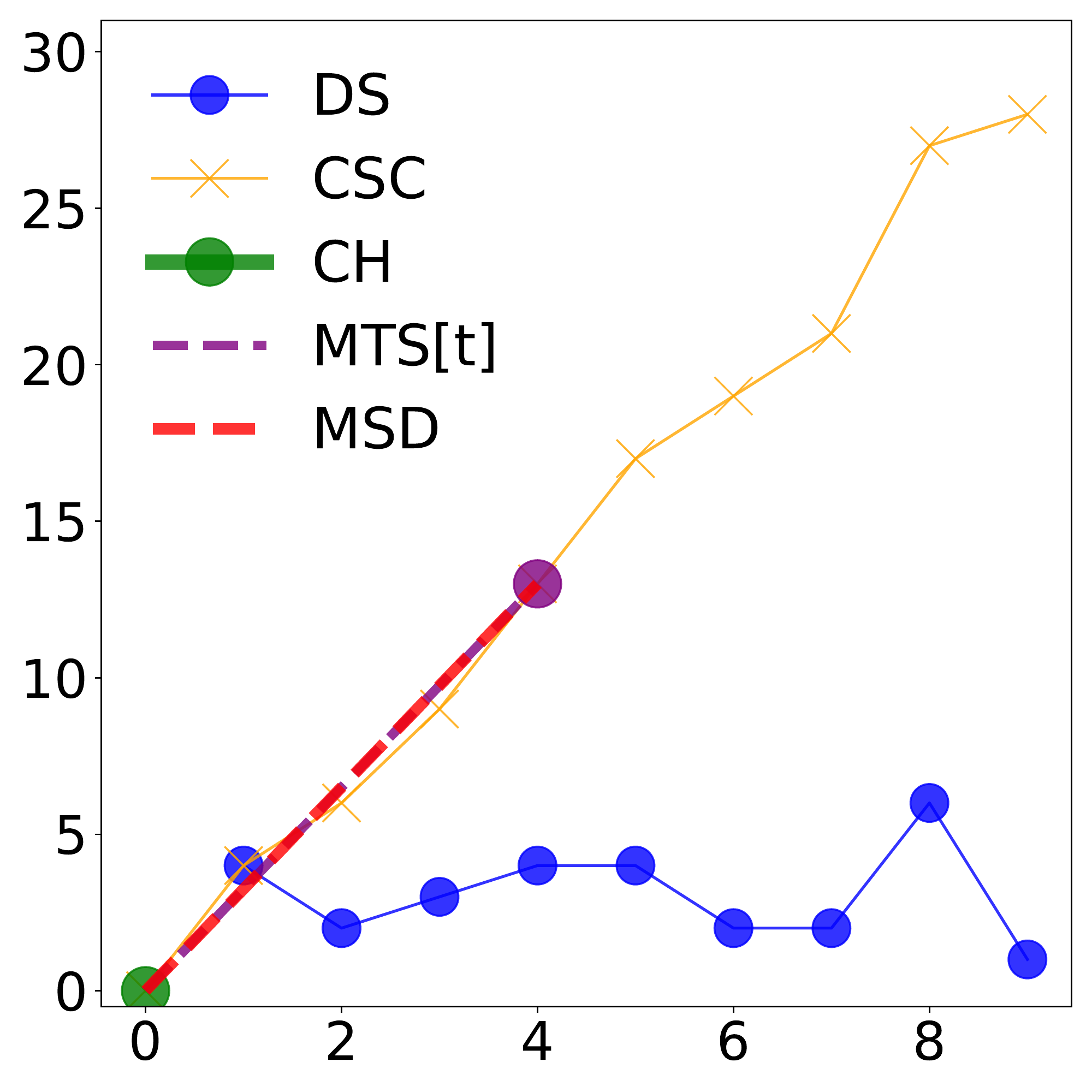}
	}
	\subfigure[$t=5$]{
		\includegraphics[width=2.51cm]{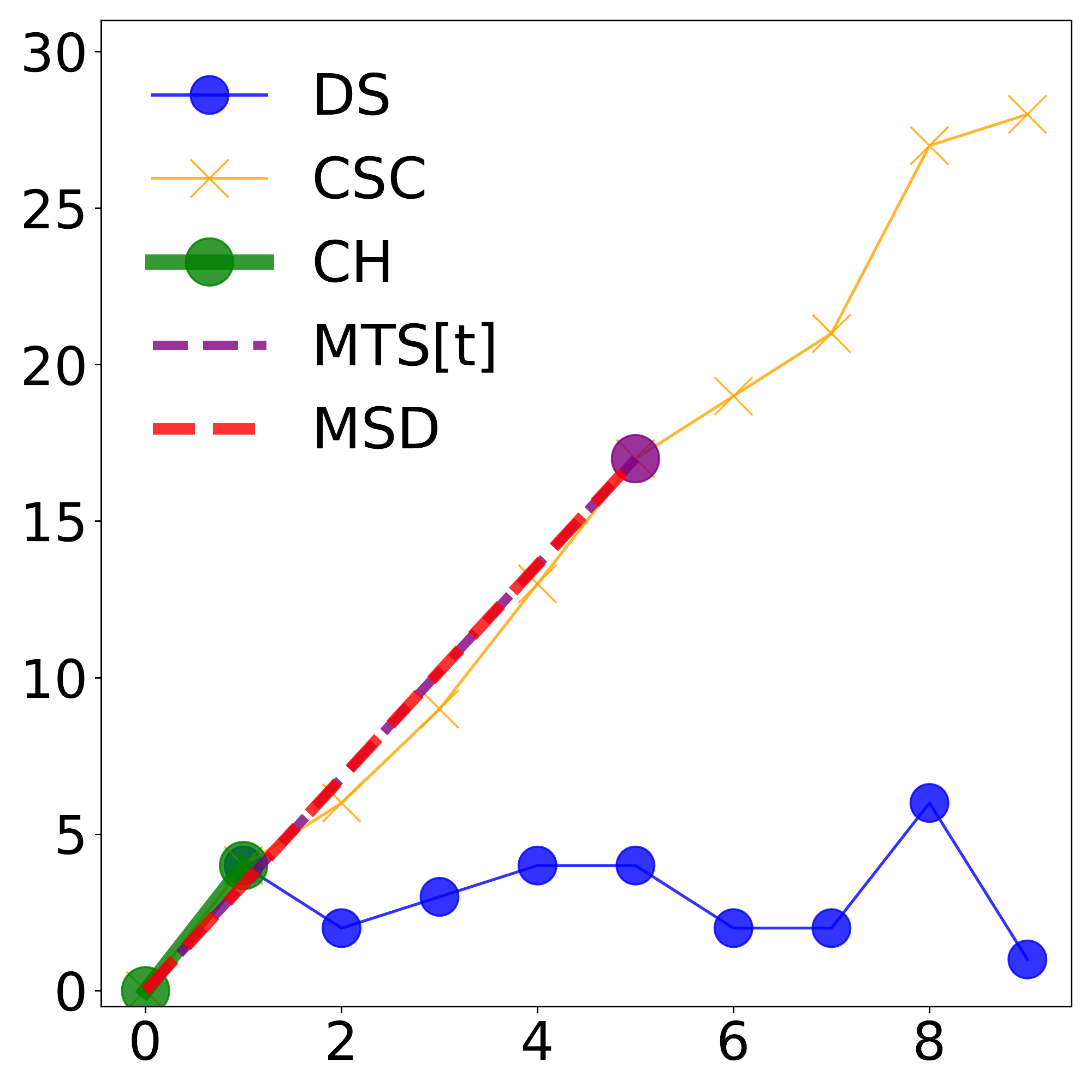}
	}
	\subfigure[$t=6$]{
		\includegraphics[width=2.51cm]{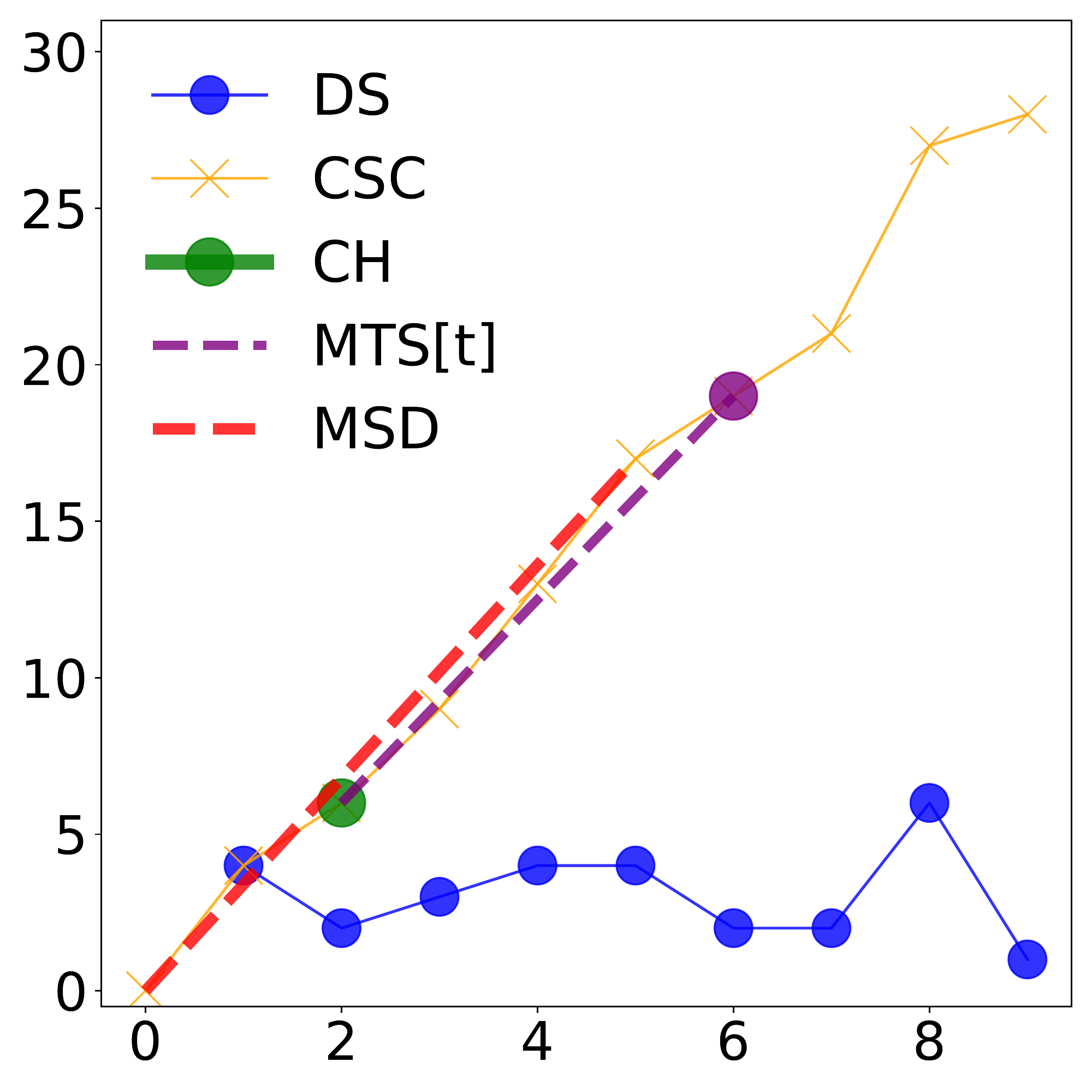}
	}
	\subfigure[$t=7$]{
		\includegraphics[width=2.51cm]{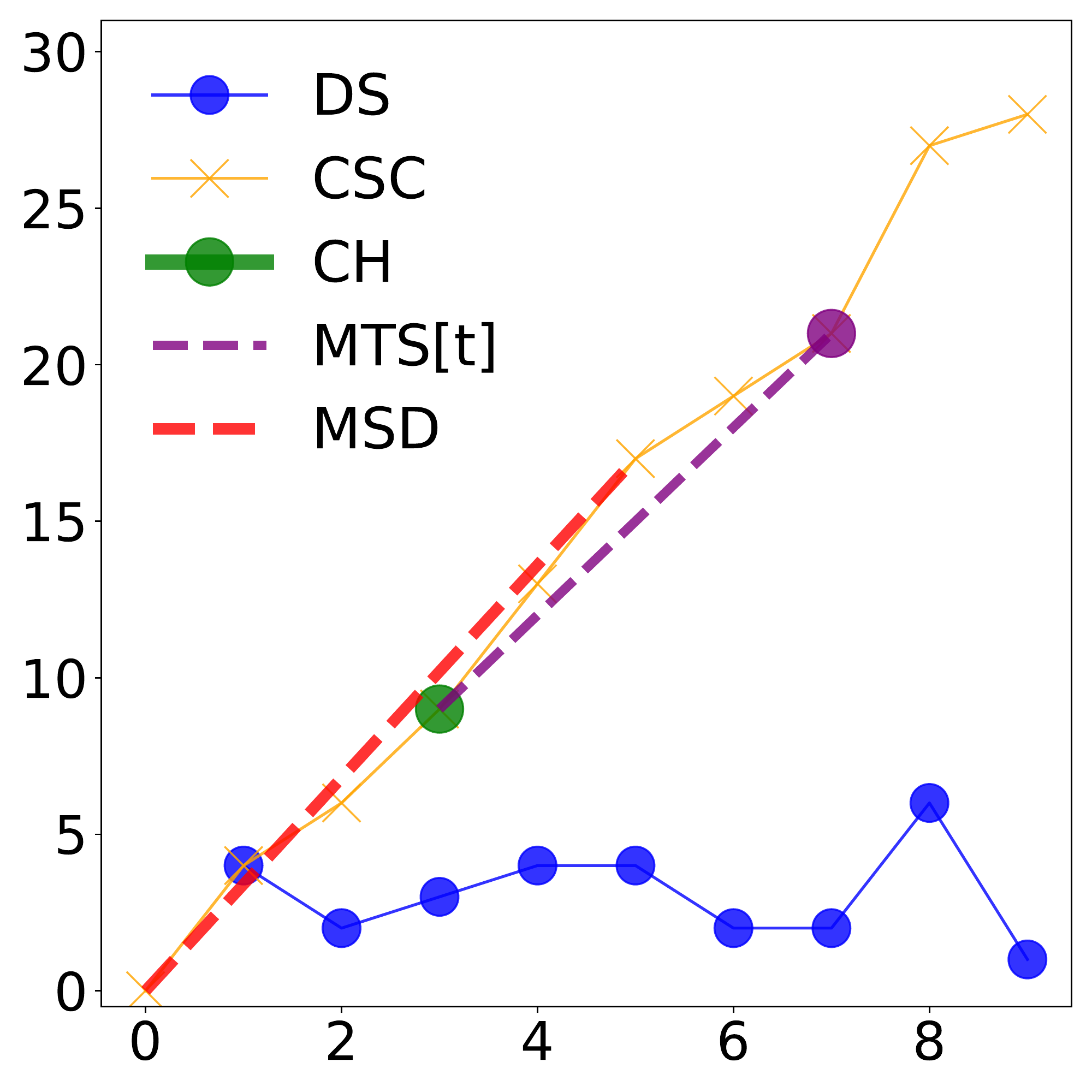}
	}
	\subfigure[$t=8$]{
		\includegraphics[width=2.51cm]{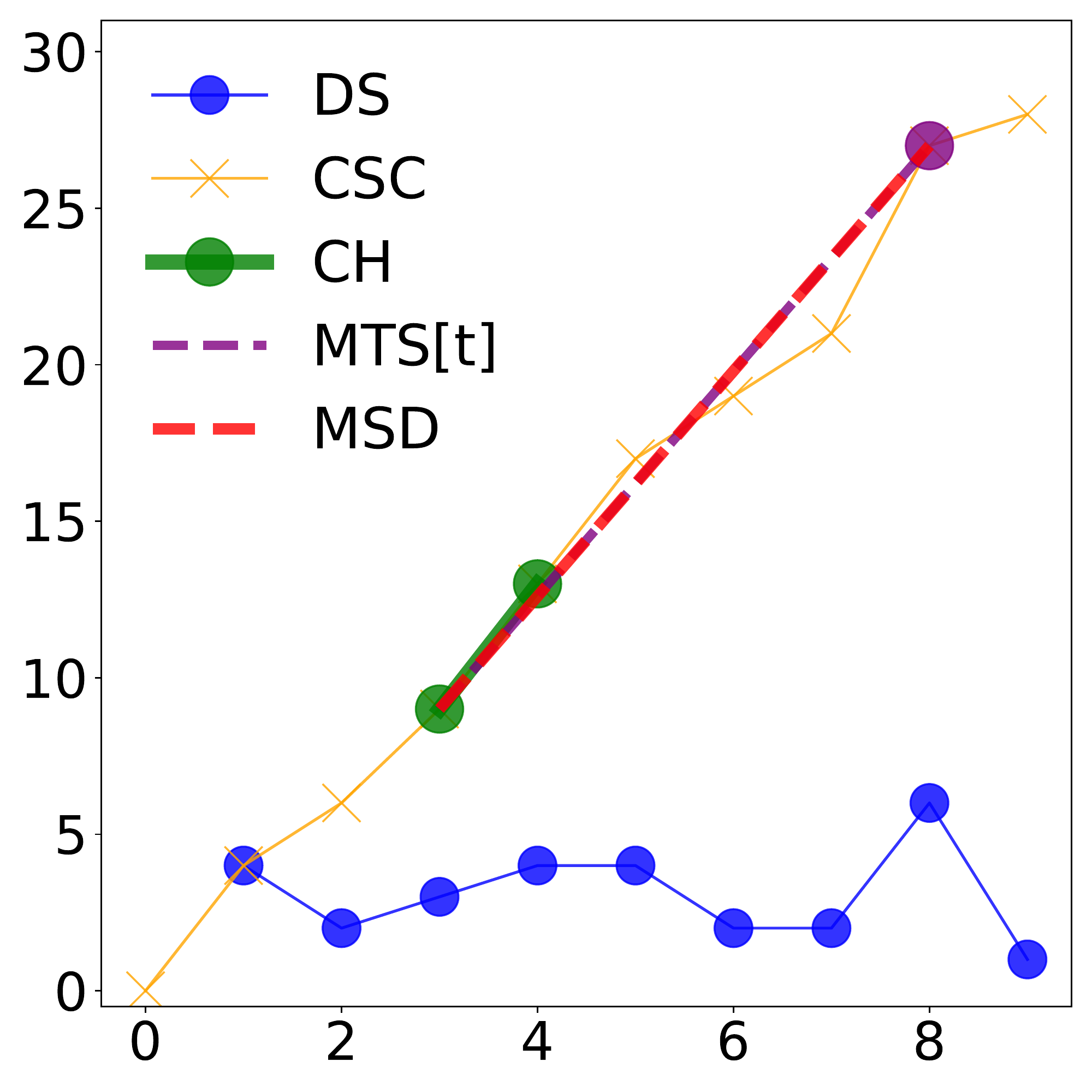}
	}
	\subfigure[$t=9$]{
		\includegraphics[width=2.51cm]{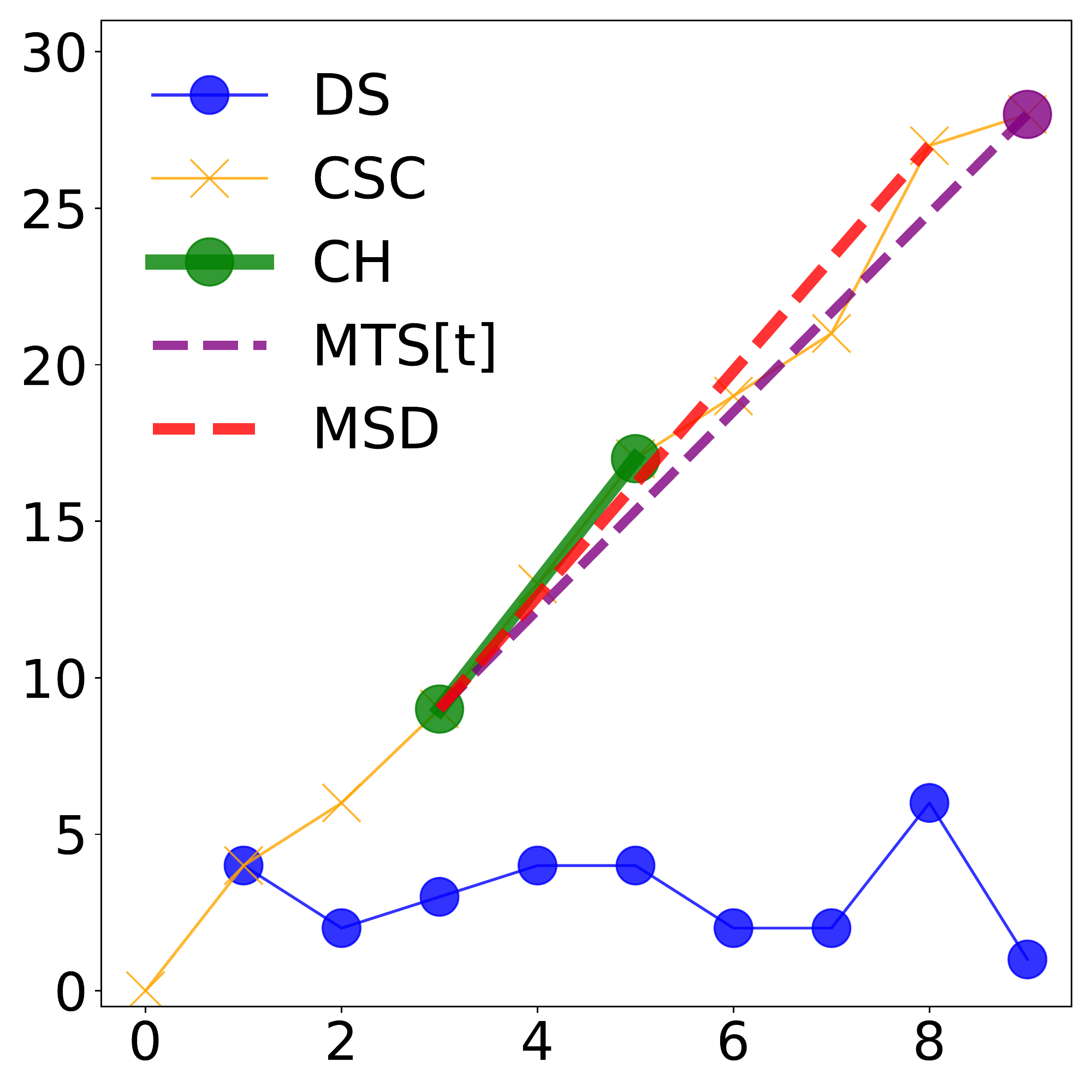}
	}
	\vspace*{-0.2cm}\caption{Running example of computing \lmsdall for a degree sequence of $[4, 2, 3, 4, 4, 2, 2, 6, 1]$ with $l=4$}
	\vspace*{-0.5cm}
	\label{mdc-running}
\end{figure}

\begin{example}
	Fig.~\ref{mdc-running} shows the running example of computing \lmsdall for a degree sequence of $[4, 2, 3, 4, 4, 2, 2, 6, 1]$ with $l=4$. Clearly, $\mathcal{T} =[1 : 9]$, $\mathcal{CSC} = [0, 4, 6, 9, 13, 17, 19, 21, 27, 28]$. According to Corollary~\ref{coro:slop}, the procedure starts at $t=4$ because we need satisfy that the length of the segment is no less than $l$. At this time, there is only one item in \ch. When $t=5$, the $i_e$ index of \ch adds up by $1$ (line 10), but the $i_s$ index is remained $0$ because $\kw{slope}(0,5) = (17-0)/(5-0) = 3.4$ is no larger than $\kw{slope}(0,1) = (4-0)/(1-0) =4.0$ (lines 12). And $\max(\mathcal{MTS}[u])$ is currently $\mtsnospace[5] =3.4$. Next, $t=6$, according to Observation 3.2, the $i_e$ index of \ch reduces by $1$ because $\kw{slope}(1,2) = 2.0$ is no larger than $\kw{slope}(0,1) =4.0$ (lines 8-9). Then, the newly $i_e$ is $1$ and $\chnospace[i_e]$ is assigned by $t-l=2$ (line 10). Now \ch is $[0,2], i_s=0, i_e=1$. In the next step, the $i_s$ index adds up by $1$ because $\kw{slope}(0,6) = 19/6>\kw{slope}(0,2)=6/2$ (line 12). So, the final \ch and $\mtsnospace[6]$ can be shown at Fig.~\ref{mdc-running}(c). Likewise, when $t=7 \ to \ 9$, the \ch will be maintained by the similar processes. It should be noted that when $t=7$, $\kw{slope}(3,8) = 3.6$, which is larger than $\mtsnospace[5]$. Finally, $\mathcal{MSD} = \max(\mathcal{MTS}[u]) =3.6$, which is the density of the $4th$ to $8th$ items $[4,4,2 ,2, 6]$.
	\done
\end{example}

\stitle{Correctness of Algorithm~\ref{alg:2ComputeDensity}. } According to Corollary~\ref{coro:slop}, we need to prove that $(i)$ $\max(\mathcal{MTS}[u])$ is the maximum slope; $(ii)$ the length of corresponding segment is no less than $l$.
For $(i)$, according to Observation $3.3$, lines 11-13 will compute $\mtsnospace[t]$ which will be recorded in $\mathcal{MTS}[u]$, thus the final $\max(\mathcal{MTS}[u])$ is the maximum slope.
For $(ii)$, $t-\chnospace[i_s]\ge l$ because the only assignment code for $\chnospace[i]$ is in line 10, and $t$ will be larger in the next loop, so $t-\chnospace[i] \ge l$ for any $i$. \done

\stitle{Complexity of Algorithm~\ref{alg:2ComputeDensity}. } For a temporal graph $\mathcal{G}$ with $|\mathcal{T}|$ timestamps, 
the time and space complexity of Algorithm~\ref{alg:2ComputeDensity} is $O(|\mathcal{T}|)$ and $O(|\mathcal{T}|)$ respectively.


\begin{proof}
	First, Algorithm~\ref{alg:2ComputeDensity} needs $O(|\mathcal{T}|)$ to compute the collection $\mathcal{CSC}$ (lines 2-5). For each $t$, $i_e$ reduces from $t$ to $i_s$ (lines 8-9), and $i_s$ increases from $l$ to $|\mathcal{T}|$ (lines 11-12). Considering all the loops, the average time complexity of assigning $i_s$ is $O(|\mathcal{T}|)$. Since $i_e$ has the lower bound of $i_s$ in each loop, the average time of assigning $i_e$ is also $O(|\mathcal{T}|)$. Hence, the whole Algorithm~\ref{alg:2ComputeDensity} need $O(|\mathcal{T}|)$ to calculate node $u$'s \lmsdall.
	
	In this algorithm, we store collections of $\mathcal{CSC},\mathcal{CH}$ and $\mathcal{MTS}[u]$ which consume $O(|\mathcal{T}|)$ in total. Therefore, the space complexity of Algorithm~\ref{alg:2ComputeDensity} is $O(|\mathcal{T}|)$.
\end{proof}

\subsection{An improved \kw{MDC+} algorithm} \label{sub:3.3}
Although Algorithm~\ref{alg:1densecore} is efficient in practice, it still has two limitations. (i) It still needs to call the procedure \kw{ComputeMSD} for all nodes in $V_c$ (line 6 in Algorithm~\ref{alg:1densecore}). In the worst case, the time complexity of this process can be near to $|\mathcal{T}|m$. We can observe that if we delete a certain node $u$, the $deg[v]$ of $u$'s neighbor $v$ will reduce, and we can monitor it at once to check whether $deg[v]<\delta$. Once $deg[v]<\delta$, we do not need to call the procedure \kw{ComputeMSD} for $v$ any more. (ii) It still needs to compute all the \lmsdall dynamically for each deletion of the edges. We can observe that in each call of $\kw{ComputeMSD}$, the degree of $u$ reduces only one and $\mathcal{MSD}[u]$ may not change. So, the $\kw{ComputeMSD}$ algorithm clearly results in significant amounts of redundant computations for the iterations for all $t$ from $l$ to $ |\mathcal{T}|$.

To overcome this limitation, we propose an improved algorithm called $\kw{MDC+}$. The striking features of $\kw{MDC+}$ are twofold. On one hand, it needs not to call procedure $\kw{ComputeMSD}$ for each node in advance. Instead, it calculates $\kw{SD}$ of the candidate node on-demand. On the other hand, when deleting a node $u$, $\kw{MDC+}$ does not re-compute $\mathcal{MSD}[w]$ for a neighbor node $w$ of $u$. Instead, $\kw{MDC+}$ dynamically updates the computed $\mathcal{MSD}[w]$ for each node $w$, thus substantially avoiding redundant computations. The detailed description of $\kw{MDC+}$ is shown in Algorithm~\ref{alg:3MaintainDensity}.

\begin{algorithm}[t]\vspace*{-0.5mm}
	\scriptsize
	\caption{$\kw{MDC+}({\mathcal G}, l, \delta)$ }
	\label{alg:3MaintainDensity}
	\KwIn{Temporal graph $\mathcal{G} = (\mathcal{V},\mathcal{E})$, parameters $l$ and $k$}
	\KwOut{\mdcore in $\mathcal{G}$}
	Let $G=(V, E)$ be the de-temporal graph of $ {\mathcal G}$\;
	Let $G_c=(V_c, E_c)$ be the $k$-CORE $(k=\delta)$ of $G$\;
	Let $deg[u]$ be the degree of $u$ in $G_c$\;
	
	${\cal Q} \gets [\emptyset]; D \gets [\emptyset]; \mathcal{MSD} \gets [\emptyset]; \mathcal{MTS} \gets [\emptyset]; \mathcal{DS} \gets [\emptyset]$\;
	\For {$u \in V_c$ in an increasing order by $deg[u]$ } {
		{\bf if} $u \in D$ {\bf then}  {\bf continue}\;
		$(\mathcal{MTS}[u] , \mathcal{DS}[u]) \gets \kw{ComputeMSD}^*({\mathcal G}, l, u, V_c\setminus D)$;
		\qquad  \qquad  \qquad  \qquad   /* all the same to Alg.~\ref{alg:2ComputeDensity} except that it returns $(\mathcal{MTS}[u], \mathcal{DS}[u])$ */\\
		$\mathcal{MSD}[u] \gets \max(\mathcal{MTS}[u])$\;
		{\bf if} $\mathcal{MSD}[u] < \delta$ {\bf then}  \{$\mathcal{Q}.push(u); deg[u] \gets 0;$\}\\
		\While{${\cal Q} \neq \emptyset$}{
			$v \gets {\cal Q}.pop();  D \gets D \cup \{ v\}$\;
			\For{$w\in N_v(G_c)\setminus D $, s.t. $deg[w]\ge \delta$ }{
				$deg[w] \gets deg[w]- 1$\;
				{\bf if }{ $deg[w] < \delta$}{ \bf then} \{${\cal Q}.push(w)$; {\bf continue;}\}
				
				{\bf if} {$\mathcal{MSD}[w]$ is not existed}{ \bf then }{\bf continue}\;
				\For{$ t$, s.t.$(v,w,t) \in \mathcal{E}$}{
				$\mathcal{DS}[w][t] \gets \mathcal{DS}[w][t]-1$\;
				$\mathcal{MSD}[w] \gets \kw{UpdateMSD}(w, t, l,\mathcal{DS}, \mathcal{MTS})$\;
				
				}
				{\bf if }{$\mathcal{MSD}[w] < \delta$ }{\bf then}  \{$\mathcal{Q}.push(w); deg[w] \gets 0;$\}\\
			}
		}
	}
	{\bf return} $\mathcal{G}_{V_c \setminus D}$ \;
	
		\vspace{2mm}
	
	{\bf Procedure} $\kw{UpdateMSD}(w, t, l,\mathcal{DS}, \mathcal{MTS})$ \\
	$\mathcal{CSC}\gets [\emptyset]$; $t_s\gets \max(0,t-2l)$; $t_e\gets \min(t+2l, |\mathcal{T}|)$; $\mathcal{CSC}[0]\gets 0$\;
	\For {$i \gets 0: t_e-t_s$} {
		$\mathcal{CSC}[i+1]$ = $\mathcal{CSC}[i] + \mathcal{DS}[w][t_s+i]$\;	
	}
	\ch$\gets [\emptyset], i_s \gets 0 , i_e \gets -1$\;
	\For{$j \gets l: t_e-t_s+1$}  {	
		\While{$i_s < i_e$ and $\kw{slope}(\chnospace[i_e],j-l,\mathcal{CSC})$ $\leq$ $\kw{slope}(\chnospace[i_e-1],\chnospace[i_e],\mathcal{CSC})$
		} {
			$i_e\gets i_e -1$\;
		}
		$\chnospace[++i_e] \gets t - l$;
		
		\While{$i_s < i_e$ and $\kw{slope}(\chnospace[i_s],j,\mathcal{CSC})$ $\geq$ $\kw{slope}(\chnospace[i_s],\chnospace[i_s+1],\mathcal{CSC})$
		} {
			$i_s \gets i_s +1$\;
		}
		\If{$j\ge t-t_s$}{
			$\mathcal{MTS}[w][j+t_s-l]\gets \kw{slope}(\chnospace[i_s],j,\mathcal{CSC}) $\;
		}
	}
	{\bf return} $\max(\mathcal{MTS}[w])$\;
	
\end{algorithm}

Algorithm~\ref{alg:3MaintainDensity} first computes the $k$-CORE $(k=\delta)$ $G_c$ in the de-temporal graph (line 2).
Next, it explores the nodes in $V_c$ based on an increasing order by the degrees in $G_c$ (line 5). When processing a node $u$, the algorithm first checks whether $u$ has been deleted or not (line 6). If $u$ has not been removed, $\kw{MDC+}$ invokes Algorithm~\ref{alg:2ComputeDensity} to compute $\mathcal{MSD}[u]$ (lines 7-8). It should be noted that the procedure $\kw{ComputeMSD}^*$ is all the same to $\kw{ComputeMSD}$ except that it returns $(\mathcal{MTS}[w], \mathcal{DS}[u])$ (replace line 14 of Algorithm~\ref{alg:2ComputeDensity}).
Next, if $\mathcal{MSD}[u]$ is no larger than $\delta$, $u$ is not an \densenode. Thus, the algorithm pushes $u$ into the queue $\mathcal{Q}$ (line 9). Subsequently, the algorithm iteratively deletes the nodes in $\mathcal{Q}$ (lines 10-19). When removing a node $v$, $\kw{MDC+}$ explores all $v$'s neighbors (line 12). For a neighbor node $w$, \kw{MDC+} first updates the degree of $w$ (line 13), i.e., $deg[w]$. If the updated degree is less than $\delta$, $w$ is not an \densenode (line 14). In this case, the algorithm pushes it into $\mathcal{Q}$ and continues to process the next node in $\mathcal{Q}$ (the degree pruning rule).
Otherwise, if $\mathcal{MSD}[w]$ has already been computed, the algorithm invokes $\kw{UpdateMSD}$ to update $\mathcal{MSD}[w]$ (line 19). If the updated $\mathcal{MSD}[w]$ is less than $\delta$, $w$ is not an \densenode and the algorithm pushes $w$ into $\mathcal{Q}$ (line 19). We can see that if $\mathcal{MSD}[w]$ has not been computed yet, the algorithm does not need to update $\mathcal{MSD}[w]$. In this case, $\mathcal{MSD}(w)$ will be calculated in the next iterations of line 7. It also should be noted that the $\mathcal{DS}$ is always updated, because if the nodes have been deleted by the degree constraint, $\mathcal{DS}$ will be newest in $\mathcal{G}_{V_c \setminus D}$ (line 7), otherwise if the nodes have been deleted by the $(l,\delta)$-dense constraint, $\mathcal{DS}$ will be updated in line 17. Finally, \kw{MDC+} outputs $\mathcal{G}_{V_c \setminus D}$ as the result.

In the following, we introduce the \kw{UpdateMSD} procedure. Suppose that before updating, the \lmsdall of $w$ exists from time $t_s$ to $t_e$. At this time, if $\mathcal{DS}[w][t']$ reduces by $1$, then there exist three situations:
$(i) \ t' <t_s; (ii) \ t_s \le t' \le t_e; (iii) \ t'>t_e$.

\begin{figure}[t]\vspace*{-0.5cm}
	\centering
	\subfigure[$t'=1$]{
		\includegraphics[width=2.51cm]{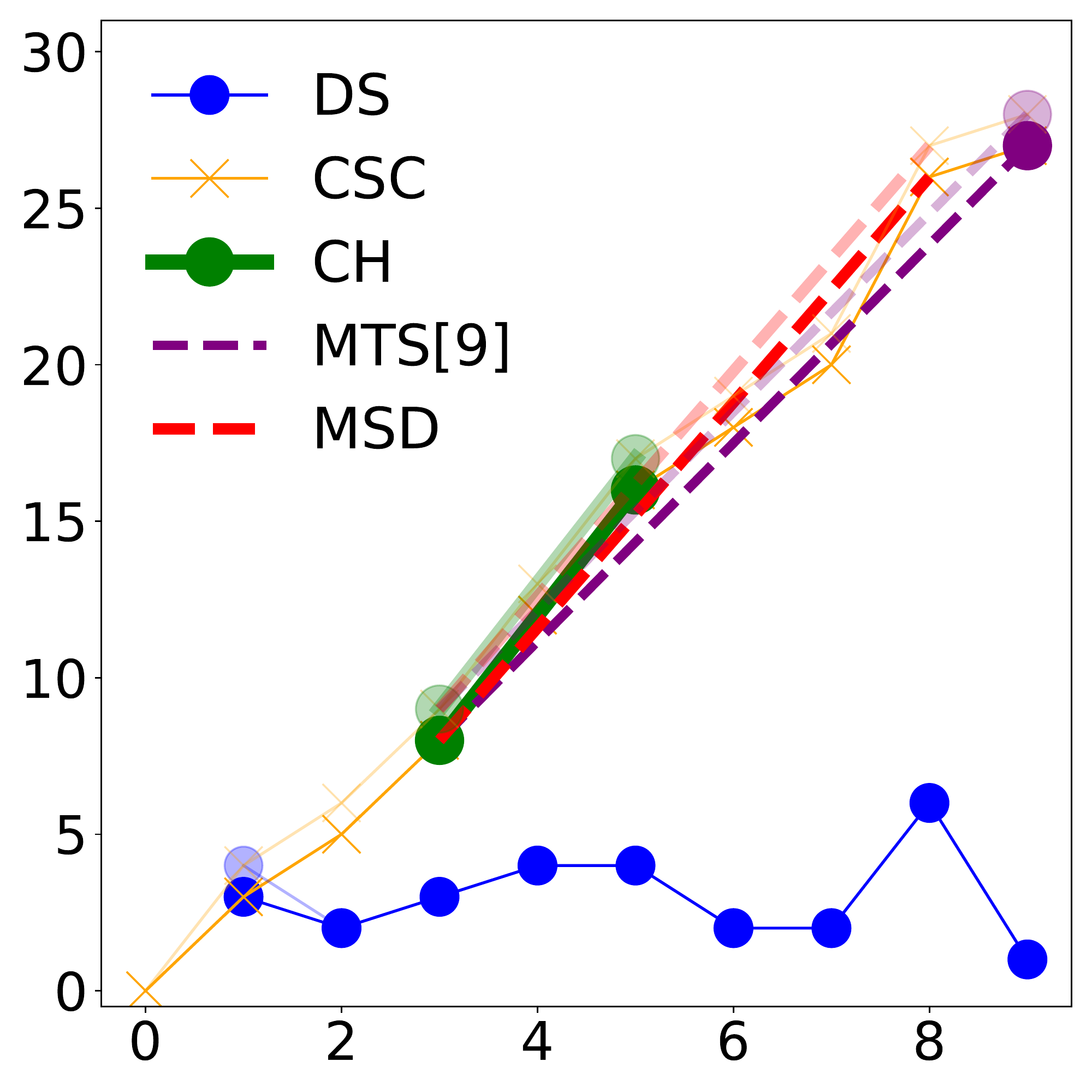}
	}
	\subfigure[$t'=4$]{
		\includegraphics[width=2.51cm]{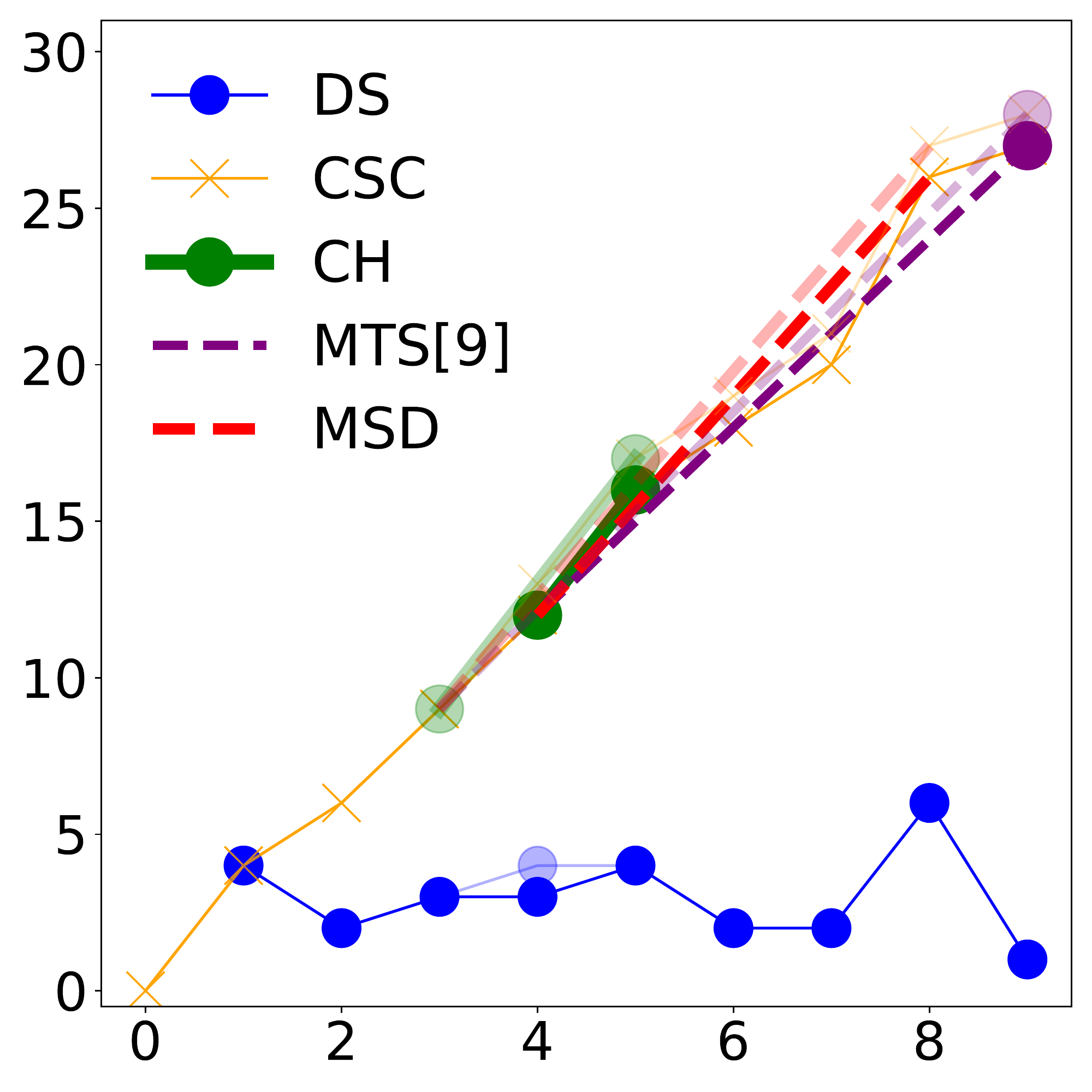}
	}
	\subfigure[$t'=9$]{
		\includegraphics[width=2.51cm]{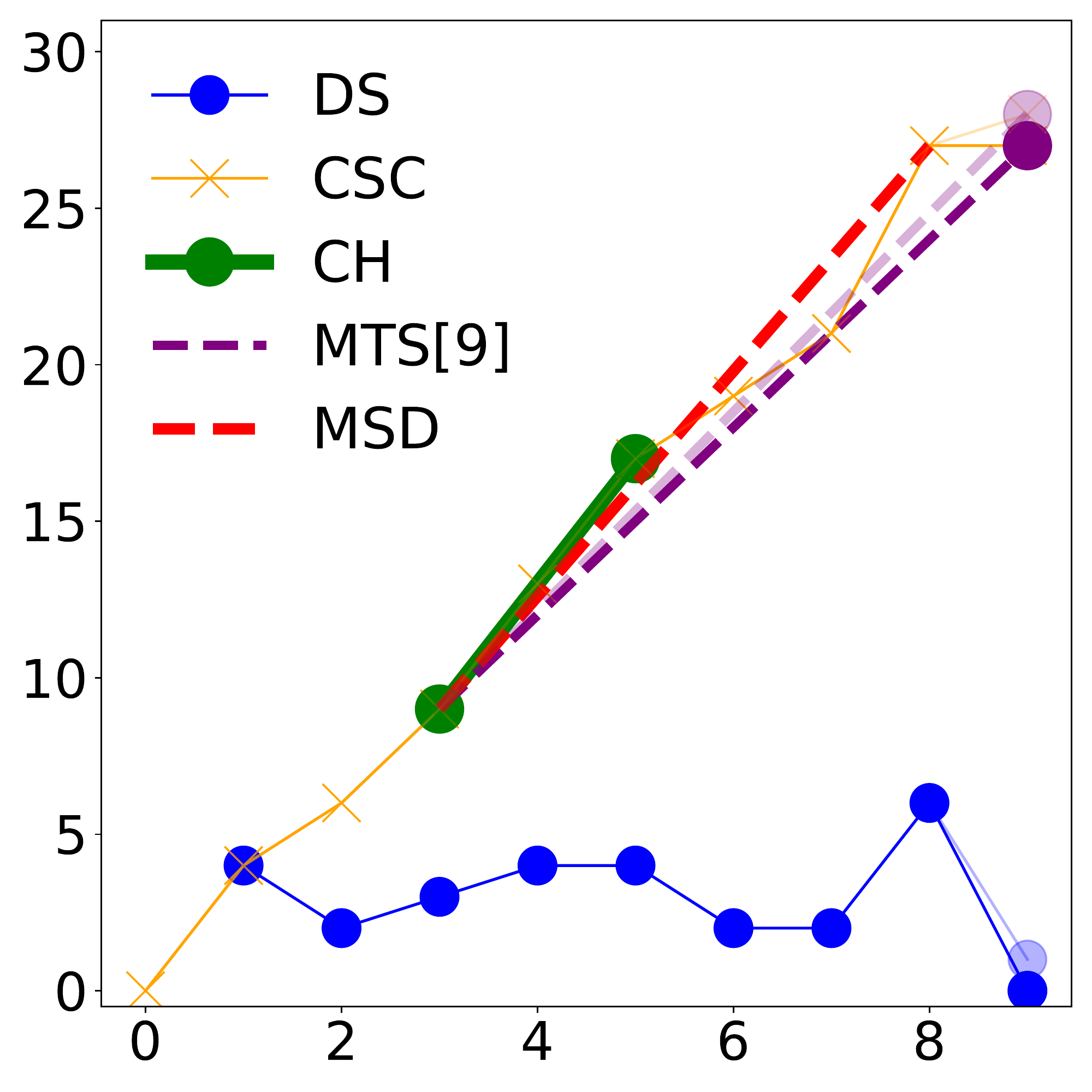}
	}
	\vspace*{-0.2cm}\caption{Updated situations of Fig.~\ref{mdc-running}(f) after $\mathcal{DS}[w][t']$ reduces by 1}
	\vspace*{-0.5cm}
	\label{fig:mdc+}
\end{figure}

\begin{example}
	Fig.~\ref{fig:mdc+} shows the three situations of Fig.~\ref{mdc-running}(f) after $\mathcal{DS}[w][t']$ reduces by 1.
	We can see that the current \lmsdall of $w$ exists from $t_s = 3$ to $t_e = 8$.
	As shown in Fig.~\ref{fig:mdc+}(a) in which $t'<t_s$ and Fig.~\ref{fig:mdc+}(c) in which $t'>t_s$, we can see that the $\mathcal{MSD}[w]$ will not change. We can find that the parts of curve with the maximum slop are all moved down. Also, it can be proved easily from the definition of $l$-segment density that $\mathcal{MSD}[w]$ will not change. Howerver, in Fig.~\ref{fig:mdc+}(b), $\mathcal{DS}[w][4]$ reduces by 1 and the new sequence is $[3, 2, 3, 3, 4, 2, 2, 6, 1]$. The \lmsdall is 3.5, which is the density of the $5th$ to $8th$ items $[4,2,2,6]$. So only when $t_s \le t' \le t_e$ should we update the $\mathcal{MSD}$.
	\done
\end{example}

Below we will introduce that it only needs to consider $\mathcal{DS}$ from time $t-2l$ to time $t+2l$ to update $\mathcal{MSD}$. We first define a concept, \mtsinospace$[j]$, which is a maximum $j$-truncated $l$-slope of considering only $2l$ length of the curve $\mathcal{CSC}$.

\vspace*{0.1cm}
\begin{definition}[maximum $j$-truncated $l$-slope of $2l$-length]
	\label{def:mts*}
	Given a curve $\mathcal{CSC}$ of node $u$ by Definition~\ref{def:7csc}, a truncated time $j\in[l:|\mathcal{T}|]$, the maximum $i$-lower $j$-truncated $l$-slope of $2l$-length
	\mtsinospace$[j] = \{\max(\kw{slope}(i,j)) \| i = [j-2l,j-l]\}$.
\end{definition}
\vspace*{0.1cm}

However, \mtsinospace$[j]$ is the maximum slope which only considers $\mtsnospace[j]$ with the slope ends at $j$ and starts in $[j-2l:j-l]$. Furthermore, it holds the property below.

\begin{lemma}
	\label{lemma:2l}
	Given a curve $\mathcal{CSC}$ of $u$, \lmsdnospace$(u,\mathcal{G}_C) = \max($\mtsinospace$)$ holds.
\end{lemma}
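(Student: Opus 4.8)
The plan is to sandwich \lmsdnospace$(u,\mathcal{G}_C)$ between the two sides of the claimed identity. One inequality is essentially free: by Definition~\ref{def:mts} the quantity $\mtsnospace[j]$ is a maximum of $\kw{slope}(i,j)$ over all starting indices $i\in[0,j-l]$, while by Definition~\ref{def:mts*} the quantity $\mtsinospace[j]$ is the same maximum restricted to $i\in[j-2l,j-l]$, a subset of that range. Hence $\mtsinospace[j]\le\mtsnospace[j]$ for every $j$, so $\max($\mtsinospace$)\le\max(\mtsnospace)=$\lmsdnospace$(u,\mathcal{G}_C)$ by Corollary~\ref{coro:slop}. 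The content of the lemma is therefore the reverse inequality $\max($\mtsinospace$)\ge$\lmsdnospace$(u,\mathcal{G}_C)$.

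For the reverse direction I would first fix a time interval $[t_s,t_e]$ with $t_e-t_s+1\ge l$ whose $l$-segment density equals \lmsdnospace$(u,\mathcal{G}_C)$; such an interval exists because there are only finitely many candidate intervals. The key auxiliary fact I would then prove is an averaging claim: for any block of consecutive degrees of length $L\ge l$, there is a contiguous sub-block whose length lies in $[l,2l]$ and whose average is at least the average of the whole block. Granting this claim, apply it to $[t_s,t_e]$: the resulting sub-block $[t_p,t_q]\subseteq[t_s,t_e]$ has length in $[l,2l]$ and, by Lemma~\ref{lemma:slop}, its density equals $\kw{slope}(t_p,t_q)$, which is one of the slopes counted in $\mtsinospace[t_q]$. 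Since that density is at least \lmsdnospace$(u,\mathcal{G}_C)$, we get $\max($\mtsinospace$)\ge$\lmsdnospace$(u,\mathcal{G}_C)$, and combined with the previous paragraph the identity follows.

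The averaging claim I would establish by induction on $L$. If $l\le L\le 2l$, the whole block already has length in $[l,2l]$ and we are done. If $L>2l$, split the block into a prefix of length exactly $l$ and a suffix of length $L-l$; since $L>2l$ implies $L-l>l$, both pieces are blocks of length $\ge l$, and the average of the whole block is a convex combination of the two piece-averages, so at least one piece has average $\ge$ that of the whole block. If the winning piece is the prefix, its length is $l\in[l,2l]$ and we stop; if it is the suffix, its length $L-l$ is strictly smaller than $L$ but still $\ge l$, so we apply the induction hypothesis to it. The recursion terminates and returns the desired sub-block. (Non-negativity of the degree sequence is not even needed here, only convexity of the mean.)

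The main obstacle I anticipate is the boundary bookkeeping rather than the idea: I must make the length window in the split match the half-open/closed window $[l,2l]$ used in Definition~\ref{def:mts*} exactly (checking in particular the case $L=2l+1$, where the recursion immediately hits the base case), and I must handle small $j$, where the nominal index range $[j-2l,j-l]$ would run below $0$, by arguing that every length-$[l,2l]$ sub-interval of $\mathcal{CSC}$ that actually lies inside $[1,|\mathcal{T}|]$ is still accounted for by some $\mtsinospace[j]$. Once these endpoints are pinned down, the remainder is a routine convexity-of-means computation.
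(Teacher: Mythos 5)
Your proposal is correct and follows essentially the same route as the paper: both rest on the convexity-of-means observation that splitting a segment of length greater than $2l$ into a length-$l$ piece and the remainder yields a piece of length at least $l$ whose average is no smaller, which the paper phrases as the one-step comparison $\mtsnospace[t]\le\mtsnospace[t-l]$ and you phrase as an explicit induction producing a sub-block of length in $[l,2l]$. Your version is somewhat cleaner in isolating the averaging claim and in flagging the boundary cases ($j<2l$) that the paper handles only implicitly.
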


\begin{proof}
	If $l<t<2l$, then we can compute $\mtsnospace[t]$ by considering time from $l$ to $t$, which satisfies that $t-2l<l$, so $\mtsnospace[t]=$\mtsinospace$[t]$.
	If $t>2l$, suppose that the start time of $\mtsnospace[t]$ is $t^*$ and it holds $t^*<t-2l$  .  Since  $\mtsnospace[t]$ is maximum, there holds $\kw{slope}(t^*,t) \ge \kw{slope}(t-l,t)$. Thus, $\frac{\Sigma_{i=t^*}^{t}\kw{DS}[u][i]}{t-t^*+1}\ge \frac{\Sigma_{i=t-l+1}^{t}\kw{DS}[u][i]}{l}   \Rightarrow \frac{\Sigma_{i=t^*}^{t-l}\kw{DS}[u][i]}{t-l-t^*+1}\ge \frac{\Sigma_{i=t^*}^{t}\kw{DS}[u][i]}{t-t^*+1}  $.
	Based on Corollary~\ref{coro:slop}, if $t^*<t-2l$, then $ \frac{\Sigma_{i=t^*}^{t-l}\kw{DS}[u][i]}{t-l-t^*+1}  \le\mtsnospace[t-l]$. Thus, we can have the result that if $t^*<t-2l$, then $\mtsnospace[t] \le \mtsnospace[t-1]$. Therefore, ended at time $t$,  $\mtsnospace[t-l]$ will be the most possible final $\max(\mtsnospace)$.
	In conclude, we can check the maximal $l$-slope \mtsinospace$[t]$ which ends at time $t$ and starts from $t-2l$ to $t-l$. Then, \mtsinospace$[t]$ with $t \in [l :|\mathcal{T}|]$ can be denoted by \mtsinospace, which satisfies that $\max($\mtsinospace$) = \max(\mtsnospace)$.
\end{proof}

\begin{corollary}
	\label{coro:2l}
	Given stored \mts of node $u$, we can have \lmsdnospace$(u,\mathcal{G}_C) = \max(\mtsnospace)$. If $\mathcal{DS}[u]$ reduces by 1 at time $t$, we only need to update $\mtsnospace[t']= $ \mtsinospace$[t']$ with $t'\in [t : \ t+2l]$  to get the updated \lmsdnospace$(u,\mathcal{G}_C)$.
	
\end{corollary}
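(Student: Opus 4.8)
The first assertion, \lmsdnospace$(u,\mathcal{G}_C)=\max(\mtsnospace)$, is nothing more than Corollary~\ref{coro:slop}, so all the content lies in the incremental claim. My plan is to first pin down exactly which entries of the array $\mtsnospace$ can be affected when a single value $\mathcal{DS}[u][t]$ drops by one, and then to argue that recomputing precisely those entries still leaves the maximum of the array equal to the updated \lmsdnospace.

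For the first part I would reason on the cumulative-sum curve. Lowering $\mathcal{DS}[u][t]$ by one decreases $\mathcal{CSC}[i]$ by one for every $i\ge t$ and leaves it unchanged for $i<t$, so by Lemma~\ref{lemma:slop} the average degree over an interval $[t_s,t_e]$ changes if and only if $t_s\le t\le t_e$, in which case it strictly decreases. Hence $\mtsnospace[t']$ is untouched for $t'<t$ and can only decrease for $t'\ge t$. I would then invoke Lemma~\ref{lemma:2l}: the updated \lmsdnospace\ is attained by a segment of length at most $2l$, and by Definition~\ref{def:mts*} the best such segment ending at a time $t'$ is recorded by \mtsinospace$[t']$. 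A segment of length at most $2l$ ending at $t'$ begins no earlier than $t'-2l+1$, so it can contain $t$ only when $t'\le t+2l-1$. Therefore the only entries \mtsinospace$[t']$ that can change are those with $t'\in[t,t+2l]$, and outside that window each \mtsinospace$[t']$ keeps its stored value.

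Given this, the update rule is: for $t'\in[t,t+2l]$ overwrite $\mtsnospace[t']$ by \mtsinospace$[t']$ recomputed on the decreased sequence, keep all other entries, and return the maximum; I would then need to check that this maximum equals the updated \lmsdnospace. The ``$\ge$'' direction uses Lemma~\ref{lemma:2l}: the updated \lmsdnospace\ equals \mtsinospace$[t^*]$ for some $t^*$, and if $t^*\in[t,t+2l]$ it is one of the recomputed entries, while if $t^*\notin[t,t+2l]$ then \mtsinospace$[t^*]$ was not touched by the decrease and is dominated by the retained value $\mtsnospace[t^*]$. For the ``$\le$'' direction, each recomputed entry is the slope of a genuine segment of length between $l$ and $2l$ of the updated sequence, hence at most the updated \lmsdnospace, and each retained entry with $t'<t$ is literally unchanged, so the only thing left to control is the retained entries with $t'>t+2l$.

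That last point is where I expect the real work to lie. A retained entry with $t'>t+2l$ still carries the \emph{old} value $\mtsnospace[t']$, and if its witnessing segment contains $t$ this value may strictly exceed the new $\mtsnospace[t']$, so one has to show it nevertheless cannot exceed the updated \lmsdnospace. The natural device is the domination argument already used inside the proof of Lemma~\ref{lemma:2l}: a witness of $\mtsnospace[t']$ that contains $t$ with $t'>t+2l$ necessarily has length greater than $2l$, so its slope is no larger than that of a shorter segment ending at $t'-l$; iterating in steps of $l$, one descends until reaching a segment that either avoids $t$ --- hence is unchanged and bounded by the updated \lmsdnospace\ --- or ends inside the recomputed window $[t,t+2l]$. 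Carrying this chain down cleanly, in particular handling the boundary cases where the descending segment still straddles $t$, is the delicate step, and the one on which I would concentrate the proof.
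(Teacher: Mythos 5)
You have put your finger on exactly the right spot: the only delicate entries are the retained ones with $t'>t+2l$, whose stored value $\mathcal{MTS}[t']$ is stale whenever its witnessing segment contains $t$. The paper's own proof does not handle these entries either --- its case $(iii)$ simply asserts $\max(\mathcal{MTS}^*[t:|\mathcal{T}|]) = \max([\mathcal{MTS}^*_{2l}[t:t+2l]:\mathcal{MTS}[t+2l+1:|\mathcal{T}|]])$ without justification, and cases $(i)$--$(ii)$ likewise never bound the stale suffix. Unfortunately, the repair you sketch does not close the gap: the descent in steps of $l$ bounds the stale old value $\mathcal{MTS}[t']$ by old values $\mathcal{MTS}[t'-l],\mathcal{MTS}[t'-2l],\dots$, and when the chain lands at an index inside $[t,t+2l]$ you are comparing against an \emph{old} window entry --- precisely the value that the update discards and replaces by the smaller recomputed $\mathcal{MTS}_{2l}$ --- so no bound by the \emph{updated} \lmsdnospace follows. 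The branch you flagged as ``delicate'' is in fact fatal, because the statement is false as written. Take $l=2$, old degree sequence $[2,1,1,1,1,1]$, and a decrement at $t=1$, giving $[1,1,1,1,1,1]$. The updated \lmsdnospace equals $1$ (every window has average degree $1$), and the recomputed entries $\mathcal{MTS}_{2l}[2],\dots,\mathcal{MTS}_{2l}[5]$ all equal $1$; but the retained entry $\mathcal{MTS}[6]$, whose old witness is the whole interval, still stores $7/6$, so the maximum of the patched array is $7/6\neq 1$.

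What is actually provable --- combining your correct ``$\ge$'' direction with a length-dilution bound --- is a sandwich: after one decrement the patched maximum lies in $[\kw{MSD},\, \kw{MSD}+\tfrac{1}{2l+2}]$, because a stale witness containing $t$ and ending after $t+2l$ has length at least $2l+2$, so its old slope exceeds its (still valid) new slope by at most $1/(2l+2)$; the example above attains this bound. Hence the procedure returns a controlled over-approximation of \lmsdnospace rather than \lmsdnospace itself, and exact equality would require also refreshing (or separately capping) the entries beyond $t+2l$, or restating the corollary as an approximation guarantee. Your instinct about where the difficulty lies was exactly right; the correct conclusion is that it cannot be argued away.
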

	
\begin{proof}
	Suppose that after $\mathcal{DS}[u]$ reduces, the exactly maximum $j$-truncated $l$-slope and the one with $2l$-length are $\mts^*$ and $\mtsistar$, respectively. If \mts is updated by \mtsinospace$[t']$ with $t'\in [t : t+2l]$, the new set $\overline{\mts} = [\mts[0:t-1]:\mtsistar[t:t+2l]: \mts[t+2l+1:|\mathcal{T}|]]$. Suppose that the index of maximum one in $\mts^*$ is $t^*$,  there hold three situations: $(i) \ t^*<t$, then $\mts[0:t-1] = \mts^*[0:t-1]$, so $\max(\overline{\mts}) = \max(\mts^*)$; $(ii) \ t<t^*<t+2l$, \mtsinospace$[t:t+2l] = \mtsistar[t:t+2l]$ so $\max(\overline{\mts}) = \max(\mtsistar)$;
	$(iii) \ t^*>t+2l$, as we can have $\mtsistar[i] \le \mts^*[i]$ for each $i$, then $ \max(\mts^*[t:|\mathcal{T}|]) = \max([\mtsistar[t:t+2l]: \mts[t+2l+1:|\mathcal{T}|] ])$ so $\max(\overline{\mts}) = \max(\mts^*)$. According to Lemma~\ref{lemma:2l}, we can have $\max(\mts^*) = \max(\mtsistar) = \max(\overline{\mts})$.
\end{proof}

\begin{corollary}
	\label{coro:-2l}
	 If $\mathcal{DS}[u]$ reduces by 1 at time $t$, we only need to use $\mathcal{DS}[u][t']$ with $t' \in [\max(0,t-2l): \min(t+2l, |\mathcal{T}|)]$ to update $\mathcal{MTS}[u]$ and get the updated \lmsdnospace$(u,\mathcal{G}_C)$.
\end{corollary}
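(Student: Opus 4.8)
The plan is to derive Corollary~\ref{coro:-2l} as an immediate consequence of Corollary~\ref{coro:2l}. Corollary~\ref{coro:2l} already pins down \emph{which} entries of $\mathcal{MTS}[u]$ change after $\mathcal{DS}[u]$ drops by one at time $t$; what remains is only to trace \emph{which} entries of $\mathcal{DS}[u]$ the recomputation of those $\mathcal{MTS}[u]$ entries actually reads.

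First I would invoke Corollary~\ref{coro:2l}: after the decrement, it suffices to overwrite $\mtsnospace[t']$ by \mtsinospace$[t']$ for every $t'\in[t:t+2l]$ (clipped to the legal range $[l:|\mathcal{T}|]$) and to leave all other entries of $\mathcal{MTS}[u]$ unchanged; the maximum of the resulting array then equals the updated \lmsdnospace$(u,\mathcal{G}_C)$. Hence it is enough to show that each such \mtsinospace$[t']$ can be evaluated from $\mathcal{DS}[u][t'']$ with $t''$ ranging only over $[\max(0,t-2l):\min(t+2l,|\mathcal{T}|)]$.

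Next I would unfold Definition~\ref{def:mts*}: \mtsinospace$[t']=\max\{\kw{slope}(i,t')\mid i\in[t'-2l:t'-l]\}$, and by Lemma~\ref{lemma:slop} each $\kw{slope}(i,t')$ equals the average of $\mathcal{DS}[u]$ over the window $[i:t']$, i.e.\ it is determined by $\mathcal{DS}[u][t'']$ for $t''\in[i:t']$ only (equivalently, by the single difference $\mathcal{CSC}[t']-\mathcal{CSC}[i-1]$, which a fresh local cumulative sum anchored at the left edge of the window recovers without reading any earlier entry). Taking the union of the windows $[i:t']$ over all admissible pairs $(i,t')$ with $t'\in[t:t+2l]$ and $i\in[t'-2l:t'-l]$, the smallest index that occurs is $t-2l$ (at $t'=t,\,i=t-2l$) and the largest is $t+2l$ (at $t'=t+2l$); intersecting with the legal index set gives exactly $[\max(0,t-2l):\min(t+2l,|\mathcal{T}|)]$. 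Rebuilding a local cumulative sum curve over this window and replaying the convex hull sweep — precisely what \kw{UpdateMSD} does — therefore recomputes every changed entry of $\mathcal{MTS}[u]$, and $\max(\mathcal{MTS}[u])$ is then the new \lmsdnospace$(u,\mathcal{G}_C)$.

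The only delicate part is boundary bookkeeping: checking that the clipping $\max(0,\cdot)$ and $\min(\cdot,|\mathcal{T}|)$ agrees with the clipping already implicit in the set of recomputed $\mtsnospace$ indices, and that the loop inside \kw{UpdateMSD} (which runs $j$ over the local window and only records a value once $j$ is large enough, via the $j-l$ truncation) writes into $\mathcal{MTS}[u]$ at exactly the positions $t'\in[t:t+2l]$ and nowhere else. This is routine index arithmetic; the mathematical substance is entirely supplied by Corollary~\ref{coro:2l}, and no new inequality has to be proved.
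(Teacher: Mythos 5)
Your proposal is correct and follows essentially the same route as the paper: it first invokes Corollary~\ref{coro:2l} to restrict the set of $\mtsnospace$ entries that can change to $t'\in[t:t+2l]$, then unfolds Definition~\ref{def:mts*} to observe that each such entry reads only $\mathcal{DS}[u]$ values in a window reaching back to $t'-2l$, whose union (clipped to the legal range) is exactly $[\max(0,t-2l):\min(t+2l,|\mathcal{T}|)]$. The paper's own proof is a terser version of the same two steps, so no substantive difference remains.
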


\begin{proof}
	According to Corollary~\ref{coro:2l}, to get the updated \lmsdnospace$(u,\mathcal{G}_C)$, we only need to update $\mtsnospace[t']= $ \mtsinospace$[t']$ with $t'\in [t : \ t+2l]$. Based on Definition~\ref{def:mts*}, we need build $\mathcal{CSC}$ in $[t-2l:t]$ to compute $\mtsnospace[t]$. In conclude, we only need to use $\mathcal{DS}[u][t']$ with $t' \in [\max(0,t-2l): \min(t+2l, |\mathcal{T}|)]$ to get the updated \lmsdnospace$(u,\mathcal{G}_C)$.
\end{proof}

According to the above corollaries, the $\kw{UpdateMSD}$ procedure first initializes $t_s$ as the left side of the considered time interval, $t_e$ as the right side and $\mathcal{CSC}$ based on Definition~\ref{def:7csc} (lines 22-24). The following step is aimed at computing all the \mtsinospace$[j]$ which ends at time $j$ and starts from time $j-2l$ to $j-l$. The following process is much same as that in Algorithm~\ref{alg:2ComputeDensity} (lines 27-31). Note that, we use $\mathcal{MTS}[w][j]$ to record \mtsinospace$[j]$ of node $w$ and it should be updated only when $j \ge t-t_s$ (line 32). After all the $\mathcal{MTS}[w][j]$ with $j$ from $t$ to $t_s$ have been maintained, the procedure returns $\max(\mathcal{MTS}[w])$ as the updated $\mathcal{MSD}[w]$ (line 34).

\stitle{Correctness of Algorithm~\ref{alg:3MaintainDensity}. } We need to prove that $(i)$ $\mathcal{DS}$ is correctly updated; $(ii)$ the updated $\max(\mathcal{MTS})$ is always the maximum slope; $(iii)$ all the remained nodes in $V_c \setminus D$ has a \lmsdall no less than $\delta$.
For $(i)$, in line 7, $\mathcal{DS}$ is computed by considering the current remained nodes $V_c\setminus D$, thus it is the current exact one; in line 17, we can find that $\mathcal{DS}$ is updated for each deletion of temporal edges, unless the considering node $w$ has been popped into $\mathcal{Q}$. For $(ii)$, based on Corollary~\ref{coro:-2l}, in lines 16-18, each deletion of temporal edge $(v,w,t)$ has been considered so $\max(\mathcal{MTS})$ is always the exact answer. For $(iii)$, we can see that each node need to be checked (line 5) whether to have a \lmsdall (line 9) unless it has been deleted (line 6), thus the returned $\mathcal{G}_{V_c \setminus D}$ must be \mdc.
\done

\begin{lemma}
	\label{lemma:updatemsd}
	For a temporal graph G with $|\mathcal{T}|$ timestamps, procedure $\kw{UpdateMSD}$ need $O(l)$ to maintain the \lmsdall.
\end{lemma}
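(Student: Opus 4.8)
The plan is to bound the running time of the \kw{UpdateMSD} procedure by carefully accounting for the three loops it contains, using the fact that the time window $[t_s,t_e]$ has length $O(l)$. First I would observe that, by construction, $t_s=\max(0,t-2l)$ and $t_e=\min(t+2l,|\mathcal{T}|)$, so $t_e-t_s\le 4l$. The first loop (lines 23--24) builds the partial cumulative sum curve $\mathcal{CSC}$ over exactly $t_e-t_s+1=O(l)$ entries, each in constant time, so this phase costs $O(l)$. The return statement in line 34 asks for $\max(\mathcal{MTS}[w])$; here I would note that only the $O(l)$ entries $\mathcal{MTS}[w][j]$ with $j\in[t:t+2l]$ were touched, and the remaining entries are unchanged — so if the maximum over the untouched part is cached alongside $\mathcal{MTS}[w]$, this is an $O(l)$ operation (taking the max of the cached value and the $O(l)$ freshly-written entries). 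This is the one implementation detail that needs to be spelled out to make the bound tight.

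The heart of the argument is the double-\textbf{while} structure inside the \textbf{for} loop over $j$ (lines 27--31), which is exactly the convex-hull maintenance of Algorithm~\ref{alg:2ComputeDensity} restricted to a window of length $O(l)$. I would reuse the amortization argument already given in the complexity proof of Algorithm~\ref{alg:2ComputeDensity}: the index $i_e$ is incremented at most once per iteration of the \textbf{for} loop (line 30) and decremented in the inner \textbf{while} (line 28), so the total number of decrements over all iterations is bounded by the total number of increments, which is $O(t_e-t_s)=O(l)$; likewise $i_s$ is only ever incremented, and it is bounded above by $i_e$, hence the total work in the second \textbf{while} (lines 29--30) is also $O(l)$ amortized. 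Each slope evaluation is $O(1)$. Summing, the \textbf{for}-loop phase is $O(l)$.

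Putting the three phases together gives $O(l)+O(l)+O(l)=O(l)$ total, which is the claim. The main obstacle I anticipate is not the amortized convex-hull bound — that is a routine transcription of the Algorithm~\ref{alg:2ComputeDensity} analysis — but rather justifying the $O(l)$ cost of the final $\max(\mathcal{MTS}[w])$ evaluation: naively scanning the whole array $\mathcal{MTS}[w]$ costs $\Theta(|\mathcal{T}|)$, so the lemma is only true if we maintain a running maximum (or a suitable data structure) over the portion of $\mathcal{MTS}[w]$ outside the current window. I would make this assumption explicit in the proof, noting that it is consistent with how $\mathcal{MTS}$ is threaded through Algorithm~\ref{alg:3MaintainDensity} and that it adds only $O(1)$ amortized overhead per update.
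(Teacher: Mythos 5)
Your proposal is correct and follows essentially the same route as the paper's proof: $O(l)$ to build the windowed $\mathcal{CSC}$, the standard amortized convex-hull argument for the $i_s$/$i_e$ pointers over a window of length $O(l)$, and an $O(l)$ evaluation of $\max(\mathcal{MTS}[w])$ by combining the freshly written window entries with a cached former maximum. The implementation caveat you flag about the final $\max$ is precisely the point the paper also relies on (``use $\mathcal{MTS}[w][t_s\text{:}t_e]$ with the former maximum value''), so you have simply made explicit what the paper states tersely.
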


\begin{proof}
First, $\kw{UpdateMSD}$ needs $O(l)$ to compute the collection $\mathcal{CSC}$ (lines 23-24). For each $j$, $i_e$ reduces from $j$ to $i_s$, and $i_s$ increases from $t_s$ to $t_e$. Considering all the loops, the average time complexity of assigning $i_s$ and $i_e$ is $O(t_e-t_s) = O(l)$ (lines 26-33). And computing $\max(\mathcal{MTS}[w])$ needs $O( l)$ because we can use $\mathcal{MTS}[w][t_s:t_e]$ with the former maximum value to compute it. However, the whole procedure $\kw{UpdateMSD}$ needs $O(l)$ to update $\mathcal{MSD}[w]$.
\end{proof}

\stitle{Complexity of Algorithm~\ref{alg:3MaintainDensity}. } The time and space complexity of Algorithm 3 are $O(\alpha |\mathcal{T}|+\beta l)$ and $O(\alpha |\mathcal{T}|+m)$ respectively, where $\alpha = |V_c|, \beta = |E_c|$ are number of nodes and edges in $k$-CORE $(k=\delta)$  of $G$.

\begin{proof}
	First, Algorithm~\ref{alg:3MaintainDensity} needs $O(m)$ time to compute the $k$-CORE $G_c = (V_c,E_c)$ in $G$ (line 2). For each node in $V_c$, it takes $O(|\mathcal{T}|)$ time to invoke $\kw{ComputeMSD}$ (line 7), $O(\log |\mathcal{T}|)$ time to compute $\mathcal{MSD}$ (line 8), so the whole time in lines 6-9 is $O(\alpha |\mathcal{T}|)$. For each temporal edge $(w, v, t)$ in $E_c$, \kw{MDC+} will call procedure \kw{UpdateMSD} for at most once, and the cost for each update can be bounded by $O(l)$ according to Lemma~\ref{lemma:updatemsd}.
	Therefore, the total cost for updating all $\mathcal{MSD}$ is bounded by $O(\beta l)$. Putting it all together, the time complexity of Algorithm~\ref{alg:3MaintainDensity} is $O(\alpha |\mathcal{T}|+\beta l)$.
	
	We need to maintain the graph and store collections of $\mathcal{Q},D$ and $deg$ which consumes $O(m)$. Except that, for each node $u$ in $V_c$, we need to store $\mathcal{MSD}[u],\mathcal{MTS}[u], \mathcal{DS}[u]$, which consumes $O(\alpha |\mathcal{T}|)$ in total.
\end{proof}

\section{Algorithms For Mining \skylines}
In this section, we develop an efficient algorithm to record all \skylines.
The basic idea of our algorithm is as follows. The algorithm first only considers the $l$ dimension, and computes the maximal $\widehat{\delta}$, among all the \mdcorealls. Then, the algorithm considers the $\delta$ dimension with $\delta = \widehat{\delta}$ to compute the currently maximal $l'$ value. Using the above method, we can find one \skyline which has the maximal $(l,\delta)$ value of all the skyline communities. The challenge is how to find the other \skyline iteratively. We can tackle this challenge based on the following results.

\begin{lemma}
	\label{lemma:skyline}
	Let $(l',\widehat{\delta})$-$\mdc$ be a \skyline which have the largest $\widehat{\delta}$ among all the \skylines, if the node is not a $(l,\delta)$-dense node with $l > l',\delta>0$, it can not be contained in another \skyline.
\end{lemma}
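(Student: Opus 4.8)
The plan is a proof by contradiction. Assume that $u$ belongs to the node set $C$ of some \skyline, say an $(l'',\delta'')$-\mdc $\mathcal{G}_{C}$ with $(l'',\delta'')\neq(l',\widehat{\delta})$. I will show that this forces $u$ to be an $(l'',\delta'')$-dense node of the whole graph $\mathcal{G}$, which contradicts the hypothesis that $u$ fails to be an $(l,\delta)$-dense node for every pair with $l>l'$ and $\delta>0$ --- in particular for the pair $(l'',\delta'')$.

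The first step --- and the one I expect to be the main obstacle --- is to prove that every \skyline different from the $(l',\widehat{\delta})$-\mdc has length parameter strictly larger than $l'$, i.e.\ $l''>l'$. Since $\widehat{\delta}$ is the largest $\delta$-value occurring among all \skylines, we have $\delta''\le\widehat{\delta}$. Suppose for contradiction that $l''\le l'$. If $l''<l'$, then the $(l',\widehat{\delta})$-\mdc witnesses $l'>l''$ and $\widehat{\delta}\ge\delta''$; if instead $l''=l'$, then $\delta''<\widehat{\delta}$ (since the two pairs are distinct and $\delta''\le\widehat{\delta}$), and the $(l',\widehat{\delta})$-\mdc witnesses $l'\ge l''$ and $\widehat{\delta}>\delta''$. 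In either case the $(l',\widehat{\delta})$-\mdc dominates $(l'',\delta'')$ in the sense of Definition~\ref{def:5skyline}, contradicting that the $(l'',\delta'')$-\mdc is a \skyline. Hence $l''>l'$; and since $\delta''>0$ always holds, $(l'',\delta'')$ is a legitimate instance of the quantifier ``$l>l',\delta>0$'' in the statement.

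The second step is the monotonicity of the maximum $l$-segment density under vertex deletion. For the node set $C$, every entry of the degree sequence of $u$ in $\mathcal{G}_{C}$ is at most the corresponding entry of the degree sequence of $u$ in $\mathcal{G}$, because restricting attention to $C$ can only delete neighbours of $u$ (Definitions~\ref{def:2degseq} and \ref{def:1subtemporal}). Consequently every $l''$-segment density of $u$ inside $\mathcal{G}_{C}$ is at most the corresponding $l''$-segment density inside $\mathcal{G}$ (Definition~\ref{def:3segden}), so the maximum $l''$-segment density of $u$ in $\mathcal{G}_{C}$ is at most the maximum $l''$-segment density of $u$ in $\mathcal{G}$.

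Now I combine the two. By the first step I may apply the hypothesis with the pair $(l,\delta)=(l'',\delta'')$: $u$ is not an $(l'',\delta'')$-dense node of $\mathcal{G}$, i.e.\ the maximum $l''$-segment density of $u$ in $\mathcal{G}$ is less than $\delta''$ (Definition~\ref{def:4densenode}). By the second step the maximum $l''$-segment density of $u$ in $\mathcal{G}_{C}$ is then also less than $\delta''$, so $u$ violates the \textit{Densely} clause of Definition~\ref{def:4densecore} inside $\mathcal{G}_{C}$; this contradicts $u\in C$, since every node of an \mdc must be a $(l'',\delta'')$-dense node of that \mdc. Therefore $u$ cannot be contained in any \skyline other than the $(l',\widehat{\delta})$-\mdc. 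Of the three ingredients, the monotonicity of step two and the appeal to the \textit{Densely} clause are routine; the crux is the Pareto bookkeeping of step one, where the maximality of $\widehat{\delta}$ is precisely what guarantees that ``another'' \skyline must sit at a length parameter above $l'$ and is hence reached by the hypothesis.
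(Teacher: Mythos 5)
Your proof is correct and follows essentially the same route as the paper's: a short contradiction argument combining the Pareto-dominance definition (the maximality of $\widehat{\delta}$ forces any other \skyline to have $l''>l'$, equivalently forces a dominated parameter pair) with the monotonicity of the dense-node property under restriction to a subgraph. The only difference is one of framing --- the paper concludes that the other community would be dominated and hence not a \skyline, whereas you conclude that the node cannot satisfy the \emph{Densely} clause of that community --- and your version is the more carefully quantified of the two.
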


\begin{proof}
	Suppose that there exists $(l,\delta)$-dense node $v$ whose maximal $l < l'$ in another \skyline $(l^*,{\delta}^*)$-$\mdc$. As $v$ is a $(l,\delta)$-dense node with maximal $l < l'$, according to Definition~\ref{def:4densecore}, $l^*\le l$ . Since $\widehat{\delta}$ is largest among all the \skyline, there holds $\delta^*<\widehat{\delta}$. Therefore, $l^*\le l <l'; \delta^*<\widehat{\delta}$, $(l^*,{\delta}^*)$-$\mdc$ is not a \skyline, which is a contradiction.
\end{proof}


\begin{lemma}
	\label{lemma:skyline2}
	Let $(l',\delta')$-$\mdc$ be a \skyline. If $l^*>l'$ and $(l^*,\delta^*)$-$\mdc$ is another \skyline, $(l^*,\delta^*)$-$\mdc$ must be contained in an induced temporal subgraph from $k$-CORE of $G$ in which $k=\frac{\delta \times l'}{l^*} $.
\end{lemma}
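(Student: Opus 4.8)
The plan is to split the proof into a purely numerical comparison of the two \skyline\ parameters, namely $\delta^{*}\ge k$ where $k=\tfrac{\delta' l'}{l^{*}}$ (reading the $\delta$ in the statement as $\delta'$), followed by a core‑peeling argument of exactly the kind used for Property~\ref{pro:03}. Write $C'$ and $C^{*}$ for the node sets of the $(l',\delta')$-\mdc\ and the $(l^{*},\delta^{*})$-\mdc\ respectively (both non‑empty, as these are \skylines). For the parameter estimate I would argue by contradiction: assuming $\delta^{*}<k$, I will exhibit a non‑empty $(l^{*},k)$-\mdc; since such an \mdc\ has the same length parameter $l^{*}$ but a strictly larger density $k>\delta^{*}$, it dominates the $(l^{*},\delta^{*})$-\mdc\ in the sense of Definition~\ref{def:5skyline}, contradicting the assumption that the latter is a \skyline.

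To build the required $(l^{*},k)$-\mdc\ I claim that every $u\in C'$ is already an $(l^{*},k)$-dense node in $\mathcal{G}_{C'}$. By Definition~\ref{def:4densenode} there is an interval $[t_s,t_e]$ with $L:=t_e-t_s+1\ge l'$ and $\sum_{i=t_s}^{t_e}|N_u(G_i)\cap C'|\ge \delta' L$. If $L\ge l^{*}$, the same interval witnesses an $l$-segment density of at least $\delta'\ge k$ for a window of length $l^{*}$ (here $l'\le l^{*}$ gives $k\le\delta'$). If instead $l'\le L<l^{*}$, then $|\mathcal{T}|\ge l^{*}$ because the $(l^{*},\delta^{*})$-\mdc\ is non‑empty, so $[t_s,t_e]$ can be enclosed in an interval $[a,b]\subseteq[1,|\mathcal{T}|]$ of length exactly $l^{*}$ (slide toward whichever end of the time line has room); since degrees are non‑negative, $\sum_{i=a}^{b}|N_u(G_i)\cap C'|\ge\delta' L\ge\delta' l'$, so the $l$-segment density over this length‑$l^{*}$ window is at least $\tfrac{\delta' l'}{l^{*}}=k$. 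Because enlarging the candidate node set never decreases a node's degree in any snapshot, this $(l^{*},k)$-dense property is inherited by $C'\cup C''$ for any set $C''$ all of whose nodes are $(l^{*},k)$-dense (the same observation underlying the Uniqueness property), so the $(l^{*},k)$-\mdc\ is well defined and, being maximal, contains the non‑empty set $C'$. This produces the \mdc\ promised in the previous paragraph, so the assumption $\delta^{*}<k$ is untenable and $\delta^{*}\ge k$.

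It then remains to transfer $\delta^{*}\ge k$ to a core statement, which is the same step as in Property~\ref{pro:03}. Every $u\in C^{*}$ is $(l^{*},\delta^{*})$-dense in $\mathcal{G}_{C^{*}}$, so on its witnessing interval the average of $|N_u(G_i)\cap C^{*}|$ is at least $\delta^{*}$; hence some snapshot $G_j$ has $|N_u(G_j)\cap C^{*}|\ge\delta^{*}\ge k$, and since $N_u(G_j)\cap C^{*}\subseteq N_u(G)\cap C^{*}$, the induced de‑temporal subgraph $G[C^{*}]$ has minimum degree at least $k$. As the $k$-CORE of $G$ is the maximal subgraph of $G$ of minimum degree at least $k$, its vertex set contains $C^{*}$, and therefore the $(l^{*},\delta^{*})$-\mdc\ $\mathcal{G}_{C^{*}}$ is contained in the induced temporal subgraph on the $k$-CORE of $G$, as claimed. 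I expect the inequality $\delta^{*}\ge k$ to be the main obstacle — specifically the window‑enlargement estimate, the boundary case where the witnessing interval lies near an end of the time line, and the verification that the enlarged object is genuinely a non‑empty $(l^{*},k)$-\mdc; once this is in place the closing core‑containment step is routine given Property~\ref{pro:03}.
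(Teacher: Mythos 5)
Your proof is correct and takes essentially the same route as the paper's: extend each node's witnessing window from length $l'$ to $l^*$, note that in the worst case the newly covered timestamps contribute zero degree so the segment density is still at least $\delta' l'/l^*$, and then finish with the snapshot-degree/core argument of the Reduction property. Your write-up is in fact more complete than the paper's sketch, which leaves implicit both the Pareto-domination step you make explicit (non-emptiness of the $(l^*,k)$-\kw{MDC} containing $C'$ forces $\delta^*\ge k$, after which containment in the $k$-CORE follows) and the boundary handling of the enlarged window.
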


\begin{proof}
	Let $C= (l',\delta')$-$\mdc$ be a \skyline. According to Definition~\ref{def:4densecore}, each node $v\in C$ is an \densenode in $\mathcal{G}(C)$. For each $v$, there exist $S\subseteq C, T\in \mathcal{T}$, satisfying that $\kw{SD}(v,\mathcal{G}_C) \ge \delta'$ and $|T|\ge l'$. If we enlarger $l'$ to $l^*$, in the worst case, the newly added degrees are all zeros, each $v$ will have a segment density $\kw{SD}(v,\mathcal{G}_C) \ge \frac{\delta \times l'}{l^*}$. The remained proof is similar to that of Property~\ref{pro:03}, thus we omit it for brevity.
\end{proof}

Based on Lemma~\ref{lemma:skyline} and Lemma~\ref{lemma:skyline2}, after computing one \skyline \mdcore, as $l$ is integer, we can initialize $l' = l+1$ to get the next \skyline. Furthermore, we can reduce the considering graph by the following corollary.
\begin{corollary}
	\label{coro:lplus1}
	Let $(l,\delta)$-$\mdc$ and $(l',\delta')$-$\mdc$ be two \skylines. If $l'>l$, then nodes in $(l',\delta')$-$\mdc$ must be contained in a  $k$-CORE of $G$ in which $k=\frac{\delta \times l}{l+1} $.
\end{corollary}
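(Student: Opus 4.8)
The plan is to obtain this from Lemma~\ref{lemma:skyline2}, exploiting that $l$ is an integer and that $k$-cores are nested (a smaller threshold yields a larger core).

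First I would record a ``padding'' estimate, of the same kind used in the proof of Lemma~\ref{lemma:skyline2}: if a node has average degree $\ge\delta$ over a segment of length $\ge l$, then making that segment have length $\ge l+1$ --- either it already was, or we extend it by one timestamp whose contribution is a non-negative degree --- gives average degree $\ge\frac{\delta l}{l+1}$ over a segment of length $\ge l+1$. Applying this to every node of the $(l,\delta)$-$\mdc$ shows that node set satisfies the \emph{Densely} condition of Definition~\ref{def:4densecore} for parameters $(l+1,\frac{\delta l}{l+1})$, so a non-empty $(l+1,\frac{\delta l}{l+1})$-$\mdc$ exists; write $\widehat\delta\ge\frac{\delta l}{l+1}$ for the maximum density attained by any $(l+1,\cdot)$-$\mdc$. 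Then, exactly as in the proof of Property~\ref{pro:03}, every node of the $(l+1,\widehat\delta)$-$\mdc$ has degree $\ge\widehat\delta$ in some snapshot, hence degree $\ge\widehat\delta\ge\frac{\delta l}{l+1}$ in the de-temporal graph $G$, so that $\mdc$ --- and, by the Containment property, every $(l+1,\delta'')$-$\mdc$ with $\delta''\le\widehat\delta$ --- sits inside the $\frac{\delta l}{l+1}$-core of $G$.

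Next I would tie the target community to this level. Since $l'>l$ and $l$ is an integer, $l'\ge l+1$; and each node of the $(l',\delta')$-$\mdc$ is $(l',\delta')$-dense, hence $(l+1,\delta')$-dense (a witnessing segment of length $\ge l'\ge l+1$ is also a valid witness for parameter $l+1$), so by maximality/uniqueness the $(l',\delta')$-$\mdc$ is contained in the $(l+1,\delta')$-$\mdc$. When $l'=l+1$ --- which is exactly how the corollary is invoked by the enumeration algorithm, the reference skyline and the next target differing by a single step --- the skyline property forces $\delta'=\widehat\delta\ge\frac{\delta l}{l+1}$, so every node of the $(l',\delta')$-$\mdc$ has de-temporal degree $\ge\frac{\delta l}{l+1}$ and the claim follows; for a general $l'\ge l+1$ one must additionally rule out that the density has dropped below $\widehat\delta$, as discussed next.

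The step I expect to be the real obstacle is precisely this reconciliation between the hypothesis ``$l'>l$'' and the fixed bound $k=\frac{\delta l}{l+1}$. Taken at face value, Lemma~\ref{lemma:skyline2} only certifies membership in the \emph{larger} $\frac{\delta l}{l'}$-core (because $l'\ge l+1$ makes $\frac{\delta l}{l'}\le\frac{\delta l}{l+1}$), so tightening to $\frac{\delta l}{l+1}$ requires arguing that a Pareto-undominated $(l',\delta')$-$\mdc$ cannot have density below the value $\widehat\delta$ that is forced at window length $l+1$ --- equivalently, that each of its nodes still attains degree $\ge\frac{\delta l}{l+1}$ in some snapshot. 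I would therefore make sure the corollary is applied incrementally, with each reference community of parameter $l$ and the next target of parameter $l+1$, where the padding bound $\widehat\delta\ge\frac{\delta l}{l+1}$ closes the argument with no gap; the general $l'$ statement, if needed, would have to be handled by chaining this step and the monotonicity of the maximum $l$-segment density in $l$ over the intermediate window lengths.
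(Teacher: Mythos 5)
Your route is the one the paper itself gestures at (it states Corollary~\ref{coro:lplus1} without an explicit proof, as a consequence of Lemma~\ref{lemma:skyline2}): pad a witnessing segment of length $l$ by one timestamp to show that the maximum density $\widehat\delta$ attainable at window length $l+1$ satisfies $\widehat\delta\ge\frac{\delta l}{l+1}$, then run the Property~\ref{pro:03}-style argument that a node with segment density $\ge\widehat\delta$ has degree $\ge\widehat\delta$ in some snapshot and hence lies in the $\frac{\delta l}{l+1}$-core of $G$. Your handling of the $l'=l+1$ case is correct and complete, and it is exactly the case Algorithm~\ref{alg:4skyline} relies on: the skyline immediately following $(l,\delta)$ has density equal to the maximum density achievable at window length $l+1$ (since $\kw{MaxL}$ only extends $l$ at fixed $\delta$), so its density is $\ge\frac{\delta l}{l+1}$ even when its final $l'$ exceeds $l+1$.

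The obstacle you flag for a general pair of skylines is genuine, and your proposed fallback (chaining the one-step bound through intermediate window lengths) does not close it: chaining only yields $\delta'\ge\frac{\delta l}{l'}$, i.e.\ membership in the larger $\frac{\delta l}{l'}$-core already provided by Lemma~\ref{lemma:skyline2} with $l^*=l'$, not in the $\frac{\delta l}{l+1}$-core. In fact the corollary as literally stated can fail for a skyline two or more steps down the Pareto front: take a temporal graph that is the disjoint union of a component supporting a $(3,10)$ skyline and a $2$-regular, long-lived component supporting a $(100,2)$ skyline; the latter's nodes do not belong to the $7.5$-core of $G$. So the statement should be read, and is only ever applied, incrementally --- for the skyline that immediately follows $(l,\delta)$ in the enumeration --- and your proof establishes precisely that case, which is all the pruning step of Algorithm~\ref{alg:4skyline} requires.
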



\begin{algorithm}[t]
	\scriptsize
	\caption{$\skyline({\mathcal G})$ }
	\label{alg:4skyline}
	\KwIn{Temporal graph $\mathcal{G} = (\mathcal{V},\mathcal{E})$}
	\KwOut{\skylines in $\mathcal{G}$}
	Let $G=(V, E)$ be the de-temporal graph of $ {\mathcal G}$\;
	
	$l\gets 2; \delta\gets 0; R\gets [\emptyset]$;
	$C \gets V$\;
	
	\While{$l\le |\mathcal{T}|$ }{	
		\For {$u \in C$} {
			$(\mathcal{MTS}[u] , \mathcal{DS}[u]) \gets \kw{ComputeMSD}^*({\mathcal G}, l, u, C)$\;
			$\mathcal{MSD}[u] \gets \max(\mathcal{MTS}[u])$\;
			$deg[u] \gets |N_u(G) \cap C|$;
		}
		
		$(\delta, C)\gets \kw{MaxDelta}({\mathcal G},l, C,\mathcal{DS}, \mathcal{MTS}, \mathcal{MSD}, deg) $\;
		$(l, C)\gets \kw{MaxL}({\mathcal G}, l+1,\delta,C,deg) $\;
		$R\gets R\cup (l,\delta,\mathcal{G}_C)$\;
		
		Let $G_c=(V_c, E_c)$ be the $k$-CORE $(k=\frac{\delta \times l}{l+1})$ of $G$\;
		$C \gets V_c$;
		$l\gets l+1$\;
		
	}
	\Return $R$\;
	
	\vspace*{2mm}
	
	{\bf Procedure}  $\kw{MaxDelta}({\mathcal G},l,V^*,\mathcal{DS}, \mathcal{MSD}, \mathcal{MTS},deg )$\\

\While{True}{
	${\cal Q} \gets [\emptyset]; D \gets [\emptyset]; \overline{\delta}\gets \min(\mathcal{MSD}); {\delta}\gets 2nd\min(\mathcal{MSD})$\;
	\For {$u \in V^*$} {
		{\bf if }{$d[u] < \delta$ or $\mathcal{MSD}[u] < \delta$}{ \bf then} ${\cal Q}.push(u)$;
	}
	
	\While{${\cal Q} \neq \emptyset$}{
		$v \gets {\cal Q}.pop(); \ D \gets D \cup \{ v\}$\;
		\For{$w\in N_v(G)\setminus D$, s.t. $deg[w] \ge {\delta}$ and $\mathcal{MSD}[w]\ge {\delta}$}{
			$deg[w]  \gets deg[w] - 1$\;
			
			{\bf if }{ $deg[w] < \delta$}{ \bf then} \{${\cal Q}.push(w)$; {\bf continue;}\}
			
			\For{$ t$, s.t.$(v,w,t) \in \mathcal{E}$}{
				$\mathcal{DS}[w][t] \gets \mathcal{DS}[w][t]-1$\;
				$\mathcal{MSD}[w] \gets \kw{UpdateMSD}(w, t, l,\mathcal{DS}, \mathcal{MTS})$\;
				
			}
			{\bf if }{$\mathcal{MSD}[w] < \delta$ }{\bf then}  \{$\mathcal{Q}.push(w); deg[w] \gets 0;$\}\\
		}
	}
	\If{ $D\neq V^*$}{
			$V^*\gets V^* \setminus D$; {\bf for }$ u \in D${ \bf do }$\mathcal{MSD}[u]\gets \emptyset $ \;
	}{ \bf else}{
			\Return $(\overline{\delta},V^*)$\;
	}
}

	\vspace*{2mm}

{\bf Procedure}  $\kw{MaxL}({\mathcal G}, l, \delta,V^*, deg)$ \\

	\While{$l\le |\mathcal{T}|$}{
	${\cal Q} \gets [\emptyset]; D \gets [\emptyset];\mathcal{MSD} \gets [\emptyset]; \mathcal{MTS} \gets [\emptyset]$\;
	\For {$u \in V^*$} {
		Lines 6-19 in Algorithm~\ref{alg:3MaintainDensity}.
	}
	\If{$D\neq V^*$}{
		$V^*\gets V^* \setminus D$\;
		{\bf if }{$l =|\mathcal{T}|$}{ \bf then }\Return $(l,V^*)$\;
		$l\gets l+1$\;
	}{ \bf else}{
		\Return $(l,V^*)$\;
	}
	
}
\end{algorithm}

The detail of the \skyline algorithm is shown as follows.
First, Algorithm~\ref{alg:4skyline} initializes $l=2,\delta=0$ to be default, $R$ to store the result and $C$ to be the nodes of the considered dense nodes (line 2). Then, the algorithm considers the $l$ dimension and grows $l$ to find all the \skylines. Next, it computes $\mathcal{MSD}[u]$ and $deg[u]$ in the induced graph from nodes $C$ (lines 4-7). By the given $l$, the \kw{MaxDelta} algorithm finds the maximal $\delta$ and the corresponding core nodes (line 8). Next, given one maximal $\delta$, the \kw{MaxL} algorithm finds the maximal $l$ and the final $C$ (line 9). The induced temporal subgraph of $C$ from $\mathcal{G}$ is a \skyline and $(l,\delta,\mathcal{G}_C)$ is recorded as a result (line 10). Based on Corollary~\ref{coro:lplus1}, in the iteration of $l\gets l+1$, the new \skyline must be contained in a induced temporal subgraph from $k$-CORE of $G$ in which $k=\frac{\delta \times l}{l+1}$, so $C$ is updated as $V_c$ for next loop(lines 10-11). The iterations will terminate when $l$ is increased to $|\mathcal{T}|$ (line 3).

Precedence \kw{MaxDelta} describes the process of finding the largest $\delta$ by parameter $l$. It is a loop until all the nodes have been deleted (line 15). The algorithm maintains $\mathcal{Q}$ to be the deleting queue and $D$ to be the deleted nodes. Specifically, it calculates the minimal $\overline{\delta}$ and the second minimal $\delta$ of the $\mathcal{MSD}[u]$ among all nodes (line 16). Then, the nodes are deleted if $deg[w]<\delta$ or $\mathcal{MSD}[w]<\delta$ (lines 19-27). This process are much similar to that in Algorithm~\ref{alg:3MaintainDensity}. Next, if the deleted nodes set $D$ is not equal to the remained nodes set $V^*$, the remained $V^*$ is updated by $V^* \setminus D$ and $\mathcal{MSD}$ will pop all the $\mathcal{MSD}[u]$ for $u$ in the deleted nodes' set $D$ (lines 28-29). Else, if $D = V^*$, then the remained nodes $V^*$ will have maximal $\overline{\delta}$ (lines 30). Furthermore, precedence \kw{MaxL} can use the remained nodes set of  \kw{MaxDelta} and the known maximal $\delta$ to find the maximal $l$. It grows $l$ to find the largest $l$ and it will terminate if $l$ increases to $|\mathcal{T}|$ (line 32). The unsatisfying nodes are deleted same as that in Algorithm~\ref{alg:3MaintainDensity} (lines 34-35). \kw{MaxL} ends at the first time when all the $V^*$ will be deleted or $l=|\mathcal{T}|$, and it returns $l$ at this time and the remained nodes set $V^*$ (lines 36-40).

\stitle{Complexity of Algorithm~\ref{alg:4skyline}. } The worst time and space complexity of Algorithm~\ref{alg:4skyline} are $O(m |\mathcal{T}|^2)$ and $O(n |\mathcal{T}|+m)$ respectively.
However, the pruning rule based on Corollary~\ref{coro:lplus1} can reduces the computation time greatly. We will show the running time in practice at Section V.

%
%
%
%
%

\section{Experiments} \label{sec:exp}
In this section, we conduct extensive experiments to evaluate the effectiveness and efficiency of the proposed algorithms. We implement seven different algorithms for comparison: \kcore, \maxdensesub~\cite{17tkddDynamicDense}, $\mdcb$, \mdc, $\mdcplus$, \skyline, \skylineb. \kcore is a baseline which computes the $k$-CORE $(k=\delta)$ of the de-temporal graph $G$. $\mdcb$ is another baseline which computes \mdcore using the framework shown in Algorithm~\ref{alg:1densecore}, but it enumerates all subsequences to compute \lmsdall. \maxdensesub \cite{17tkddDynamicDense} is also a baseline algorithm which can find the densest subgraph in a temporal graph. \mdc is the implementation of Algorithm~\ref{alg:1densecore} that uses Algorithm~\ref{alg:2ComputeDensity} to compute  $\mathcal{MSD}$. $\mdcplus$ is the implementation of Algorithm~\ref{alg:3MaintainDensity} to compute \mdcore. \skyline can output all the \skylines by Definition~\ref{def:5skyline} and it is an implementation of Algorithm~\ref{alg:4skyline}. \skylineb is a basic implementation of \skyline without integrating the pruning rules developed in Corollary~\ref{coro:lplus1}.

All algorithms are implemented in Python and the source code is available at \url{https://github.com/VeryLargeGraph/MDC}. All the experiments are conducted on a server of Linux kernel 4.4 with Intel Core(TM) i5-8400@3.80GHz and 32 GB memory. 

\begin{table}[t!]\vspace*{-0.5mm}
	\scriptsize
	\centering
	\caption{Statistics of datasets} \label{table:datasets}
	\vspace*{-0.5mm}
	\setlength{\tabcolsep}{0.8mm}{
		\begin{tabular}{c|c|c|c|c|c|c}
			\hline
			Dataset & $|V|=n$ & $| E|=m^\prime$ & $| \mathcal{E}|=m$& $d_{\max}$&${|\mathcal T|}$& ${\scriptsize \kw{Time \ scale}}$ \\ \hline
			\chess	&7,301		&55,899		&63,689		&233 	&101	&month\\
			\lkml  	&26,885		&159,996	&328,092	&14,172	&96	&month\\
			\enron  &86,978		&297,456	&499,983	&2,164	&48
				&month\\
			\dblp  	&1,729,816	&8,546,306	&12,007,380	&5,980	&78	&year\\
			\youtube&3,223,589	&9,376,594	&12,218,755	&129,819	&225	&day\\
			\flickr &2,302,925	&22,838,276	&24,690,648	&28,276	&197	&day\\
			\hline
			\hline
			\mathoverflow &24,759	&187,986 &294,293&5,556	&2,351	&day\\
			\askubuntu  	&157,222	&455,691 &549,914&7,325	&2,614 	&day\\
			\wikitalk  		&1,094,018& 2,787,967	&4,010,611&214,518	&2,321	&day\\ 	
			\hline
		\end{tabular}
	}
\end{table}\vspace*{-0.5mm}

\stitle{Datasets.} We use 9 different real-world temporal networks in the experiments. The detailed statistics of our datasets are summarized in Table~\ref{table:datasets}, where $d_{\max}$ denotes the maximum number of temporal edges associated with a node, and ${|\mathcal T|}$ denotes the number of snapshots. 
All the snapshots are simple, undirected and unweighted graphs.  \chess\footnote[1]{http://konect.uni-koblenz.de/networks/} is a network that represents two chess players playing game together from 1998 to 2006.
$\lkml^1$ is a communication network of the Linux kernel mailing list from 2001 to 2011.
$\enron^1$ is an email communication network between employees of Enron from 1999 to 2003.
\dblp \footnote[2]{https://dblp.uni-trier.de/xml/} is a collaboration network of authors in \dblp from 1940 to Feb. 2018.
$\kw{Youtube}^3$ ($\youtube$ for short) and $\kw{Flickr}^1$ ($\flickr$) are friendship networks of users in \kw{Youtube} and \kw{Flickr}, respectively.
$\kw{MathOverflow}\footnote[3]{http://snap.stanford.edu/data/index.html}$ (\mathoverflow), $\kw{AskUbuntu}^3$ ($\askubuntu$) are temporal networks of interactions on the stack exchange web site \url{mathoverflow.net} and \url{askubuntu.com}, respectively.
$\kw{WikiTalk}^3$ ($\wikitalk$) is a temporal network representing the interactions among Wikipedia users.


\stitle{Parameter settings.} There are two parameters $l$, $\delta$ in the \mdcore model. For the parameter $l$, we vary it from 3 to 11 with a default value of 3 in the testing. We vary $\delta$ from 3.0 to 11.0 with a default value of 3.0. Unless otherwise specified, the values of the other parameters are set to their default values when varying a parameter.

\stitle{Goodness Metrics.} Since most existing metrics (e.g., modularity) for measuring the community quality are tailored for traditional graphs, we introduce two goodness metrics evaluating communities for temporal graphs, which are motivated by  \emph{density} and \emph{separability} \cite{12icdmgroundtruthcommunity}. Let $C$ be a community computed by different algorithms.

\textit{Average Density (\ad)} builds on intuition that good communities are well connected. It measures the fraction of the temporal edges that appear between the nodes in $C$:
$\ad\triangleq [\frac{ \sum_{v_i\in C} deg_{\mathcal{G}_C}(v_i)}{|{C}|} ] $, where $deg_{\mathcal{G}_C}(v_i)$ denotes the number of temporal edges that are associated with $v_i$ in the community $C$.

\textit{{Average Separability (\as)}} captures the intuition that good communities are well-separated from the rest of the network, meaning that they have relatively few across edges between $C$ and the rest of the network:
$\as \triangleq [\frac{ |\{(u, v, t) \in \mathcal{E} : u \in C,  v \in C\}| / |C|}     {|S=\{(u, v,t ) \in \mathcal{E} : u \in C,  v \notin C\}| /|S| } ] $, which measures the ratio between the internal average density and external average density.

\begin{figure}[t]
	\centering
	\subfigure[\ad]{
		\includegraphics[height=3.0cm]{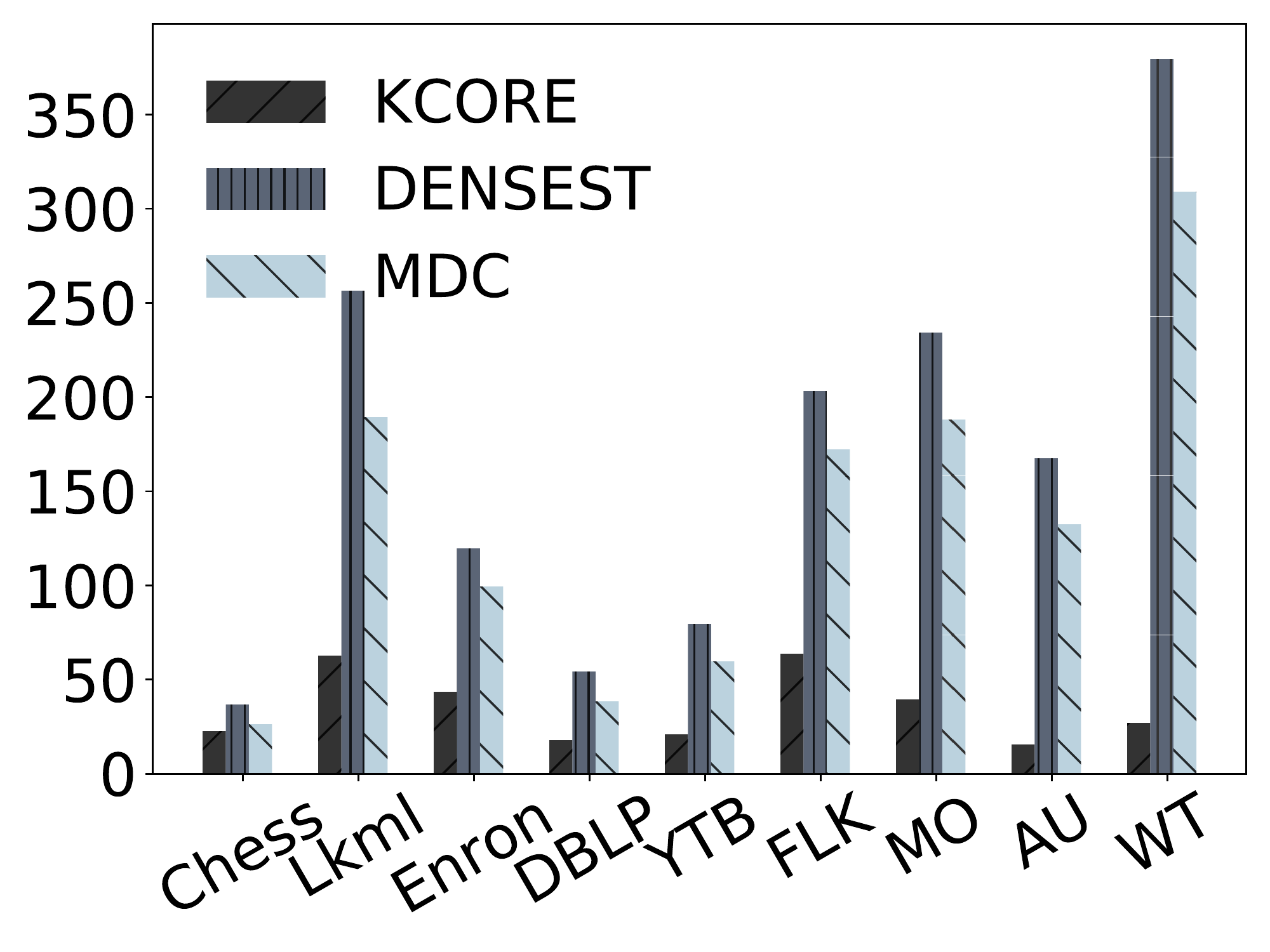}
	}
	\subfigure[\as]{
		\includegraphics[height=3.0cm]{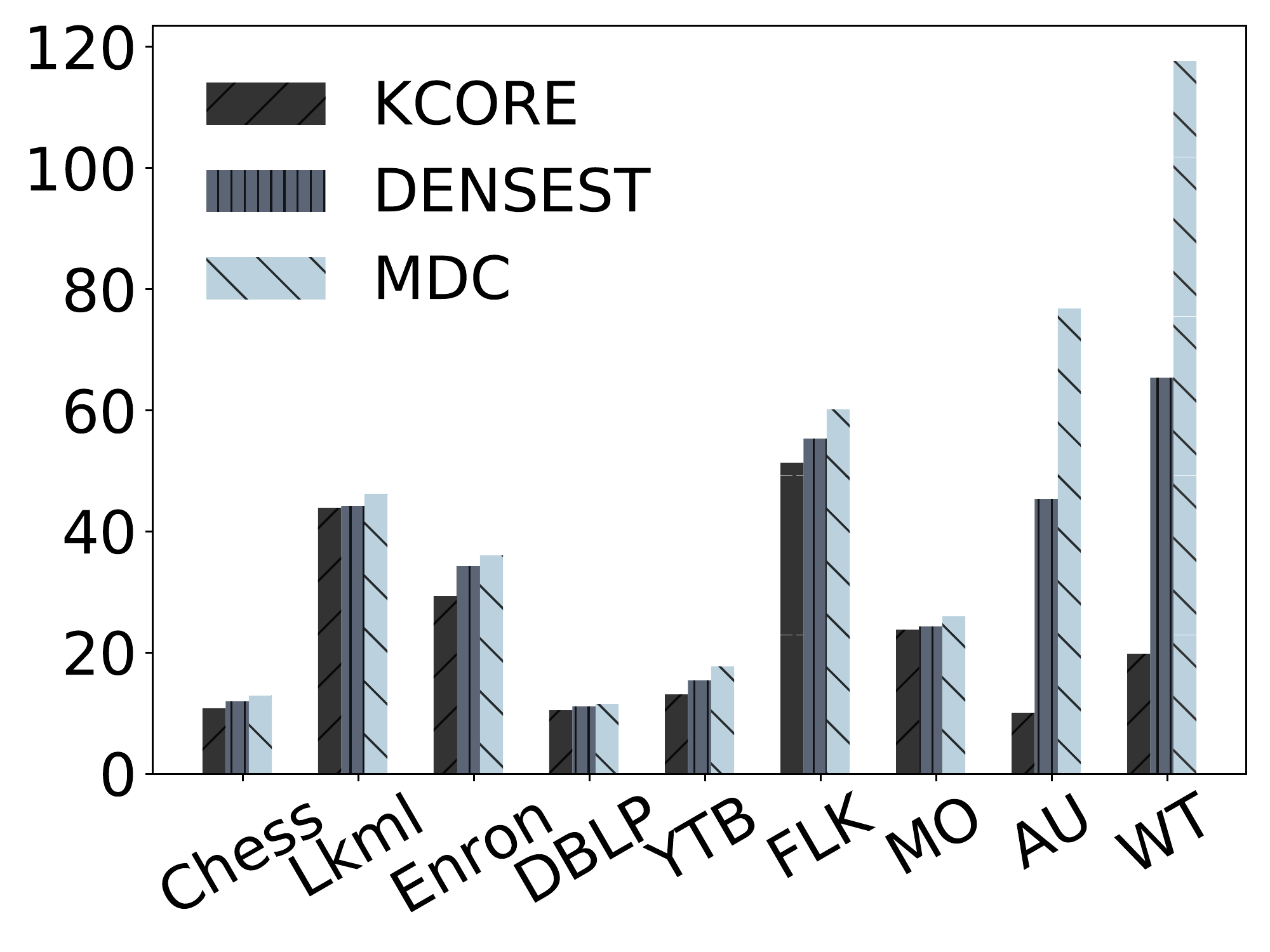}
	}
	
	\vspace*{-0.3cm}\caption{Effectiveness results of \kcore, \maxdensesub and \mdc}
	\vspace*{-0.3cm}
	\label{exp-1}
\end{figure}

\begin{figure}[t]
	\centering
	\subfigure[vary $l$ ($\dblp$)]{
		\includegraphics[height=2.8cm]{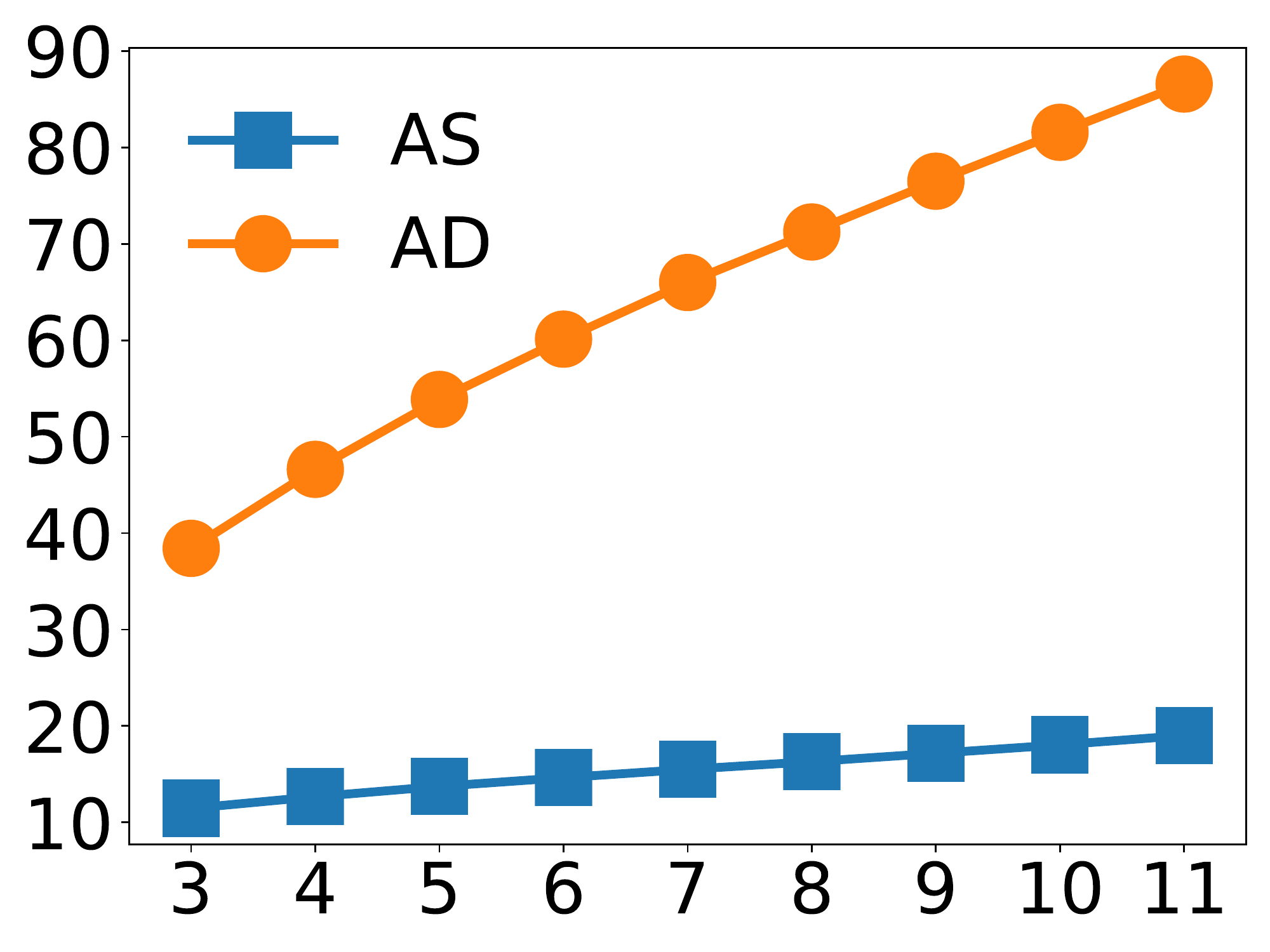}
	}
	\subfigure[vary $\delta$ ($\dblp$)]{
		\includegraphics[height=2.8cm]{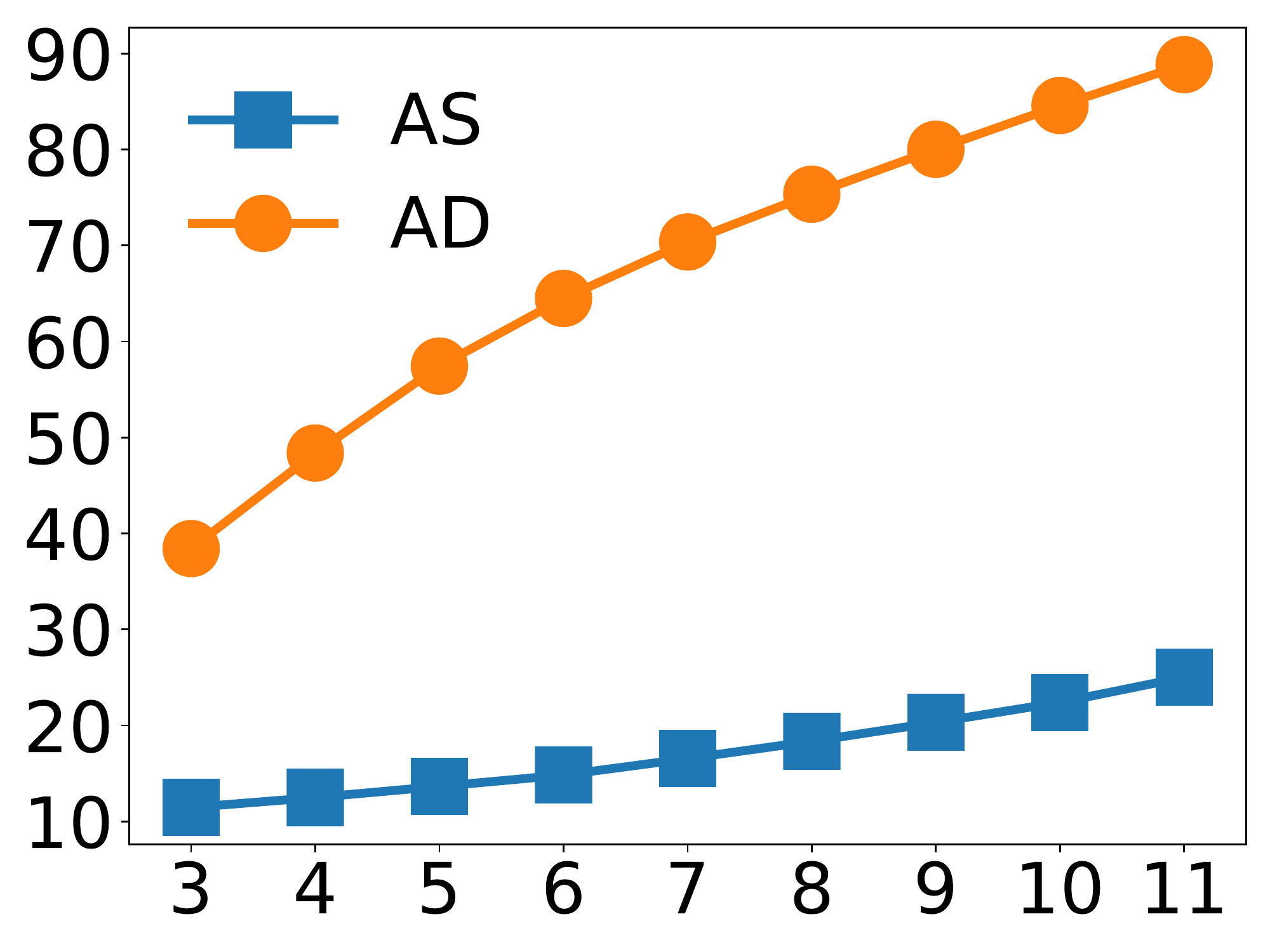}
	}
	\vspace*{-0.2cm}\caption{Effectiveness of \mdc with varying parameters on \dblp}
	\vspace*{-0.5cm}
	\label{exp-2}
\end{figure}

\subsection{Effectiveness Testing} \label{subsec:effectiveness}

\stitle{Exp-1. Effectiveness of \kcore, \maxdensesub and \mdc.} Fig.~\ref{exp-1} shows the qualities of the communities computed by different algorithms under the default parameter setting. Similar results can also be observed using the other parameter settings. As can be seen in Fig.~\ref{exp-1}(a), \maxdensesub significantly outperforms the others in terms of the \ad metric. We also observe that \maxdensesub obtains the subgraph with the largest density. Both {\maxdensesub} and {\mdc} perform much better than {\kcore}. We can see that the \ad values for both {\maxdensesub} and {\mdc}  in \wikitalk is much larger than those in the other datasets. The reason is that the maximum degree in \wikitalk is the largest one among all datasets, thus there must exist a community with higher density. In Fig.~\ref{exp-1}(b), the \mdc community proposed by us have higher \as value among all datasets. Compared to the other datasets, the \ad value on \mathoverflow is high but the \as value is low. The reason is that \as metric captures the ratio between the internal average density and external average density. Clearly, each node in \mdc has a high internal average density.

\stitle{Exp-2. Effectiveness results with varying parameters.} Here we study how the parameters affect the effectiveness performance of our algorithm.  Fig.~\ref{exp-2} shows the results of {\mdc} with varying parameters on \dblp. Similar results can also be observed on the other datasets. As can be seen, both \as and \ad values increase with growing $l$ and $\delta$. The reason is that the lasting time of the \mdc increases when $l$ increases, and the average density of nodes in \mdc increases when $\delta$ increases. 

\begin{figure}[t]
	\centering
	\subfigure[\dblp]{
		\includegraphics[height=3cm]{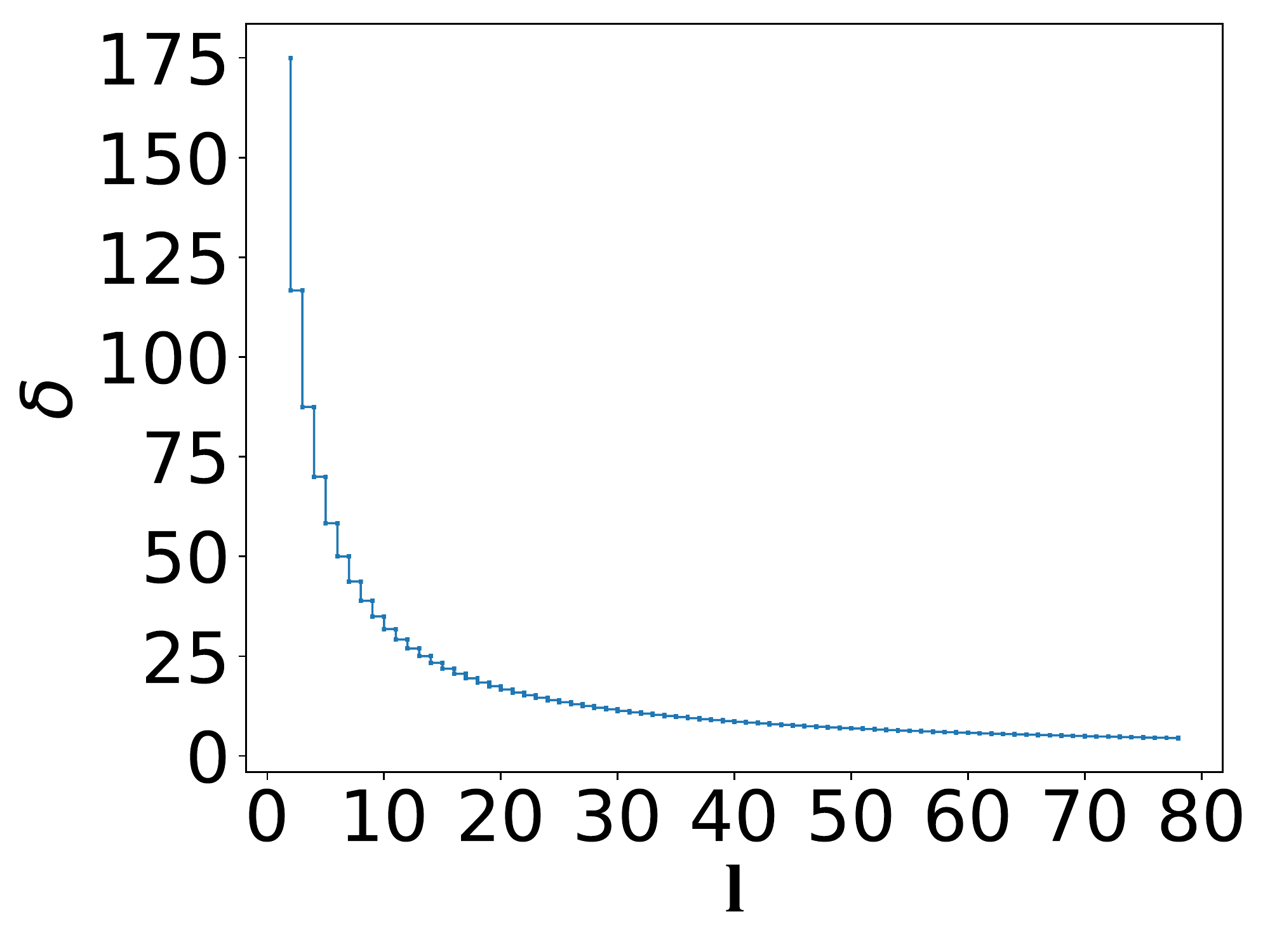}
	}
	\subfigure[\lkml]{
		\includegraphics[height=3cm]{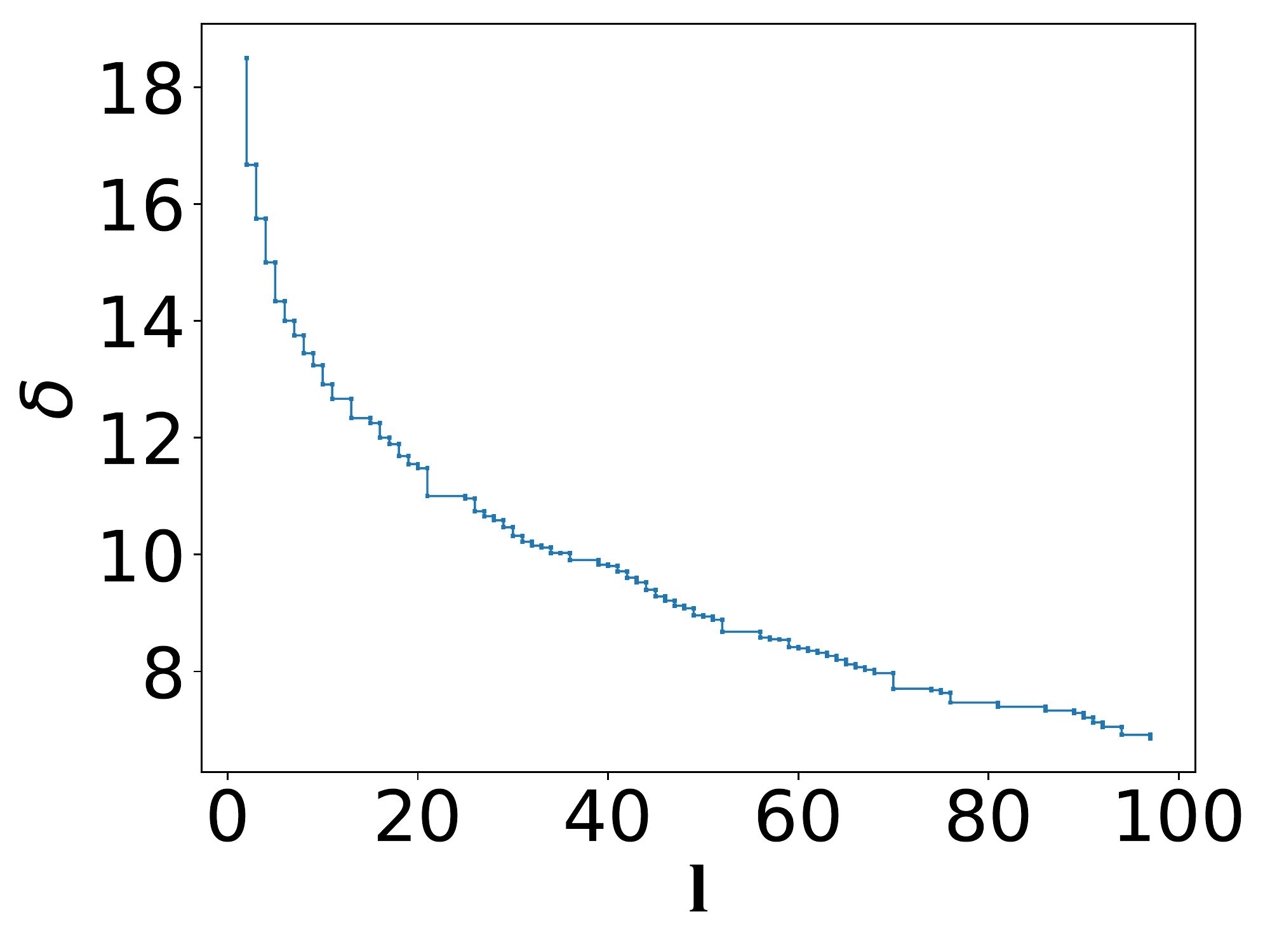}
	}
	\vspace*{-0.3cm}\caption{$l,\delta$ values of each \skyline on different datasets}
	\vspace*{-0.3cm}
	\label{exp-3}
\end{figure}

\begin{figure}[t]
	\centering
	\subfigure[\as]{
		\includegraphics[height=3cm]{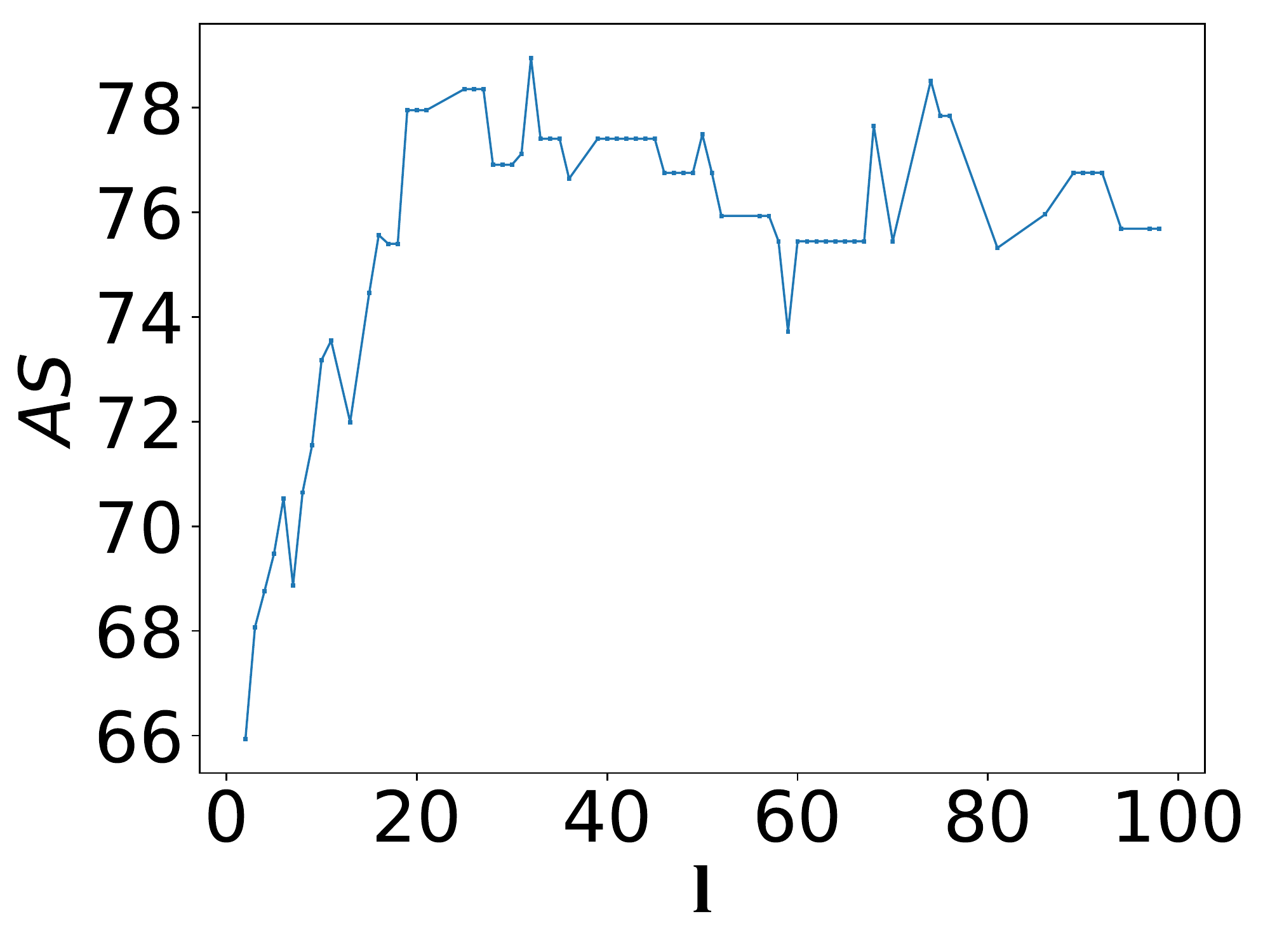}
	}
	\subfigure[\ad]{
		\includegraphics[height=3cm]{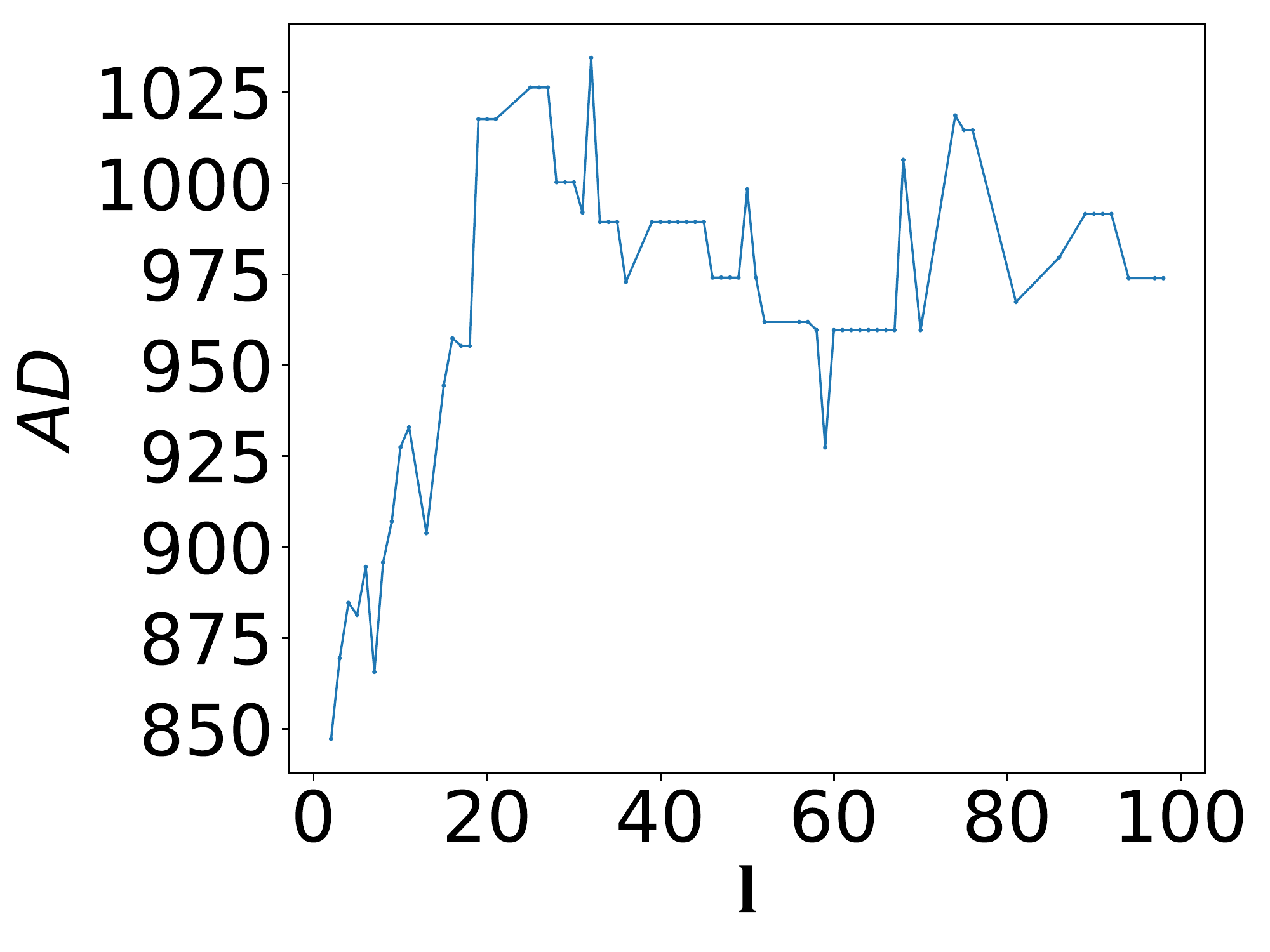}
	}
	\vspace*{-0.3cm}\caption{Effectiveness results of \skylines in \lkml}
	\vspace*{-0.5cm}
	\label{exp-4}
\end{figure}

\stitle{Exp-3. Results of \skyline.} Fig.~\ref{exp-3} shows the $l,\delta$ values for each \skyline on \dblp and \lkml. Again, similar results can also be observed on the other datasets. From Fig.~\ref{exp-3}(a), we observe that when $l=2$, an \mdcore in \dblp achieves the \lmsdall which is equal to 175. The $\delta$ values of the \skylines  drop dramatically when $l=10$. This is because most researchers in \dblp typically cooperate with each others in a continuous time of 2-10 years. As desired, both  Fig.~\ref{exp-3}(a) and Fig.~\ref{exp-3}(b) exhibit a staircase shape because of the parato-optimal property. Fig.~\ref{exp-4} shows the \as, \ad values of \skylines on \lkml. The results on the other datasets are consistent. We can see that the \as and \ad values increase as $l$ increases from 0 to 20, and then \as and \ad change slightly as $l$ increases from 20 to 100. This is because real-world bursting communities can only last in a short time.



\stitle{Exp-4. Case study on \enron.} The \enron dataset consists of the emails sent between employees of Enron from 1999 to 2003. Enron was an energy-trading and utilities company based in Houston, Texas, that perpetrated one of the biggest accounting frauds in history. Enron's executives employed accounting practices that falsely inflated the company's revenues and, for a time, made it the seventh-largest corporation in the United States. Once the fraud came to light, the company quickly unraveled, and it filed for bankruptcy on Dec. 2, 2001. Fig~\ref{exp-case}(a) shows the part of \kcore in a subgraph in which each employee sends e-mails in year 2001. The model of \kcore performs very bad, as the resulting community involves large numbers of employees, so it is hard to find the employees who are significant in the company. Fig~\ref{exp-case}(b) shows a part of \mdc with parameters $(l=3, \delta=3) $. We can see that the employees in this subgraph are annotated by the $l$-segment with maximum density which is a continuous time of at least 3 months. In addition, we can find that the actual timestamps in the $l$-segment of nodes in \mdc are around Dec, 2001. Therefore, the employees in \mdc must be the key persons in Enron, and they are responsible for the bankruptcy of Enron.

\begin{figure}[t]\vspace*{-0.3cm}
	\centering
	\subfigure[\kcore]{
		\includegraphics[height=3.3cm]{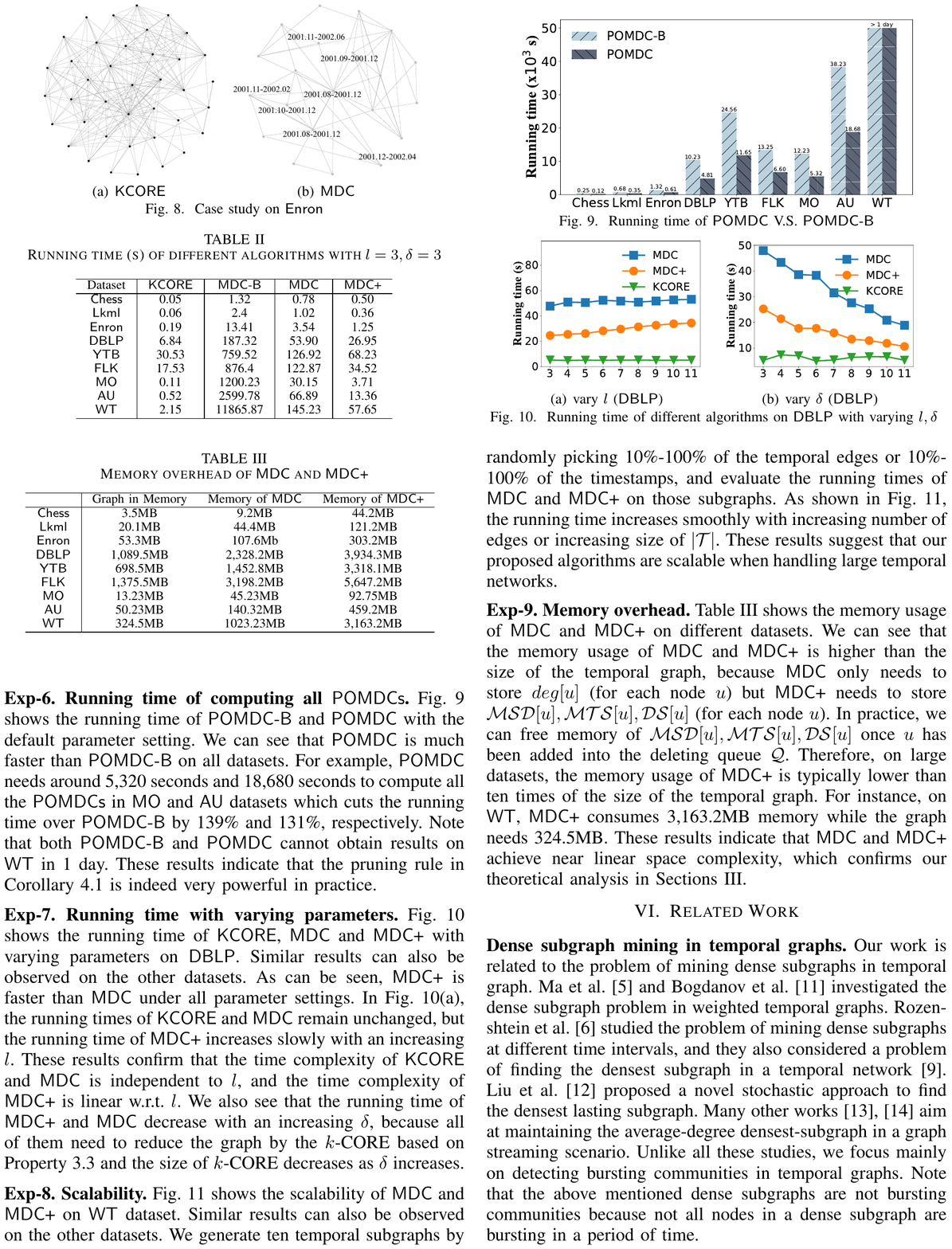}
	}
	\subfigure[\mdc]{
		\includegraphics[height=3.3cm]{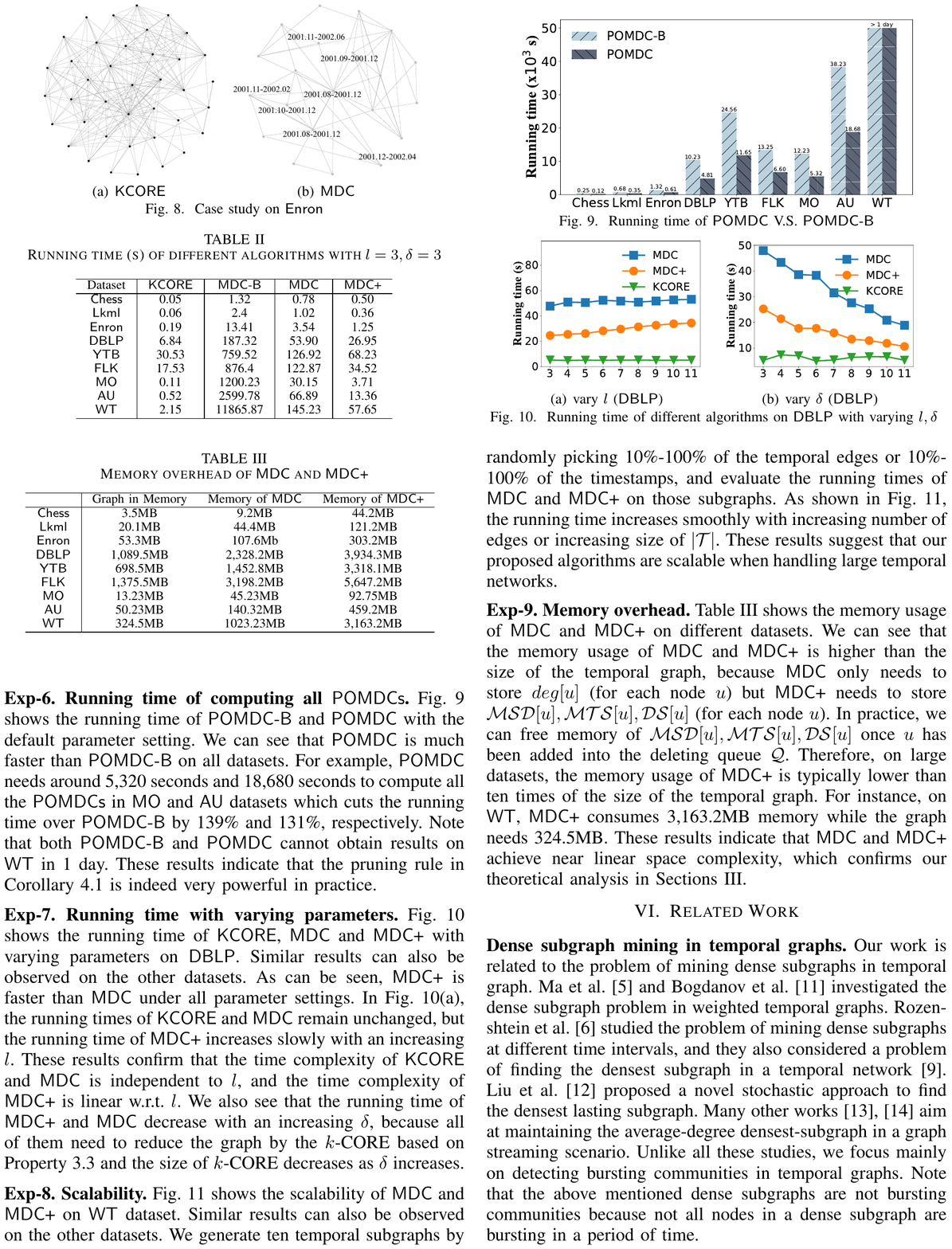}
	}
	\vspace*{-0.3cm}\caption{Case study on \enron}
	\vspace*{-0.3cm}
	\label{exp-case}
\end{figure}

\subsection{Efficiency Testing} \label{subsec:exp-efficiency}


\stitle{Exp-5. Running time of the algorithms.} Table.~\ref{table:runtime} evaluates the running time of \kcore, \mdcb, \mdc, $\mdcplus$ with parameters $l=3, \delta =3$. Similar results can also be observed with the other parameter settings. From Table.~\ref{table:runtime}, we can see that $\mdcplus$ is much faster than \mdcb and \mdc on all datasets. Note that \kcore is the fastest algorithm, as it has linear time complexity of~\cite{15bigdatatemporalcore}. But \kcore is ineffective to find bursting communities. For example, on \dblp, \kcore takes 6.84 seconds and our proposed $\mdcplus$ only consumes 26.95 seconds. On \wikitalk, we can see that \mdcb takes 11865.87 seconds to compute the \mdcore and $\mdcplus$ only takes 57.65 seconds. These results confirm that our proposed algorithms are indeed very efficient on large real-life temporal networks.

\begin{table}[t!]\vspace*{-0.3mm}
	\scriptsize
	\centering
	\caption{Running time (s) of different algorithms with $l=3,\delta=3$} \label{table:runtime}
	\vspace*{-0.5mm}
		\begin{tabular}{c|c|c|c|c}
			\hline
			Dataset & \kcore & $\mdcb$ & \mdc &$\mdcplus$  \\ \hline
			\chess	&0.05 &1.32 &0.78& 0.50 \\
			\lkml  	&0.06 & 2.4 &  1.02& 0.36 \\
			\enron  &0.19&13.41& 3.54& 1.25 \\
			\dblp  	&6.84&187.32 &53.90&26.95 \\
			\youtube &30.53 & 759.52 &126.92& 68.23\\
			\flickr &17.53&876.4&122.87&34.52 \\
			\mathoverflow &0.11 &1200.23&30.15&3.71\\
			\askubuntu  	&0.52&2599.78&66.89&13.36\\
			\wikitalk  		&2.15&11865.87&145.23&57.65\\ 	
			\hline
		\end{tabular}
\end{table}\vspace*{-0.5mm}

\stitle{Exp-6. Running time of computing all \skylines.}
Fig.~\ref{exp-6time} shows the running time of \skylineb and \skyline with the default parameter setting. We can see that
\skyline is much faster than \skylineb on all datasets. For example, \skyline needs around 5,320 seconds and 18,680 seconds to compute all the \skylines in \mathoverflow and \askubuntu datasets which cuts the running time over \skylineb by 139\% and 131\%, respectively. Note that both \skylineb and \skyline cannot obtain results on \wikitalk in 1 day. These results indicate that the pruning rule in Corollary~\ref{coro:lplus1} is indeed very powerful in practice.


\begin{figure}[t]\vspace*{-0.5mm}
	\centering
	\includegraphics[height=3.7cm]{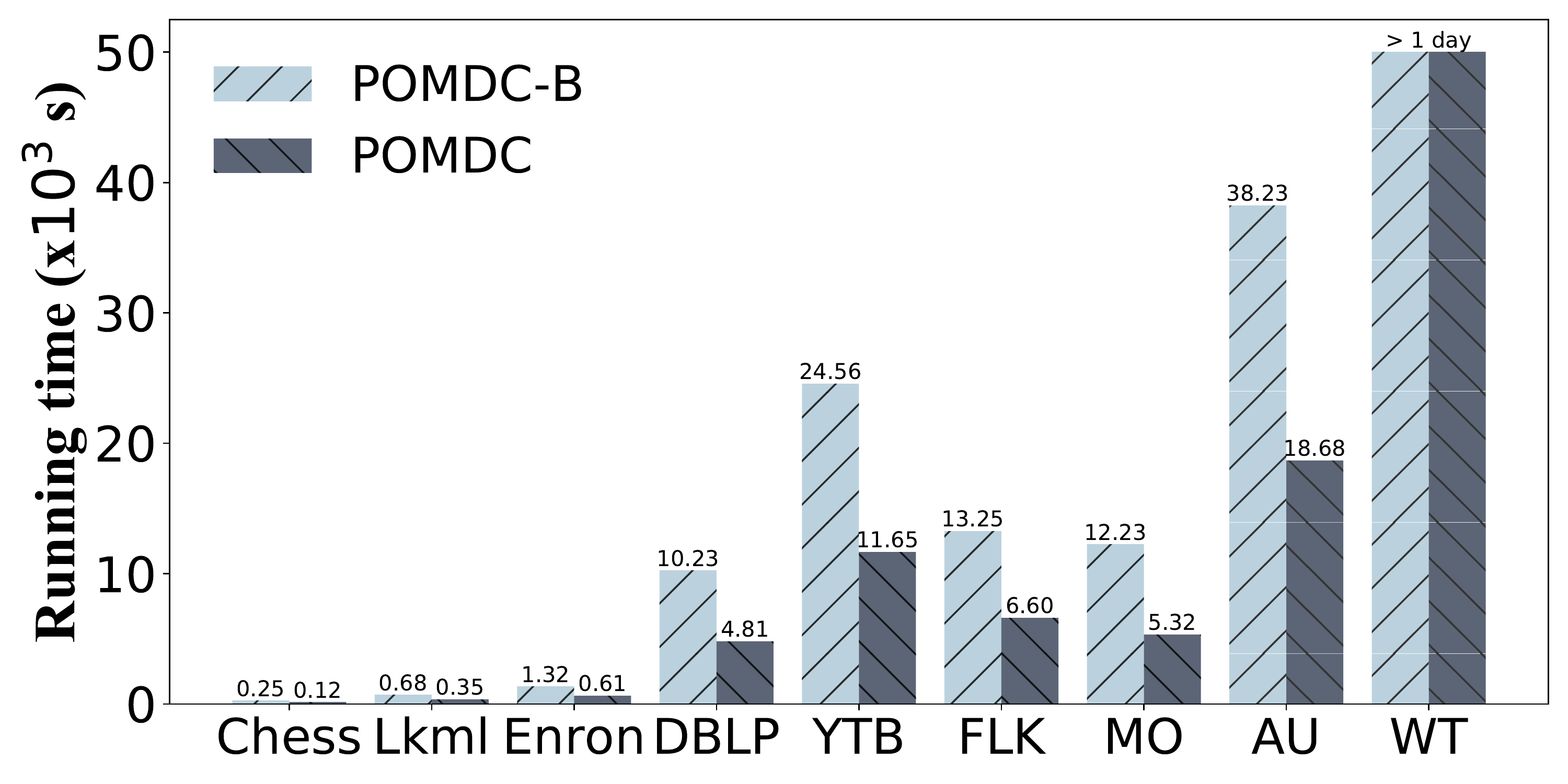}

	\vspace*{-0.3cm}\caption{Running time of \skyline V.S. \skylineb}
	\vspace*{-0.3cm}
	\label{exp-6time}
\end{figure}

\begin{figure}[t]
	\centering
	\subfigure[vary $l$ ($\dblp$)]{
		\includegraphics[height=2.8cm]{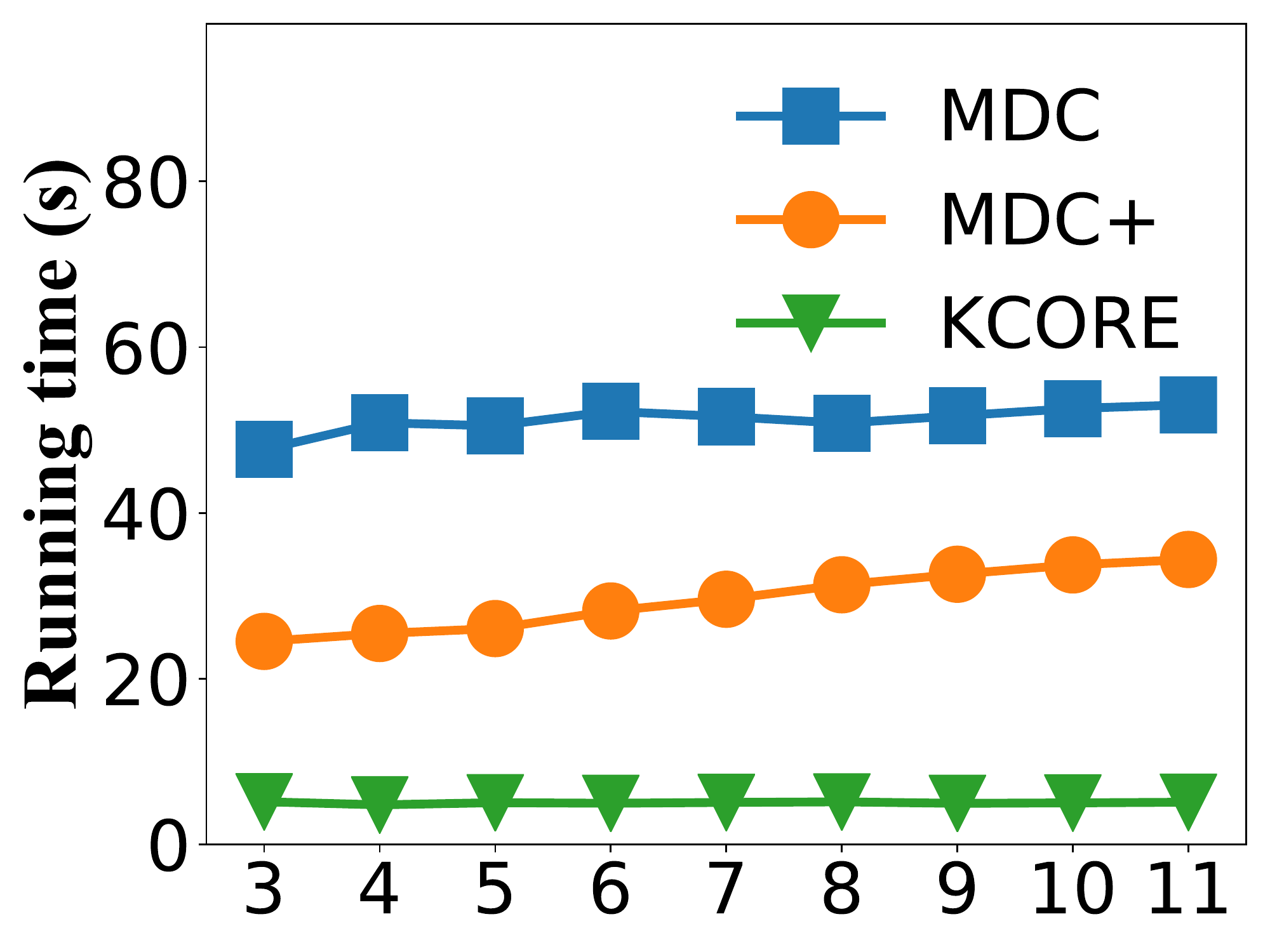}
	}
	\subfigure[vary $\delta$ ($\dblp$)]{
		\includegraphics[height=2.8cm]{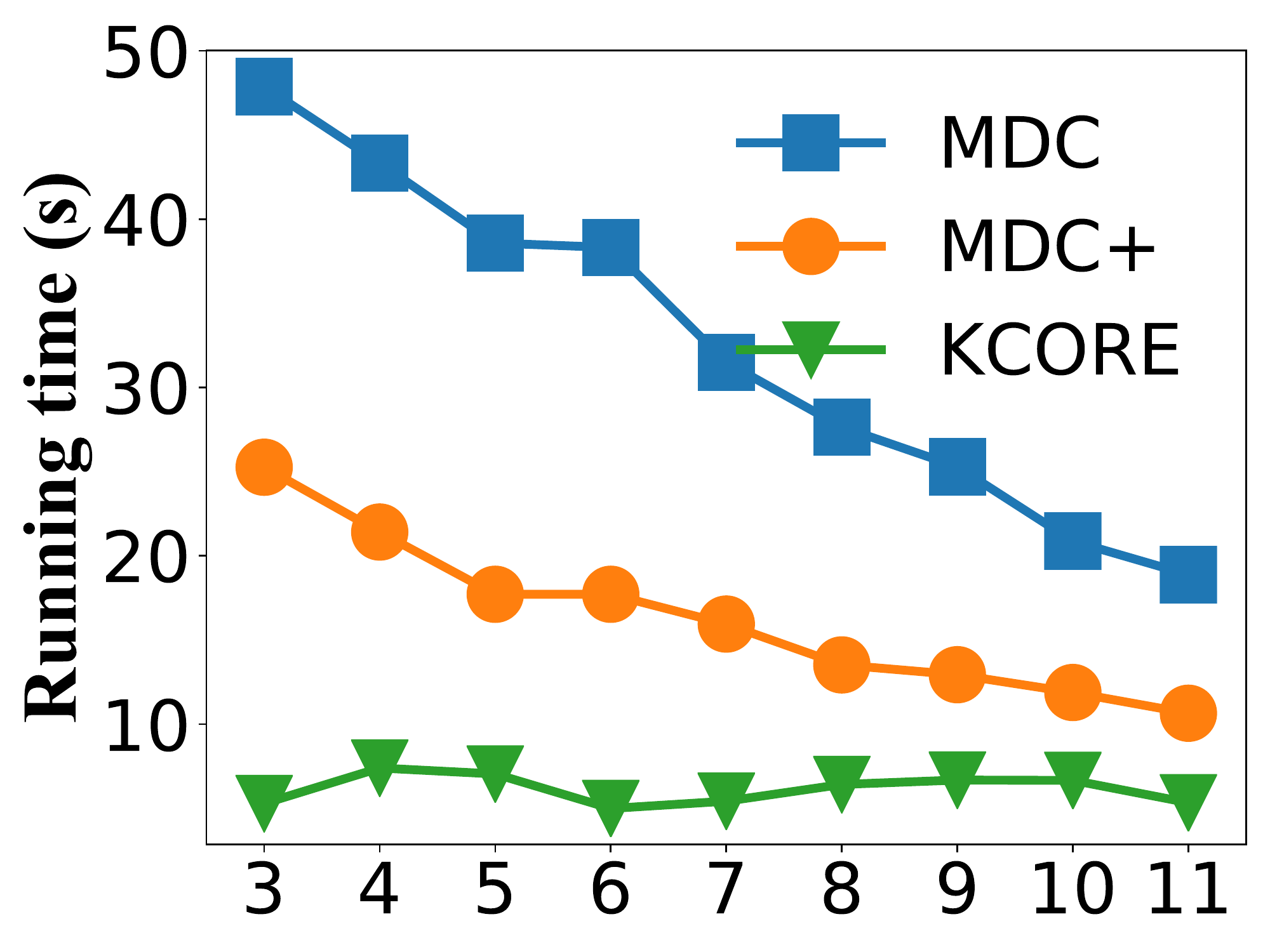}
	}
	\vspace*{-0.3cm}\caption{Running time of different algorithms on \dblp with varying $l,\delta$ }
	\vspace*{-0.3cm}
	\label{exp-5time}
\end{figure}

\begin{figure}[t]
	\centering
	\subfigure[percents of edges]{
		\includegraphics[height=2.8cm]{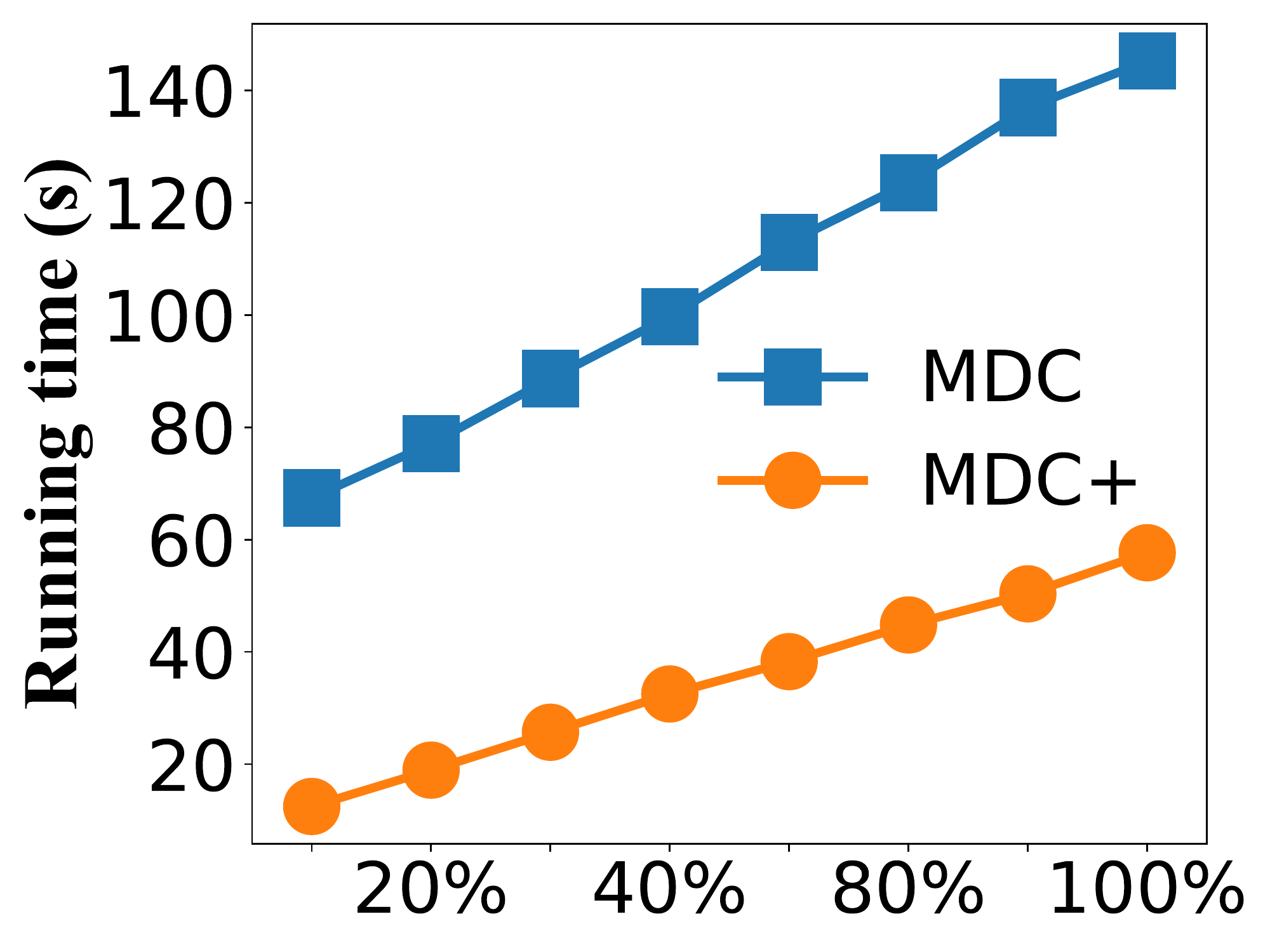}
	}
	\subfigure[percents of $|\mathcal{T}|$]{
		\includegraphics[height=2.8cm]{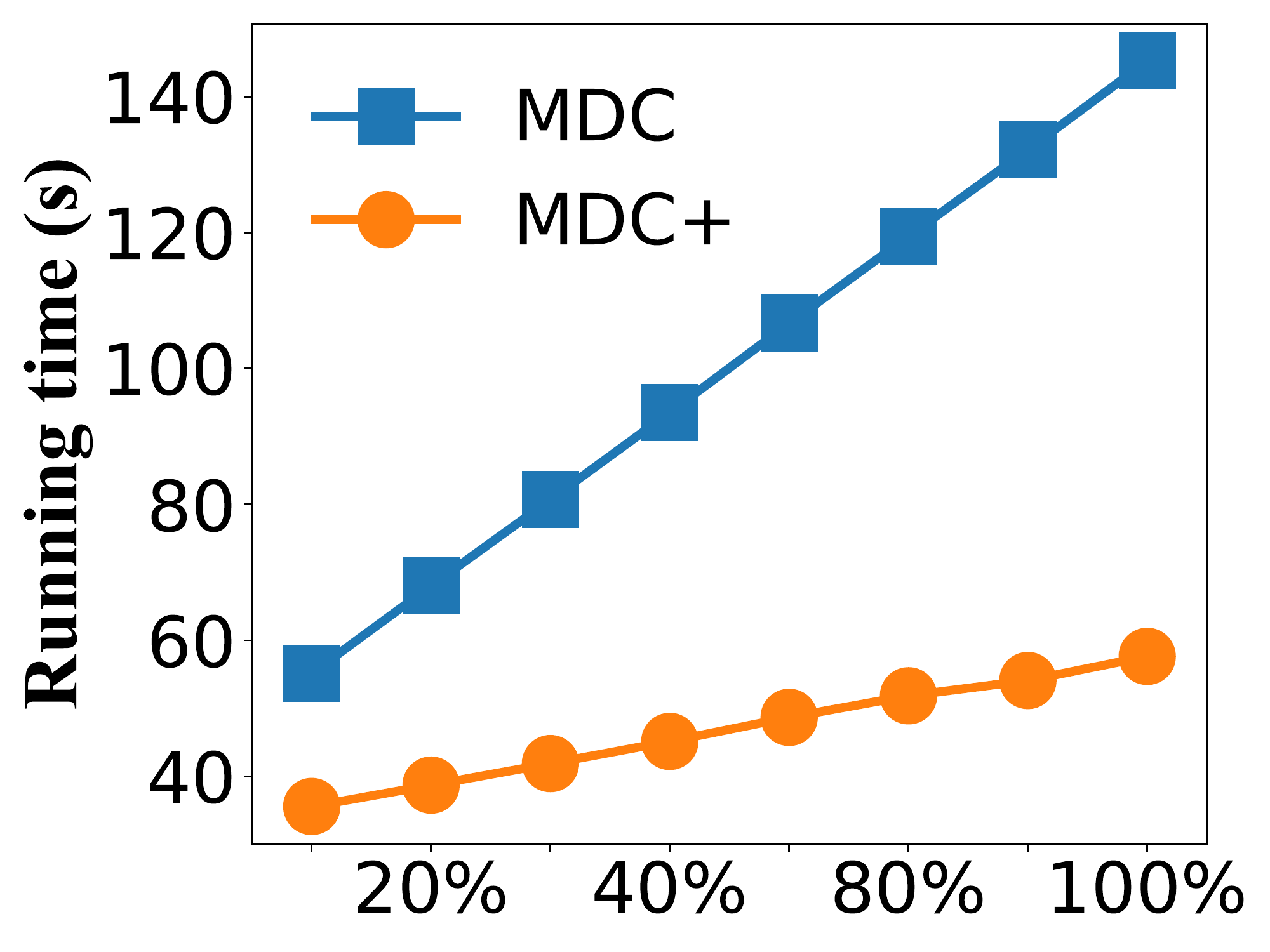}
	}
	\vspace*{-0.3cm}\caption{Scalability testings on \wikitalk }
	\vspace*{-0.3cm}
	\label{exp-8time}
\end{figure}

\stitle{Exp-7. Running time with varying parameters.} Fig.~\ref{exp-5time} shows the running time of \kcore, \mdc and $\mdcplus$ with varying parameters on \dblp. Similar results can also be observed on the other datasets. As can be seen, $\mdcplus$ is faster than \mdc under all parameter settings. In Fig.~\ref{exp-5time}(a), the running times of \kcore and \mdc remain unchanged, but the running time of $\mdcplus$ increases slowly with an increasing $l$. These results confirm that the time complexity of \kcore and \mdc is independent to $l$, and the time complexity of $\mdcplus$ is linear w.r.t.\ $l$.
We also see that the running time of $\mdcplus$ and \mdc decrease with an increasing $\delta$, because all of them need to reduce the graph by the $k$-CORE based on Property 3.3 and the size of $k$-CORE decreases as $\delta$ increases. 

%
%

\begin{table}[t!]
	\scriptsize
	\centering
	\caption{Memory overhead of \mdc and $\mdcplus$} \label{table:memorycost}
	\vspace*{-0.1cm}
	\setlength{\tabcolsep}{2 mm}{
		\begin{tabular}{c|ccc}
			\hline
			& {\scriptsize Graph in Memory} & {\scriptsize Memory of \mdc} & {\scriptsize Memory of $\mdcplus$}\\ \hline
			\chess		&3.5MB		&9.2MB		&44.2MB		\\
			\lkml  			&20.1MB		&44.4MB		&121.2MB		\\
			\enron  		&53.3MB		&107.6Mb		&303.2MB		\\
			\dblp  			&1,089.5MB	&2,328.2MB	&3,934.3MB	\\
			\youtube  			&698.5MB	&1,452.8MB	&3,318.1MB	\\
			\flickr  			&1,375.5MB	&3,198.2MB	&5,647.2MB	\\
			\mathoverflow  			&13.23MB	&45.23MB	&92.75MB	\\
			\askubuntu  			&50.23MB	&140.32MB	&459.2MB	\\
			\wikitalk  			&324.5MB	&1023.23MB	&3,163.2MB	\\
			\hline
		\end{tabular}
	}
	\vspace*{-0.3cm}
\end{table}

\stitle{Exp-8. Scalability.} Fig.~\ref{exp-8time} shows the scalability of \mdc and $\mdcplus$ on \wikitalk dataset. Similar results can also be observed on the other datasets. We generate ten temporal subgraphs by randomly picking 10\%-100\% of the temporal edges or 10\%-100\% of the timestamps, and evaluate the running times of \mdc and $\mdcplus$ on those subgraphs.
As shown in Fig.~\ref{exp-8time}, the running time increases smoothly with increasing number of edges or increasing size of $|\mathcal{T}|$. These results suggest that our proposed algorithms are scalable when handling large temporal networks.

\stitle{Exp-9. Memory overhead.} Table~\ref{table:memorycost} shows the memory usage of \mdc and $\mdcplus$ on different datasets. We can see that the memory usage of \mdc and $\mdcplus$ is higher than the size of the temporal graph, because \mdc only needs to store $deg[u]$ (for each node $u$) but $\mdcplus$ needs to store $\mathcal{MSD}[u],\mathcal{MTS}[u], \mathcal{DS}[u]$ (for each node $u$). In practice, we can free memory of $\mathcal{MSD}[u],\mathcal{MTS}[u], \mathcal{DS}[u]$ once $u$ has been added into the deleting queue $\mathcal{Q}$. Therefore, on large datasets, the memory usage of $\mdcplus$ is typically lower than ten times of the size of the temporal graph. For instance, on \wikitalk, $\mdcplus$ consumes 3,163.2MB memory while the graph needs 324.5MB. These results indicate that \mdc and $\mdcplus$ achieve near linear space complexity, which confirms our theoretical analysis in Sections III.

\section{Related Work}
\stitle{Dense subgraph mining in temporal graphs.}
Our work is related to the problem of mining dense subgraphs in temporal graph. Ma et al.\ \cite{17icdedensegraphtemporal} and Bogdanov et al.\ \cite{icdm11heavy} investigated the dense subgraph problem in weighted temporal graphs. Rozenshtein et al.\ \cite{18icdmSegmentation} studied the problem of mining dense subgraphs at different time  intervals, and they also considered a problem of finding the densest subgraph in a temporal network \cite{17tkddDynamicDense}.
Liu et al.\ \cite{icde19stochastic} proposed a novel stochastic approach to find the densest lasting subgraph.
Many other works \cite{www05dense,stoc15dense} aim at maintaining the average-degree densest-subgraph in a graph streaming scenario. Unlike all these studies, we focus mainly on detecting bursting communities in temporal graphs. Note that the above mentioned dense subgraphs are not bursting communities because not all nodes in a dense subgraph are bursting in a period of time.

\stitle{Temporal graph analysis.} The problem of temporal graph analysis has attracted much attention in recent years.
Yang et al.\ \cite{14wwwjevolvcomm} proposed an algorithm to detect frequent changing components in temporal graph.
Huang et al.\ \cite{15sigmodmsttemporalgraph} investigated the minimum spanning tree problem in temporal graphs. Gurukar et al.\ \cite{15sigmodmotiftemporal} presented a model to identify the recurring subgraphs that have similar sequence of information flow. Wu et al.\ \cite{16icdetemporalreach} proposed an efficient algorithm to answer the reachability query on temporal graphs.
Yang et al.\ \cite{16kddtemporalquasiclique} studied a problem of finding a set of diversified quasi-cliques from a temporal graph.
Wu et al.\ \cite{15bigdatatemporalcore} and Galimberti et al.\ \cite{cikm18spancores} studied the core decomposition problem in temporal networks. Li et al.\ \cite{18persistent} developed an algorithm to detect persistent communities in a temporal graph. More recently, Qin et al.\ \cite{19ICDEperiodicclique} proposed a periodic clique model to mine periodic communities in a temporal graph. To the best of our knowledge, we are the first to study the problem of mining bursting communities in temporal graph.

\stitle{Community mining on traditional and dynamic graphs.}
Community mining is a problem of identifying cohesive subgraphs from a graph. Notable cohesive subgraph models include maximal clique \cite{Cheng2011Maximal}, quasi clique\cite{tsourakakis2013denser}, $k$-core\cite{12tkdecoremaintain,sigmod18Distancecore} and $k$-truss \cite{11icdekcore, 14sigmodtrusscommunity}.
There are a number of studies for mining communities on dynamic networks \cite{Community2018surveydynamic}.
Lin et al.\ \cite{lin2008facetnet} proposed a probabilistic generative model for analyzing communities and their evolutions.
Chen et al.\ \cite{ICDMW2010Tracking} tracked community dynamics by introducing graph representatives.
Agarwal et al.\ \cite{pvldb2012DenseinDynamicGraphs} studied how to find dense clusters efficiently for dynamic graphs in spite of rapid changes to the microblog streams. Li et al.\ \cite{12tkdecoremaintain} devised an algorithm which can maintain the $k$-core in large dynamic graphs. Most community detection studies on dynamic graphs aims to maintain communities that evolve over time. Unlike these studies, we aim to detect bursting communities in temporal graphs.

\section{Conclusion}
In this work, we study a problem of mining bursting communities in a temporal graph. We propose a novel model, called \mdcore, to characterize the bursting communities in a temporal graph. To find all \mdcores, we first develop an dynamic programming algorithm which can compute the segment density efficiently. Then, we propose an improved algorithm with several novel pruning techniques to improve the efficiency. Subsequently, we develop an algorithm which can compute the pareto-optimal bursting communities w.r.t.\ the parameters $l$ and $\delta$. Finally, we conduct comprehensive experiments using 9 real-life temporal networks, and the results demonstrate the efficiency, scalability and effectiveness of our algorithms.
\bibliographystyle{IEEEtran}
\bibliography{mybib}

\begin{thebibliography}{10}
\providecommand{\url}[1]{#1}
\csname url@samestyle\endcsname
\providecommand{\newblock}{\relax}
\providecommand{\bibinfo}[2]{#2}
\providecommand{\BIBentrySTDinterwordspacing}{\spaceskip=0pt\relax}
\providecommand{\BIBentryALTinterwordstretchfactor}{4}
\providecommand{\BIBentryALTinterwordspacing}{\spaceskip=\fontdimen2\font plus
\BIBentryALTinterwordstretchfactor\fontdimen3\font minus
  \fontdimen4\font\relax}
\providecommand{\BIBforeignlanguage}[2]{{%
\expandafter\ifx\csname l@#1\endcsname\relax
\typeout{** WARNING: IEEEtran.bst: No hyphenation pattern has been}%
\typeout{** loaded for the language `#1'. Using the pattern for}%
\typeout{** the default language instead.}%
\else
\language=\csname l@#1\endcsname
\fi
#2}}
\providecommand{\BIBdecl}{\relax}
\BIBdecl

\bibitem{nature05}
A.-L. Barabási, ``The origin of bursts and heavy tails in human dynamics,''
  \emph{Nature}, vol. 435, no. 7039, p. 207, 2005.

\bibitem{12temporalnetworksurvey}
P.~Holme and J.~Saramaki, ``Temporal networks,'' \emph{Physics Reports}, vol.
  519, pp. 97--125, 2012.

\bibitem{journal2016MyShake}
Q.~Kong, R.~M. Allen, L.~Schreier, and Y.~W. Kwon, ``Myshake: A smartphone
  seismic network for earthquake early warning and beyond,'' \emph{Science
  Advances}, vol.~2, no.~2, pp. e1\,501\,055--e1\,501\,055, 2016.

\bibitem{15bigdatatemporalcore}
H.~Wu, J.~Cheng, Y.~Lu, Y.~Ke, Y.~Huang, D.~Yan, and H.~Wu, ``Core
  decomposition in large temporal graphs,'' in \emph{{IEEE} International
  Conference on Big Data}, 2015.

\bibitem{17icdedensegraphtemporal}
S.~Ma, R.~Hu, L.~Wang, X.~Lin, and J.~Huai, ``Fast computation of dense
  temporal subgraphs,'' in \emph{ICDE}, 2017.

\bibitem{18icdmSegmentation}
P.~Rozenshtein, F.~Bonchi, A.~Gionis, M.~Sozio, and N.~Tatti, ``Finding events
  in temporal networks: Segmentation meets densest-subgraph discovery,'' in
  \emph{ICDM}, 2018, pp. 397--406.

\bibitem{18persistent}
R.-H. Li, J.~Su, L.~Qin, J.~X. Yu, and Q.~Dai, ``Persistent community search in
  temporal networks,'' in \emph{ICDE}, 2018.

\bibitem{19ICDEperiodicclique}
H.~Qin, R.~Li, G.~Wang, L.~Qin, Y.~Cheng, and Y.~Yuan, ``Mining periodic
  cliques in temporal networks,'' in \emph{ICDE}, 2019.

\bibitem{17tkddDynamicDense}
P.~Rozenshtein, N.~Tatti, and A.~Gionis, ``Finding dynamic dense subgraphs,''
  \emph{{TKDD}}, vol.~11, no.~3, pp. 27:1--27:30, 2017.

\bibitem{12icdmgroundtruthcommunity}
J.~Yang and J.~Leskovec, ``Defining and evaluating network communities based on
  ground-truth,'' in \emph{ICDM}, 2012.

\bibitem{icdm11heavy}
P.~Bogdanov, M.~Mongiovi, and A.~K. Singh, ``Mining heavy subgraphs in
  time-evolving networks,'' in \emph{ICDM}, 2011.

\bibitem{icde19stochastic}
X.~Liu, T.~Ge, and Y.~Wu, ``Finding densest lasting subgraphs in dynamic
  graphs: {A} stochastic approach,'' in \emph{ICDE}, 2019.

\bibitem{www05dense}
A.~Epasto, S.~Lattanzi, and M.~Sozio, ``Efficient densest subgraph computation
  in evolving graphs,'' in \emph{WWW}, 2015.

\bibitem{stoc15dense}
S.~Bhattacharya, M.~Henzinger, D.~Nanongkai, and C.~E. Tsourakakis, ``Space-
  and time-efficient algorithm for maintaining dense subgraphs on one-pass
  dynamic streams,'' in \emph{STOC}, 2015.

\bibitem{14wwwjevolvcomm}
Y.~Yang, J.~X. Yu, H.~Gao, J.~Pei, and J.~Li, ``Mining most frequently changing
  component in evolving graphs,'' \emph{World Wide Web}, vol.~17, no.~3, pp.
  351--376, 2014.

\bibitem{15sigmodmsttemporalgraph}
S.~Huang, A.~W. Fu, and R.~Liu, ``Minimum spanning trees in temporal graphs,''
  in \emph{SIGMOD}, 2015.

\bibitem{15sigmodmotiftemporal}
S.~Gurukar, S.~Ranu, and B.~Ravindran, ``{COMMIT:} {A} scalable approach to
  mining communication motifs from dynamic networks,'' in \emph{SIGMOD}, 2015.

\bibitem{16icdetemporalreach}
H.~Wu, Y.~Huang, J.~Cheng, J.~Li, and Y.~Ke, ``Reachability and time-based path
  queries in temporal graphs,'' in \emph{ICDE}, 2016.

\bibitem{16kddtemporalquasiclique}
Y.~Yang, D.~Yan, H.~Wu, J.~Cheng, S.~Zhou, and J.~C.~S. Lui, ``Diversified
  temporal subgraph pattern mining,'' in \emph{KDD}, 2016.

\bibitem{cikm18spancores}
E.~Galimberti, A.~Barrat, F.~Bonchi, C.~Cattuto, and F.~Gullo, ``Mining
  (maximal) span-cores from temporal networks,'' in \emph{CIKM}, 2018.

\bibitem{Cheng2011Maximal}
J.~Cheng, Y.~Ke, A.~W.-C. Fu, J.~X. Yu, and L.~Zhu, ``Finding maximal cliques
  in massive networks,'' \emph{ACM Trans. Database Syst.}, vol.~36, no.~4, pp.
  21:1--21:34, 2011.

\bibitem{tsourakakis2013denser}
C.~Tsourakakis, F.~Bonchi, A.~Gionis, F.~Gullo, and M.~Tsiarli, ``Denser than
  the densest subgraph: extracting optimal quasi-cliques with quality
  guarantees,'' in \emph{KDD}, 2013.

\bibitem{12tkdecoremaintain}
R.~H. Li, J.~X. Yu, and R.~Mao, ``Efficient core maintenance in large dynamic
  graphs,'' \emph{IEEE Transactions on Knowledge and Data Engineering},
  vol.~26, no.~10, pp. 2453--2465, 2014.

\bibitem{sigmod18Distancecore}
F.~Bonchi, A.~Khan, and L.~Severini, ``Distance-generalized core
  decomposition,'' in \emph{SIGMOD}, 2019.

\bibitem{11icdekcore}
J.~Cheng, Y.~Ke, S.~Chu, and M.~T. {\"O}zsu, ``Efficient core decomposition in
  massive networks,'' in \emph{ICDE}, 2011.

\bibitem{14sigmodtrusscommunity}
X.~Huang, H.~Cheng, L.~Qin, W.~Tian, and J.~X. Yu, ``Querying k-truss community
  in large and dynamic graphs,'' \emph{SIGMOD}, 2014.

\bibitem{Community2018surveydynamic}
G.~Rossetti and R.~Cazabet, ``Community discovery in dynamic networks: A
  survey,'' \emph{ACM Comput. Surv.}, vol.~51, no.~2, pp. 35:1--35:37, 2018.

\bibitem{lin2008facetnet}
Y.-R. Lin, Y.~Chi, S.~Zhu, H.~Sundaram, and B.~L. Tseng, ``Facetnet: A
  framework for analyzing communities and their evolutions in dynamic
  networks,'' in \emph{WWW}, 2008.

\bibitem{ICDMW2010Tracking}
Z.~Chen, K.~A. Wilson, Y.~Jin, W.~Hendrix, and N.~F. Samatova, ``Detecting and
  tracking community dynamics in evolutionary networks,'' in \emph{ICDMW},
  2010.

\bibitem{pvldb2012DenseinDynamicGraphs}
M.~K. Agarwal, K.~Ramamritham, and M.~Bhide, ``Real time discovery of dense
  clusters in highly dynamic graphs: Identifying real world events in highly
  dynamic environments,'' \emph{PVLDB}, vol.~5, no.~10, 2012.

\end{thebibliography}

\end{document}